\pgfplotsset{compat=1.9}
\newtheorem{theorem}{Theorem}
\newtheorem{statement}[theorem]{\textbf{Statement}}
\newtheorem{lemma}{Lemma}[section]
\newtheorem{proposition}{Proposition}
\theoremstyle{definition}
\newtheorem{remark}{Remark}
\newtheorem{assumption}{Assumption}
\title{\bf{Rectangular Rotational Invariant Estimator for High-Rank Matrix Estimation}}
\author{Farzad Pourkamali,
Nicolas Macris}
\affil{\emph{School of Computer and Communication Science, Ecole Polytechnique F{\'e}d{\'e}rale de Lausanne}}
\date{}
\begin{document}








\maketitle

\begin{abstract}
    We consider estimating a matrix from noisy observations coming from an arbitrary additive bi-rotational invariant perturbation. We propose an estimator which is optimal among the class of \textit{rectangular rotational invariant estimators} and can be applied irrespective of the prior on the signal. For the particular case of Gaussian noise, we prove the optimality of the proposed estimator, and we find an explicit expression for the MMSE in terms of the limiting singular value distribution of the observation matrix. Moreover, we prove a formula linking the asymptotic mutual information and the limit of a log-spherical integral of rectangular matrices. We also provide numerical checks for our results for general bi-rotational invariant noise, as well as Gaussian noise, which match our theoretical predictions.
\end{abstract}

\section{Introduction}

\textit{Matrix denoising} is the problem of removing or reducing  noise from a given data matrix while preserving important features or structure of the signal. This problem is a fundamental constituent of modern data analysis which aims to extract insightful information from noisy high-dimensional datasets, which are often presented as large matrices. Given its fundamental role, this problem  has attracted a lot of attention from both theoretical and algorithmic point of views, and its analysis involves modern mathematical tools from high-dimensional probability theory, statistics, and random matrix theory. In its simplest form, the problem can be formulated as follows. Let the data matrix be 
\begin{equation*}
    \bY = \bS + \bZ
\end{equation*}
where $\bS$ is the hidden signal of interest, and $\bZ$ a noise matrix. The general problem addressed in this paper is to establish the fundamental limits of Bayesian optimal and algorithmic estimations, minimizing the squared error, for $\bS$ given $\bY$, assuming knowledge of the priors on the signal and noise. 

Suppose that the hidden signal matrix has rank $P$ and has eigenvalue (if $\bS=\bS^\intercal$) or singular value (if $\bS\neq\bS^\intercal$) decomposition
 \begin{equation*}
     \bS = \sum_{j = 1}^P \lambda_j \bu_j \bu_j^\intercal, \quad {\rm or} \quad \bS = \sum_{j = 1}^P \lambda_j \bu_j \bv_j^\intercal \quad \bu_j \in \bR^{N}, \bv_j \in \bR^M
 \end{equation*}
Most existing rigorous results concern the regime of fixed rank $P$ as $N, M \to \infty$. In regimes, where $P$ grows with the dimensions, which are the ones studied in this paper, the problem is much harder and there are much fewer attempts to study the denoising problem. We briefly summarize the state of the art in the literature, and then summarize our main contributions.  

\paragraph{Finite rank.}
For finite-rank, with $P$ fixed as $N, M \to \infty$, the denoising problem and its more involved counterpart, the factorization problem, are well studied. The behavior of eigenvalues/singular values and eigenvector/singular vectors of finite-rank perturbations of a Gaussian matrix is studied in \cite{baik2005phase, benaych2011eigenvalues, benaych2012singular} which leads to spectral estimators for  low-rank signals $\bS$ when the noise matrix $\bZ$ is Gaussian distributed. For the case of entry-wise factorized prior on the vectors $\bu_j, \bv_j$, and Gaussian noise, closed form expressions have been established for the asymptotic Bayes-optimal estimation error \cite{lesieur2017constrained, dia2016mutual, lelarge2019fundamental, miolane2017fundamental, barbier2019adaptive}, and iterative algorithms based on approximate message passing has been proposed \cite{montanari2021estimation, fletcher2018iterative}. The low-rank matrix denoising problem has been addressed in various other settings, such as structured noise matrices \cite{barbier2023fundamental, fan2022approximate}, mismatched estimation problems \cite{pourkamali2021mismatched, barbier2022price, pourkamali2022mismatched, guionnet2023estimating}, and estimation in the regime with diverging aspect-ratio of  matrices ($\nicefrac{N}{M} \to \infty$ or $\nicefrac{N}{M} \to 0$) \cite{montanari2022fundamental}.

\paragraph{Sub-linear rank regime.} For {\it symmetric} signals with factorized prior $\bS = \sum_{j = 1}^P \lambda_j \bu_j \bu_j^\intercal$ with $P = N^{\beta}$ for any $\beta \in (0,1)$, it is shown in \cite{pourkamali2023matrix} that under Gaussian noise the rank-one formula for the mutual information and the Bayes-optimal error is still valid, and a decimation algorithm based on \cite{camilli2023matrix,camilli2023decimation} is proposed to estimate the factors. Moreover, under rotational invariance of the signal prior, the same phenomenon is rigorously established in \cite{husson2022spherical}, namely the mutual information and the Bayes-optimal error can be deduced from the rank-one formula. 

\paragraph{Linear rank regime.} When the noise is Gaussian, the problem has been studied in \cite{kabashima2016phase, barbier2022statistical, maillard2022perturbative, troiani2022optimal, pourkamali2023matrix}, and various algorithms are proposed in \cite{troiani2022optimal, camilli2023matrix, bodin2023gradient}. A more general class of noise priors is considered in \cite{bun2016rotational, Pourkamali2023RectangularRI, landau2023singular}, where the noise is assumed to be rotational invariant. For this kind of noise, a class of estimators called \textit{Rotational Invariant Estimator} (RIE) are proposed. These estimators are constructed from the observation matrix by modifying the singular values/eigenvalues without changing the singular vectors/eigenvectors. For the denoising problem, these estimators have been studied in \cite{bun2016rotational, pourkamali2023matrix, semerjian2024matrix} for symmetric matrices, and generalized to rectangular matrices in \cite{Pourkamali2023RectangularRI, landau2023singular}. Other applications of RIEs include matrix factorization \cite{pourkamali2023bayesian} and covariance estimation \cite{stein1975estimation,takemura1984orthogonally, ledoit2011eigenvectors, bun2017cleaning, benaych2023optimal}.\\

\paragraph{Main contributions.} 
We consider denoising a {\it non-symmetric rectangular} matrix under additive bi-rotational invariant noise. Our contributions are summarized below:
\begin{itemize}
    \item We extend the rotational invariant estimators to rectangular matrices. We conjecture that the proposed estimator is optimal among the RIE class under general bi-rotational invariant noise. 

    \item For the particular case of Gaussian noise:
        \begin{itemize}
            \item We prove a trace relation which gives a solid justification for the optimality of the proposed RIE.
            \item Using the optimality of the RIE, we derive the asymptotic Bayes-optimal error in terms of the limiting singular value distribution of the observation.
            \item We prove by independent methods that the mutual information between $\bS$ and $\bY$ is linked to the asymptotic log-spherical integral, an object which has been studied in the theoretical physics and mathematics literature \cite{guionnet2021large}.
    \end{itemize}

    \item We provide numerical simulations under various settings, which
        \begin{itemize}
            \item support the optimality of the proposed (general) RIE, as suggested by the derivation based on (non-rigorous) methods from statistical physics.
            \item suggest that RIE is not limited to the rotational invariant signals, and can be applied regardless of of the prior to get non-trivial (although non-optimal) estimates. 
        \end{itemize}
\end{itemize}

The paper is organized as follows. In section \ref{rect-RIE-section}, we introduce the model in more details and present an explicit RIE with its associated algorithm for the general class of bi-rotational invariant noise. In section \ref{Gaussian Noise}, we focus on Gaussian noise and prove the optimality of the RIE, and study the Bayes-optimal error and mutual information. Section \ref{numerics} is devoted to numerical simulations, followed by proof/derivation details in section \ref{Proof-details}.

A part of this work was presented in the conference ISIT 2023 \cite{Pourkamali2023RectangularRI}.

\textit{Notation}: For a sequence of matrices $\bA$ of growing size, we denote the limiting empirical singular value distribution (ESD) by $\mu_A$, and the limiting eigenvalue distribution $\bA \bA^\intercal$ by $\rho_A$. The free rectangular convolution \cite{benaych2009rectangular} with ratio $\alpha \in [0,1]$  of two probability distributions is denoted 
 by $\boxplus_\alpha$.

%

\section{Denoising Model and Rotational Invariant Estimators} \label{rect-RIE-section}
Let $\bS \in \bR^{N \times M}$ be the signal matrix that we aim to estimate from the observation matrix $\bY$:
\begin{equation}\label{model}
    \bY = \sqrt{\lambda} \bS + \bZ
\end{equation}
where $\bZ \in \bR^{N \times M}$ is a bi-rotationally invariant matrix, i.e. $P_{Z}(\bZ) = P_{Z}( \bU \bZ \bV^\intercal)$ for any orthogonal matrices $\bU, \bV$, and $\lambda \in \bR_+$ is proportional to the signal-to-noise-ratio (SNR).  We assume that $M$ scales like $N$, and $N/M \to \alpha$. Moreover, we assume that the empirical singular value distributions (ESD) of $\bZ$ and $\bY$  have well-defined limiting measures as $N \to \infty$. We denote them $\mu_Z, \mu_Y$ respectively and refer to them as {\it limiting ESD}. Studying the problem for the case $\alpha \in (0,1]$ suffices. Indeed, suppose the observation matrix $\bY \in \bR^{N \times M}$ has dimensions $N>M$ (so $\alpha >1$),  then exchanging the role of $M, N$, we can apply our results to the matrix $\bY^\intercal$ with aspect ratio $1/\alpha \in (0,1)$. 

\subsection{Rectangular RIE Class}
Given the observation $\bY$, the class of \textit{Rotational Invariant Estimators (RIE)} $\bm{\Xi}_S(\bY)$ of $\bS$ have the same singular vectors than $\bY$. More precisely, consider the SVD of $\bY$ to be:
\begin{equation*}
    \bY = \bU_Y \bGam \bV_Y^\intercal, \quad \quad  \bGam =   \left[
\begin{array}{c|c}
\rm{diag}(\gamma_1, \cdots, \gamma_N )  & \mathbf{0}_{N \times (M-N)}
\end{array}
\right] \in \bR^{N \times M}
\end{equation*}
with $\gamma_1, \cdots, \gamma_N \geq 0$ singular values of $\bY$, and orthogonal matrices $\bU_Y \in \bR^{N \times N}, \bV_Y \in \bR^{M \times M}$. RIEs $\bm{\Xi}_S(\bY)$ are constructed by definition as :
\begin{equation}\label{RIE-def}
    \bm{\Xi}_S(\bY) = \sum_{j = 1}^N \xi_j \bu_j \bv_j^{\intercal} 
\end{equation}
where $\bu_j, \bv_j$ are columns of $\bU_Y, \bV_Y$. The goal is to have the minimum squared error, therefore the optimal singular values are the solution to the following optimization problem:
\begin{equation}\label{opt-squared-error}
    \min_{\xi_1, \cdots, \xi_N} \| \bS - \bm{\Xi}_S(\bY) \|_{\rm F}^2
\end{equation}
One can easily see that the solutions to optimization problem \eqref{opt-squared-error} are:
\begin{equation}\label{Oracle-sv}
    \xi^*_j = \bu_j^\intercal \bS \bv_j \quad \quad {\rm for} \quad 1 \leq j \leq N
\end{equation}
The particular estimator constructed with the singular values \eqref{Oracle-sv} is denoted by $\bm{\Xi}^*_S(\bY)$, and is called \textit{oracle estimator} as it involves the signal matrix $\bS$.

\subsection{Algorithmic RIE}
Our main contribution is the derivation of an explicit formula for the optimal singular values \eqref{Oracle-sv} which only involves the observation matrix and the knowledge of spectral measure of the noise. This formula leads to an algorithm for the estimation, which we conjecture, has in the asymptotic limit a performance matching the one of the oracle estimator (in the sense of the mean-square-error \eqref{opt-squared-error}.

The optimal singular values can be approximated for sufficiently large $N$, as:
\begin{equation}
\begin{split}
       \widehat{\xi_j^*} &=\frac{1}{\sqrt{\lambda}} \Bigg[ \gamma_j - \frac{1}{\pi \bar{\mu}_{Y}(\gamma_j)} {\rm Im} \, \mathcal{C}^{(\alpha)}_{\mu_Z}\bigg( \frac{1- \alpha}{\gamma_j} \pi \sH [\bar{\mu}_{Y}](\gamma_j)+ \alpha \big( \pi \sH [\bar{\mu}_{Y}](\gamma_j)\big)^2  - \alpha \big( \pi \bar{\mu}_{Y}(\gamma_j)\big)^2 \\ 
       &\hspace{150pt}+\ci \pi \bar{\mu}_{Y}(\gamma_j) \big(\frac{1-\alpha}{\gamma_j} + 2 \alpha \pi \sH [\bar{\mu}_{Y}](\gamma_j) \big) \bigg) \Bigg]
\end{split}
\label{rect-RIE}
\end{equation}
where $ \bar{\mu}_{Y}(\gamma) = \frac{1}{2}(\mu_Y(\gamma) + \mu_Y(-\gamma))$ is the symmetrization of the limiting ESD of $\bY$, $\mathcal{C}^{(\alpha)}_{\mu_Z}$ is the rectangular R-transform of $\mu_Z$, and $\sH [\bar{\mu}_{Y}]$ is the Hilbert transform of $\bar{\mu}_{Y}$. The definitions of these objects are reviewed in appendix \ref{RMT-transforms}, and the derivation of the estimator \eqref{rect-RIE} is sketched in section \ref{RIE-derivation-sec}.

The algorithm to estimate $\bS$ proceeds as follows:
\begin{enumerate}
    \item Compute the SVD of $\bY$, $\bY = \bU_Y \bGam \bV_Y^\intercal$.

    \item Approximate $\mathcal{G}_{\bar{\mu}_Y}(z)$ from the singular values of $\bY$, from which $\bar{\mu}_{Y}(\gamma)$ and $\sH [\bar{\mu}_{Y}](\gamma)$ can be evaluated using \eqref{Plemelj formulae}.

    \item Compute $\widehat{\xi_j^*}$ as in \eqref{rect-RIE}, and construct the estimator $ \widehat{\bm{\Xi}^*_S}(\bY) = \sum_{j = 1}^N \widehat{\xi_j^*} \bu_j \bv_j^{\intercal}$.
\end{enumerate}


\subsection{Bayes Optimality and MMSE}
From the Bayesian estimation point of view, considering a prior distribution for the signal $P_S(\bS)$, one wishes to minimize the average 
mean-squared-error (MSE), which is defined for an estimator $\Theta_S : \bR^{N \times M} \to \bR^{N \times M}$ as
\begin{equation*}
    {\rm MSE}_{\Theta_S}  = \frac{1}{N} \bE \big\| \bS - \Theta_S(\bY) \|_{\rm F}^2 \quad,
\end{equation*}
where the expectation is over $\bS, \bZ$. It is well known that the estimator which has the minimum MSE is the posterior mean estimator $\Theta_S^*(\bY) = \bE[ \bS | \bY ]$. 

Note that for model \eqref{model} the oracle estimator \eqref{Oracle-sv} is the best estimator  among the \textit{RIE class} (in the sense that it minimizes the MSE in this class). Furthermore the derivation of the explicit estimator \eqref{rect-RIE} does not involve Bayesian methodology and does not require any knowledge of the prior of the signal. 

However, if the prior on the signal is bi-rotationally invariant, i.e. $P_{S}(\bS) = P_{S}( \bU \bS \bV^\intercal)$ for any orthogonal matrices $\bU, \bV$, these estimators are intimately related to the Bayesian one. As shown in section \ref{optimality-RIE} for bi-rotationally invariant signal distributions the posterior mean estimator $ \bE[ \bS | \bY ]$ belongs to the RIE class. Since the oracle estimator has minimum MSE among the RIE class, we have that ${\rm MSE}_{\bm{\Xi}^*_S} \leq {\rm MSE}_{\Theta_S^*}$. On the other hand, by definition, we have ${\rm MSE}_{\bm{\Xi}^*_S} \leq {\rm MSE}_{\Theta_S^*}$. Therefore, the oracle estimator is Bayes-optimal under bi-rotational invariant prior and achieves the MMSE. 

Moreover, the "exact" analytical derivation of the explicit estimator \eqref{rect-RIE} suggests that it has the same performance as the oracle estimator as $N \to \infty$. Therefore, the above algorithm should be asymptotically Bayes-optimal with an asymptotic MSE equal to the MMSE. Denoting the rhs in \eqref{rect-RIE} as a {\it function} of singular values of $\bY$, $\widehat{\xi^*} : {\rm supp}(\mu_Y)\to \mathbb{R}$, we are led to the following result.

\begin{statement}[\textbf{MMSE}]\label{General-MMSE-statement}
    Suppose that $\bS, \bZ$ have bi-rotational invariant priors, and Assume their ESDs converge to well-defined measures $\mu_S, \mu_Z$ with bounded second moments. We have:
    \begin{equation}
        \lim_{N \to \infty} {\rm MMSE}_N(\lambda) = \int x^2 \mu_S(x) \, dx - \int {\widehat{\xi^*}(x)}^2 \mu_Y(x) \, dx
        \label{MMSE-eq}
    \end{equation}
    where $\mu_Y$ is the limiting ESD of $\bY$, $\mu_Y = \mu_S \boxplus_{\alpha} \mu_Z$.
\end{statement}

\begin{remark}
Note that, for non-rotation invariant priors the estimator \eqref{rect-RIE} still can be applied, however it may results in a sub-optimal estimate of the signal. However, this estimate can be used as a "warmed-up" spectral initialization for more efficient algorithms (see for example \cite{mondelli2021approximate, montanari2021estimation}).
\end{remark}

\section{Gaussian noise}\label{Gaussian Noise}

\subsection{RIE}
In this section, we consider the case of Gaussian noise matrix, more precisely we suppose that $\bZ \in \bR^{N \times M}$ has i.i.d. Gaussian entries of variance $\nicefrac{1}{N}$, and $\lambda \in \bR_+$ is proportional to the signal-to-noise-ratio (SNR). We make the following assumptions:

\begin{assumption}
    The operator norm of $\bS$, and the ratio $\nicefrac{M}{N}$ are bounded by some numerical constant $K>0$ independent of $N,M$. 
\end{assumption}


Recall that the resolvent of the matrix $\bY \bY^\intercal$, evaluated at $z^2$ is defined as:
\begin{equation*}
    \bG_{YY^\intercal}(z^2) = \big( z^2 \bI - \bY \bY^\intercal \big)^{-1}
\end{equation*}
Now, define two random functions of $z \in \mathbb{C}\backslash \bR$ as:
\begin{equation}\label{traces}
    G(z) = \frac{1}{N} \Tr  \bG_{YY^\intercal} (z^2), \quad L(z) = \frac{1}{N} \Tr \bG_{YY^\intercal} (z^2) \bY \bS^\intercal
\end{equation}

\begin{proposition}\label{optimal-singular-values&traces}
    For $1 \leq j \leq N$, and for any $\epsilon > 0$ such that $[\gamma_j -\epsilon, \gamma_j +\epsilon] \cap \{ \gamma_1, \cdots, \gamma_N \} = \{ \gamma_j\} $, the optimal singular values \eqref{Oracle-sv} satisfy
    \begin{equation}\label{exact-RND}
        \xi_j^* =  \lim_{\eta \to 0} \frac{ \int_{\gamma_j - \epsilon}^{\gamma_j + \epsilon} \im  L(x + \ci \eta ) \, dx}{\int_{\gamma_j - \epsilon}^{\gamma_j + \epsilon} \im \big\{ (x + \ci \eta ) G(x + \ci \eta ) \big\} \, dx}
    \end{equation}
\end{proposition}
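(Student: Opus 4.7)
The approach is purely spectral and uses the singular value decomposition of $\bY$ together with the Sokhotski--Plemelj identity. I would not need Gaussianity of the noise for this particular proposition (it is an algebraic identity for the oracle singular values), only the SVD structure and the isolation of $\gamma_j$ on the chosen interval.

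First I would write the resolvent in spectral form. Using $\bY = \bU_Y \bGam \bV_Y^\intercal$, one has $\bY\bY^\intercal = \sum_{k=1}^N \gamma_k^2 \bu_k \bu_k^\intercal$, hence
\begin{equation*}
\bG_{YY^\intercal}(z^2) = \sum_{k=1}^N \frac{1}{z^2 - \gamma_k^2} \, \bu_k \bu_k^\intercal,
\qquad
\bG_{YY^\intercal}(z^2)\, \bY = \sum_{k=1}^N \frac{\gamma_k}{z^2 - \gamma_k^2} \, \bu_k \bv_k^\intercal.
\end{equation*}
Substituting into \eqref{traces} and using $\xi_k^* = \bu_k^\intercal \bS \bv_k$ from \eqref{Oracle-sv} gives
\begin{equation*}
z\, G(z) = \frac{1}{N}\sum_{k=1}^N \frac{z}{z^2 - \gamma_k^2},
\qquad
L(z) = \frac{1}{N}\sum_{k=1}^N \frac{\gamma_k \, \xi_k^*}{z^2 - \gamma_k^2}.
\end{equation*}

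Next I would split each summand via partial fractions, $\frac{z}{z^2-\gamma^2} = \tfrac{1}{2}\bigl(\tfrac{1}{z-\gamma}+\tfrac{1}{z+\gamma}\bigr)$ and $\frac{\gamma}{z^2-\gamma^2} = \tfrac{1}{2}\bigl(\tfrac{1}{z-\gamma}-\tfrac{1}{z+\gamma}\bigr)$, and then apply the Sokhotski--Plemelj identity
\begin{equation*}
\lim_{\eta \to 0^+}\int_a^b \im \frac{1}{x-\gamma+\ci\eta}\, dx = -\pi \, \mathbb{1}\{\gamma \in (a,b)\}.
\end{equation*}
By the isolation hypothesis $[\gamma_j-\epsilon,\gamma_j+\epsilon]\cap\{\gamma_1,\dots,\gamma_N\}=\{\gamma_j\}$, only the $k=j$ term survives in the $(z-\gamma_k)^{-1}$ pieces, and (assuming $\gamma_j>0$, which I would treat as the generic case) the $(z+\gamma_k)^{-1}$ pieces contribute nothing to either integral because $-\gamma_k<0$ lies outside the integration window. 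Therefore
\begin{equation*}
\lim_{\eta\to 0^+}\int_{\gamma_j-\epsilon}^{\gamma_j+\epsilon} \im\{(x+\ci\eta) G(x+\ci\eta)\}\, dx = -\frac{\pi}{2N},
\qquad
\lim_{\eta\to 0^+}\int_{\gamma_j-\epsilon}^{\gamma_j+\epsilon} \im L(x+\ci\eta)\, dx = -\frac{\pi\, \xi_j^*}{2N},
\end{equation*}
and taking the ratio yields \eqref{exact-RND}.

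The main obstacle is really only a bookkeeping one: handling the degenerate possibility $\gamma_j=0$, where $\gamma_j$ and $-\gamma_j$ coincide and both partial-fraction contributions appear simultaneously. In that case the denominator and numerator pick up an extra factor of $2$ each, which cancels in the ratio, so the statement still holds; I would note this briefly. Beyond that, everything is a direct spectral calculation, and the hypothesis that $\gamma_j$ is separated from the remaining singular values on the window is exactly what allows the Plemelj extraction to isolate a single term of the sum.
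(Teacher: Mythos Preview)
Your proof is correct and follows essentially the same route as the paper: the paper packages the partial-fraction identities by introducing the signed measures $\nu = \frac{1}{2N}\sum_k \xi_k^*(\delta_{\gamma_k}-\delta_{-\gamma_k})$ and $\tau = \frac{1}{2N}\sum_k(\delta_{\gamma_k}+\delta_{-\gamma_k})$, identifies $L(z)=\mathcal{G}_\nu(z)$ and $zG(z)=\mathcal{G}_\tau(z)$, and then invokes Stieltjes inversion, which is exactly your Sokhotski--Plemelj extraction written in measure-theoretic language. Your explicit treatment of the $\gamma_j=0$ degenerate case is a small addition the paper does not spell out.
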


The proof of the above Proposition is presented in section \ref{Proof-Prop2}. Note that \eqref{exact-RND} is an exact formula for the optimal singular values $\xi_j^*$, but in practice given an explicit expression for $L(z)$ we use the following approximation to evaluate $\xi_j^*$:
\begin{equation}\label{optimal-sv-approx}
    \xi_j^* \approx \frac{\im L(z)}{\im \big\{ z G(z) \big\}} \equiv \widehat{\xi^*_j} \quad \quad {\rm for}\hspace{5pt} z = \gamma_j + \ci \eta, \quad {\rm with} \hspace{5pt} \eta \ll 1
\end{equation}
The definition of the function $L(z)$ in the numerator of the estimator \eqref{optimal-sv-approx}, involves the signal matrix. Therefore, to use the estimator, we need to find a way to estimate this function only from the data. In the following theorem, we give an asymptotic approximation of $L(z)$, which we prove in section \ref{Proof-Thm2}.

\begin{theorem}[Estimation of $L(z)$]\label{trace-relation} 
Let $\alpha_0 = \nicefrac{N}{M}$. For any $z \in \mathbb{C} \backslash \bR$ with $\big| \im z \big| < 1$, the function $L(z)$ defined in \eqref{traces} satisfies
\begin{equation}\label{L(z)-approx}
    L(z) = \frac{1}{\sqrt{\lambda}} \Big[ G(z) \big(z^2 +1- \frac{1}{\alpha_0} \big) - z^2 G^2(z)  -1 \Big] + \epsilon_N
 \end{equation}
where the error term $\epsilon_N$ is bounded as:
\begin{equation*}
    \epsilon_N \leq \frac{C_K + X}{ N \big| \im z \big|^3 }
\end{equation*}
with $C_K$ a constant depending on $K$, and $X$ is a complex sub-Gaussian random variables with finite sub-Gaussian norm depending on $K$.
\end{theorem}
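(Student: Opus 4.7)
The plan follows a standard Gaussian random-matrix recipe: reduce $L(z)$ to a single cross-trace involving $\bZ$, then evaluate it by Gaussian integration by parts (Stein's lemma) and control the fluctuations by Gaussian concentration. First I would carry out an algebraic reduction: since $\sqrt{\lambda}\,\bS=\bY-\bZ$, one has $\bY\bS^\intercal=\tfrac{1}{\sqrt{\lambda}}(\bY\bY^\intercal-\bY\bZ^\intercal)$, hence
\begin{equation*}
    L(z) = \frac{1}{\sqrt{\lambda}}\Big[\tfrac{1}{N}\Tr\bG_{YY^\intercal}(z^2)\bY\bY^\intercal - T(z)\Big], \qquad T(z):=\tfrac{1}{N}\Tr\bG_{YY^\intercal}(z^2)\bY\bZ^\intercal.
\end{equation*}
The identity $\bG_{YY^\intercal}(z^2)\bY\bY^\intercal = z^2\bG_{YY^\intercal}(z^2)-\bI$ collapses the first trace into $z^2 G(z)-1$, so the whole problem reduces to expressing $T(z)$ in terms of $G(z)$.

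Second, I would apply Stein's lemma entry-wise. Because $\bZ_{ab}\sim\mathcal{N}(0,1/N)$ i.i.d.,
\begin{equation*}
    \bE\bigl[T(z)\bigr]=\frac{1}{N^2}\sum_{i,k}\bE\!\left[\frac{\partial(\bG_{YY^\intercal}(z^2)\bY)_{ik}}{\partial\bZ_{ik}}\right].
\end{equation*}
The resolvent derivative $\partial_{\bZ_{ab}}\bG_{ij}=\bG_{ia}(\bY^\intercal\bG)_{bj}+(\bG\bY)_{ib}\bG_{aj}$ together with $\partial_{\bZ_{ab}}\bY_{mk}=\delta_{ma}\delta_{kb}$, summed and simplified via $\Tr\bG\bY\bY^\intercal=z^2\Tr\bG-N$, should produce
\begin{equation*}
    \bE[T(z)] = z^2\,\bE[G(z)^2] + \Big(\tfrac{M}{N}-1\Big)\bE[G(z)] + \mathcal{E}_N(z),
\end{equation*}
where $\mathcal{E}_N(z)$ collects the subleading pieces $-\tfrac{1}{N}\bE[G(z)]$ and $\tfrac{z^2}{N^2}\bE[\Tr\bG^2]$, each of order $1/(N|\im z|^3)$ once the operator-norm estimates $\|\bG_{YY^\intercal}(z^2)\|_{\rm op}\le|\im z|^{-2}$ and $\|\bG\bY\|_{\rm op}\le|\im z|^{-1}$ are used.

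Third, I would promote this expectation identity to one for the random variables themselves via the Gaussian Poincar\'e inequality. A direct computation gives $\nabla_{\bZ}G(z)=\tfrac{2}{N}\bG^2\bY$, whose Frobenius norm is bounded by $O(1/(\sqrt{N}|\im z|^3))$; an analogous computation for $\nabla_{\bZ}T(z)$ yields gradients of the same order. Consequently both $G(z)-\bE[G(z)]$ and $T(z)-\bE[T(z)]$ are sub-Gaussian with norm $\lesssim 1/(N|\im z|^3)$, and $\bE[G(z)^2]-\bE[G(z)]^2=\mathrm{Var}(G(z))=O(1/(N^2|\im z|^6))$. Substituting into Step~1 gives \eqref{L(z)-approx}, with the deterministic portion of $\epsilon_N$ absorbed into $C_K$ and the sub-Gaussian fluctuations absorbed into $X$.

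The hard part is extracting exactly the $|\im z|^{-3}$ scaling in the error bound. Plugging the naive bound $\|\bG\|\lesssim|\im z|^{-2}$ everywhere would overshoot and yield $|\im z|^{-4}$. The refinement requires exploiting $\|\bG\bY\|_{\rm op}\lesssim|\im z|^{-1}$ every time a $\bY$ sits adjacent to a $\bG$ inside a subleading trace, and using the identity $\Tr\bG^2=-\partial_{z^2}\Tr\bG$ to replace the naive operator-norm bound on $\tfrac{1}{N}\Tr\bG^2$ by a derivative of the Stieltjes-type transform $G(z)$, which is analytic and bounded in the domain $\{|\im z|<1\}\setminus\bR$ of the statement. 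Balancing these resolvent identities consistently across both the Stein step and the Poincar\'e step is the delicate bookkeeping that the proof hinges on.
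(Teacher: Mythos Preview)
Your approach is correct and essentially the same as the paper's: Gaussian integration by parts for the mean, Gaussian Lipschitz concentration for the fluctuations. Two cosmetic differences are worth noting. First, the paper splits the Stein step into two traces, $\bE\Tr\bG\bS\bZ^\intercal$ and $\bE\Tr\bG\bZ\bZ^\intercal$, arriving at exactly your formula after recombining; your single computation on $T(z)=\tfrac{1}{N}\Tr\bG\bY\bZ^\intercal$ is more compact but equivalent. Second, for the Lipschitz constants the paper passes to the Hermitization $\cY=\begin{pmatrix}0&\bY\\\bY^\intercal&0\end{pmatrix}$ and expresses both $G$ and $L$ through the linear resolvent $(z\bI-\cY)^{-1}$, so each resolvent factor contributes only $|\im z|^{-1}$; this yields the $|\im z|^{-3}$ (for $G$) and $|\im z|^{-2}$ (for $L$) bounds without having to isolate the combination $\bG\bY$ as you do. Both routes give the same scaling.

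One terminological slip: you invoke the Gaussian \emph{Poincar\'e} inequality but then conclude sub-Gaussian tails. Poincar\'e gives only a variance bound; for sub-Gaussian concentration you need the Lipschitz concentration inequality for Gaussian vectors, which is what the paper uses and what your gradient computation actually feeds into.
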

\begin{remark}\label{error-term-imz}
    We believe that adopting the methodology developed in \cite{erdHos2009semicircle, erdHos2017dynamical} the approximation \eqref{L(z)-approx} can be improved, with the error term controlled by $\big( N | \im z | \big)^{-1}$. We numerically verify this conjecture in section \ref{numerical-thm-check}.
\end{remark}

\subsubsection{Algorithm}
Using the explicit expression \eqref{L(z)-approx} for $L(z)$, we are led to the following approximation to evaluate $\xi_j^*$:
\begin{equation}\label{optimal-sv-approx-alg}
    \widehat{\xi^*_j} = \frac{1}{\sqrt{\lambda}} \frac{\im \Big\{ G(z) \big(z^2 +1- \frac{1}{\alpha_0} \big) - z^2 G^2(z) \Big\} }{\im \big\{ z G(z) \big\}}  \quad \quad {\rm for}\hspace{5pt} z = \gamma_j + \ci \frac{1}{\sqrt{N}}
\end{equation}

\begin{remark}[On imaginary part of $z$ in \eqref{optimal-sv-approx}] Approximating the exact formula \eqref{exact-RND} with the expression \eqref{optimal-sv-approx} is more accurate when $z$ is closer to the real line (small $\eta$). On the other hand, for $z$ close to the real line the error of the approximation \eqref{L(z)-approx} gets worse. Considering that the error in \eqref{L(z)-approx} is controlled by $ \big( N | \im z | \big)^{-1}$, we can see that $\eta = \nicefrac{1}{N^{\epsilon}}$ with any $0<\epsilon<1$ should work properly as $N$ increases. We study the effect of choice of $\epsilon$ in the numerical section, see Fig. \ref{fig:imz}.
\end{remark}

\begin{remark}[Relation to the formula \eqref{rect-RIE}]
First note that $z G(z)$ in the denominator of \eqref{optimal-sv-approx-alg} is the Stieltjes transform of empirical symmetric spectral measure of $\bY$, and in the limit $\eta \to 0$, ${\rm Re} \, \big\{ z G(z)\big\} = \pi \sH [\bar{\mu}_{Y}](x)   $. Therefore, from \eqref{L(z)-approx}, for $z = \gamma_j + \ci \eta$ with $\eta \ll 1$, we have 
\begin{equation}\label{rect-RIE-Gauss}
    \begin{split}
        \frac{\im L(z)}{\im \big\{ z G(z) \big\}} &= \frac{1}{\sqrt{\lambda}} \frac{\rm{Re}\, \big\{ G(z) \big\} 2 \gamma_j \eta + \im \big\{ G(z) \big\} \big(\gamma_j^2 - \eta^2+1- \frac{1}{\alpha_0} \big)}{\gamma_j \im \big\{G(z) \big\} + {\rm Re} \, \big\{G(z)\big\} \eta} -    \frac{1}{\sqrt{\lambda}} 2 {\rm Re} \, \big\{ z G(z)\big\} \\
        &\approx \frac{1}{\sqrt{\lambda}} \Big[ \gamma_j + \big( 1 - \frac{1}{\alpha_0} \big) \frac{1}{\gamma_j} - 2 \pi \sH [\bar{\mu}_{Y}](\gamma_j) \Big]
    \end{split}
\end{equation} 
For $\bZ$ with i.i.d. Gaussian entries of variance $\nicefrac{1}{N}$ and with the assumption that $\alpha_0 \to \alpha$ we have $\mathcal{C}^{(\alpha)}_{\mu_Z}(z) = \frac{1}{\alpha}z$, and thus the expression in \eqref{rect-RIE} reduces to \eqref{rect-RIE-Gauss}.
\end{remark}

\subsection{MMSE}
Given the rather simple expression for the optimal singular values, we can compute the asymptotic MMSE for the particular case of Gaussian noise, see section \ref{com-G-MMSE} for the derivation.

\begin{statement}[\textbf{Gaussian MMSE}]\label{Gaus-MMSE-statement}
    Assume that the prior on $\bS$ is bi-rotational invariant, and the ESD of $\bS$ converges to a well-defined limiting measure $\mu_S$ with compact support and bounded second moment. We have:
    \begin{equation}
    \begin{split}
        \lim_{N \to \infty} {\rm MMSE}_N(\lambda) = \frac{1}{\lambda} \Big[ \frac{1}{\alpha} - &\big(\frac{1}{\alpha} -1 \big)^2 \int \frac{\mu_Y(x)}{x^2}\, dx - \frac{\pi^2}{3} \int {\mu_Y(x)}^3 \, dx \Big]
    \end{split}
        \label{G-MMSE-eq}
    \end{equation}
    where $\mu_Y$ is the limiting ESD of $\bY$ and $\mu_Y = \mu_S \boxplus_{\alpha} \mu_{\rm MP}$.
\end{statement}

\begin{remark} 
    In the symmetric case, the asymptotic MMSE of a Gaussian channel is linked to the free Fisher information of non-commutative random variables \cite{pourkamali2023matrix}. Using this link, we can deduce the continuity of the MMSE as a function of SNR, which rules out the existence of the first-order phase transitions. Moreover, using the I-MMSE relation \cite{guo2005mutual}, this link also implies a rather explicit expression for the asymptotic mutual information. We believe that similar relations hold for the rectangular case and the MMSE should be a continuous function of $\lambda$. However in the rectangular case free probability \cite{benaych2009freerectangular} is much less developed than its symmetric counterpart, and these considerations are beyond the scope of the present paper. 
\end{remark}

\subsection{Mutual information}
In this subsection we prove that the asymptotic mutual information is linked to the asymptotic rectangular spherical integral.  The rectangular spherical integral is defined for two matrices $\bA, \bB \in \bR^{N \times M}$ as:
 \begin{equation*}
    \mathcal{I}_{N,M}(\bA, \bB) := \iint D \bU  \,  D \bV \, e^{N \Tr \bA^\intercal \bU  \bB \bV^\intercal }
 \end{equation*}
where $D \bU, D \bV$ denote the Haar measures over the groups of $N \times N$ and $M \times M$ orthogonal matrices. The asymptotic behavior of these integrals has been studied in \cite{guionnet2021large} which proves that the limit $\lim_{N \to \infty} \frac{1}{N^2} \ln \mathcal{I}_{N,M} (\bA, \bB)$ exists and equals a variational formula given in terms of limiting ESD of $\bA, \bB$ (see Appendix \ref{spherical integral app}).

We make the following assumptions on the prior of $\bS$:
\begin{assumption}\label{assumptions on law}
 The empirical singular value distribution of $\bS$ converges almost surely weakly to a well-defined probability density function $\mu_S(x)$ with compact support in $[C_1, C_2]$ with $C_1, C_2 \in \bR_{\geq 0}$. Moreover, the symmetrization of $\mu_S$ has bounded second moment $\int x^2 \, d \bar{\mu}_S < \infty$, finite non-commutative entropy $\iint \ln | x - y |  \, d \bar{\mu}_S(x) \, d \bar{\mu}_S(y) > -\infty$, and $\int \ln |x| \, d \bar{\mu}_S(x) > - \infty$.
\end{assumption}

\begin{assumption}\label{bounded-mom}
The second moment of $\mu_{S}^{(N)}$ is almost surely bounded.
\end{assumption}

Let 
$$
\mathcal{J}[\mu_{\sqrt{\lambda} S}, \mu_{\sqrt{\lambda} S}\boxplus_{\alpha} \mu_{\rm MP}] = \lim_{N\to +\infty} \frac{1}{N M} \ln \mathcal{I}_{N,M} ( \sqrt{\lambda} \bS, \bY)$$ 
where $\mu_{\sqrt{\lambda} S}$ is the limiting spectral distribution of $\sqrt{\lambda} \bS$ and $\mu_{\rm MP}$ is the  Marchenko-Pastur distribution. 
\begin{theorem}[\textbf{Mutual Information}]\label{theoremMain}
Under assumptions \ref{assumptions on law},\ref{bounded-mom} and $\nicefrac{M}{N} \leq K$, we have:
\begin{equation}
\begin{split}
    \lim_{N \to \infty} \, \frac{1}{M N} I_N(\bS; \bY) = \lambda \alpha \int \! x^2 \mu_S(x) \, dx  - \mathcal{J}[\mu_{\sqrt{\lambda} S}, \mu_{\sqrt{\lambda} S}\boxplus_{\alpha} \mu_{\rm MP}]
\end{split}
    \label{asymp-mI-th}
\end{equation}
\label{MI-th}
\end{theorem}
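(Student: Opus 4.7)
The plan is to compute the mutual information directly by identifying the signal partition function with a rectangular spherical integral, and then to invoke the asymptotic result of \cite{guionnet2021large}. Starting from $I_N(\bS;\bY) = H(\bY) - H(\bZ)$ with $H(\bZ) = \tfrac{NM}{2}\ln(2\pi e/N)$ explicit, completing the square in the Gaussian likelihood yields $P(\bY) = (N/2\pi)^{NM/2}\, e^{-\tfrac{N}{2}\|\bY\|_{\rm F}^2}\, Z(\bY)$, where the partition function $Z(\bY) = \mathbb{E}_{\bS'}\!\bigl[\exp\bigl(N\sqrt{\lambda}\Tr (\bS')^\intercal \bY - \tfrac{N\lambda}{2}\|\bS'\|_{\rm F}^2\bigr)\bigr]$ integrates the prior against the $\bY$-linear part of the likelihood. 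The explicit constants cancel against those coming from $H(\bZ)$, leaving $I_N(\bS;\bY) = -\tfrac{NM}{2} + \tfrac{N}{2}\mathbb{E}\|\bY\|_{\rm F}^2 - \mathbb{E}\ln Z(\bY)$.

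The heart of the proof is to identify $Z(\bY)$ with a rectangular spherical integral. Invoking the bi-rotational invariance of the signal prior (a standing hypothesis also exploited in section \ref{optimality-RIE}), I would write $\bS' = \bU_0\bGam_0\bV_0^\intercal$ with Haar orthogonal $\bU_0,\bV_0$ independent of the singular value matrix $\bGam_0$, so that integrating out $\bU_0,\bV_0$ produces $\mathcal{I}_{N,M}(\sqrt{\lambda}\bS',\bY)$ (the spherical integral depending only on the singular value spectra). Under Assumptions \ref{assumptions on law}--\ref{bounded-mom} the ESD of $\bS'$ concentrates a.s.\ on $\mu_S$, while \cite{guionnet2021large} gives $\tfrac{1}{NM}\ln\mathcal{I}_{N,M}(\sqrt{\lambda}\bS',\bY) \to \mathcal{J}[\mu_{\sqrt{\lambda}S},\mu_Y]$ a.s., with $\mu_Y = \mu_{\sqrt{\lambda}S}\boxplus_\alpha\mu_{\rm MP}$ (a consequence of the bi-rotational invariance of $\bZ$ and asymptotic freeness). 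A Laplace/concentration argument then gives $\tfrac{1}{NM}\mathbb{E}\ln Z(\bY) \to -\tfrac{\lambda\alpha}{2}\!\int x^2\,\mu_S(x)\,dx + \mathcal{J}[\mu_{\sqrt{\lambda}S},\mu_Y]$.

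For the remaining quadratic term, $\mathbb{E}\|\bY\|_{\rm F}^2 = \lambda\,\mathbb{E}\|\bS\|_{\rm F}^2 + M$ since the cross term is centered by $\mathbb{E}\bZ = 0$, so $\tfrac{1}{2M}\mathbb{E}\|\bY\|_{\rm F}^2 \to \tfrac{\lambda\alpha}{2}\!\int x^2\,\mu_S(x)\,dx + \tfrac{1}{2}$. Plugging both limits into the decomposition above, the constant pieces $\pm\tfrac{1}{2}$ cancel and the signal-energy pieces $\pm\tfrac{\lambda\alpha}{2}\!\int x^2\,\mu_S(x)\,dx$ combine, producing exactly the advertised $\lambda\alpha\!\int x^2\,\mu_S(x)\,dx - \mathcal{J}[\mu_{\sqrt{\lambda}S},\mu_{\sqrt{\lambda}S}\boxplus_\alpha\mu_{\rm MP}]$.

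The main technical hurdle will be the passage from the a.s.\ pointwise asymptotic of $\tfrac{1}{NM}\ln\mathcal{I}_{N,M}$ to convergence of the \emph{expected} log-partition $\tfrac{1}{NM}\mathbb{E}\ln Z(\bY)$. Two ingredients will be required: first, concentration of $\ln Z(\bY)$ about its mean, accessible via Gaussian concentration (log-Sobolev) on $\bZ$ together with a Lipschitz bound on $\bY\mapsto \ln Z(\bY)$ in Frobenius norm (whose Lipschitz constant is controlled by $N\sqrt{\lambda}\sup\|\bS\|_{\rm F}$ via the Gibbs gradient identity, giving fluctuations of order $\sqrt{N}$ that are negligible after division by $NM$); and second, a Laplace-type localization step showing that the $\bS'$-integration inside $Z(\bY)$ concentrates on configurations with empirical singular value spectrum close (in Wasserstein) to $\mu_S$. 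The latter relies on continuity of $\mathcal{J}[\cdot,\cdot]$ in its two arguments in the weak topology, which should be extractable from the variational formula of \cite{guionnet2021large}, combined with the bounded-second-moment hypothesis of Assumption \ref{bounded-mom} and the compact-support condition in Assumption \ref{assumptions on law}.
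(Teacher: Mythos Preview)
Your overall framework --- relating $\frac{1}{NM}I_N$ to $-\frac{1}{NM}\bE\ln Z(\bY)$ plus an energy term, and recognizing that under a bi-rotationally invariant prior the Haar integration inside $Z(\bY)$ produces a rectangular spherical integral --- is exactly the paper's starting point. You also correctly isolate the main technical difficulty. But your proposed resolution of that difficulty has a genuine gap.

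The problem is the ``Laplace-type localization'' for the inner $\bS'$-integration. After integrating out the Haar factors you are left with
\[
Z(\bY)=\int d\bsig'\, p_S(\bsig')\, e^{-\frac{N\lambda}{2}\|\bsig'\|^2}\,\mathcal{I}_{N,M}(\sqrt{\lambda}\bSig',\bY),
\]
and you want $\frac{1}{NM}\ln$ of this to converge to $-\frac{\lambda\alpha}{2}\!\int x^2\mu_S+\mathcal{J}$. For the \emph{lower} bound on $\frac{1}{NM}\ln Z$ your plan works: restrict to $\{W_2(\hat\mu_{\bsig'},\mu_S)<\epsilon\}$, which has $p_S$-mass $1-o(1)$, and use continuity. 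But the \emph{upper} bound fails. Continuity of $\mathcal{J}$ alone does not rule out that atypical $\bsig'$ with ESD near some $\nu\neq\mu_S$ contribute $e^{NM(\mathcal{J}[\sqrt{\lambda}\nu,\mu_Y]-\frac{\lambda\alpha}{2}m_2(\nu))}$ with an exponent \emph{larger} than the typical one; since the exponent is at scale $NM$, you would need $p_S(\{\text{atypical}\})\le e^{-cNM}$ to kill this, whereas Assumptions~\ref{assumptions on law}--\ref{bounded-mom} only give $o(1)$. Nothing in the hypotheses forces $\nu\mapsto \mathcal{J}[\sqrt{\lambda}\nu,\mu_Y]-\frac{\lambda\alpha}{2}m_2(\nu)$ to be maximized at $\nu=\mu_S$; that would be a separate variational statement you have not established.

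The paper avoids this entirely. It introduces an auxiliary bi-rotationally invariant prior with \emph{deterministic} singular values $\bsig^0$ (a fixed i.i.d.\ $\mu_S$-sample), for which the partition function is exactly $e^{-\frac{N\lambda}{2}\|\bsig^0\|^2}\mathcal{I}_{N,M}(\sqrt{\lambda}\bSig^0,\tilde{\bY})$ with no residual integral over singular values; the limit of the corresponding free energy $\tilde F_N$ then follows from \cite{guionnet2021large} plus a uniform-integrability argument (Proposition~\ref{asymp-free-energy-2-proposition}, Lemma~\ref{converg-Expec}). The bridge back to the original $F_N$ is Proposition~\ref{pseudo-lip}: an interpolation between the two priors whose $t$-derivative, after Gaussian integration by parts and the Nishimori identity, is bounded by $\frac{\lambda}{2N}(\sqrt{\bE\|\bsig\|^2}+\sqrt{\bE\|\tilde{\bsig}\|^2})\sqrt{\bE\|\bsig-\tilde{\bsig}\|^2}$. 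This vanishes (Lemma~\ref{W2-dis}) because both empirical spectra converge to $\mu_S$ in $W_2$. That interpolation/Nishimori device is precisely the missing ingredient in your outline: it replaces the Laplace localization and needs only the a.s.\ spectral convergence and bounded second moment actually assumed, not a large-deviation rate at scale $NM$.
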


\begin{remark}\label{asymp-log-rect-sph}
    In \cite{troiani2022optimal}, the asymptotic log-spherical integral $\mathcal{J}[\mu_{\sqrt{\lambda} S}, \mu_{\sqrt{\lambda} S}\boxplus_{\alpha} \mu_{\rm MP}]$ is computed explicitly, which together with Theorem \ref{theoremMain} gives an an explicit expression for the asymptotic mutual information under Gaussian noise.
\end{remark}

\section{Numerical simulations}\label{numerics}

\subsection{General bi-rotational invariant noise}
In Fig. \ref{fig:General-noise}, the performance of the algorithmic RIE based on \eqref{rect-RIE} is compared against the oracle estimator \eqref{Oracle-sv} for two cases of noise distribution and Gaussian signal matrix, i.e. $\bS$ is a matrix with i.i.d. Gaussian entries of variance $\nicefrac{1}{N}$.

\paragraph{Uniform spectral noise.} For this prior, the noise matrix $\bZ \in \bR^{N \times N}$ is constructed as $\bZ = \bU {\rm diag}(r_1, \cdots, r_N) \bV^\intercal$, where $\bU, \bV \in \bR^{N \times N}$ are independent Haar distributed matrices, and the singular values $r_1, \cdots, r_N$ are chosen independently uniformly from $[0,2]$. The limiting spectral measure of the uniform noise distribution $\mathcal{U}_{[0,2]}$ has rectangular R-transform $\mathcal{C}_{\mathcal{U}_{[0,2]}}^{(1)} (z) = 2 \sqrt{z} \coth \big(2\sqrt{z}\big) - 1$.

\paragraph{Sum of rank-one factors.} For the noise matrix we take a sum of rank-one matrices, $\bZ = \sum_{k=1}^L \bu_k \bv_k^\intercal$, where $\bu_k$'s and $\bv_k$'s are independent uniform random vectors of the unit norm in $\bR^N, \bR^M$. Denoting the limiting ratio $\nicefrac{L}{N} \to c$, the liming symmetrized ESD of $\bZ$ is the rectangular analogue of the symmetrized
Poisson distribution with parameter $c$, with rectangular R-transform $\mathcal{C}^{(\alpha)}(z) = \nicefrac{c z}{1 - z}$ (see section 4.3, Proposition 6.1 in \cite{benaych2007infinitely}). 

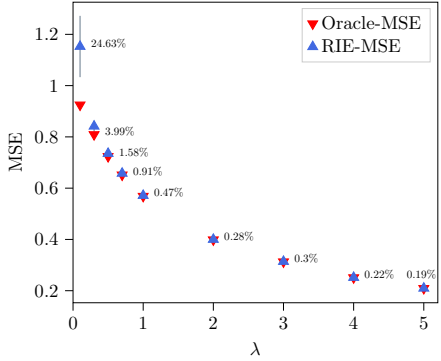
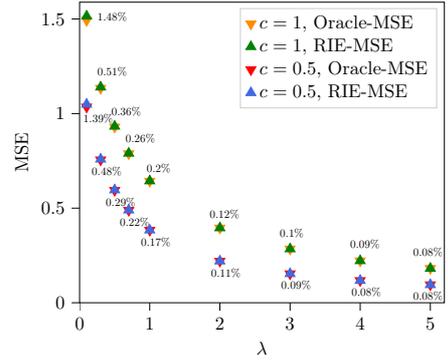
\begin{figure}
\begin{subfigure}[t]{.48\textwidth}
    \centering
    \captionsetup{justification=centering}
\begin{tikzpicture}[scale = 0.7]

\definecolor{darkgray176}{RGB}{176,176,176}
\definecolor{lightcoral}{RGB}{240,128,128}
\definecolor{lightslategray}{RGB}{119,136,153}
\definecolor{royalblue}{RGB}{65,105,225}

\begin{axis}[
legend cell align={left},
legend style={fill opacity=0.8, draw opacity=1, text opacity=1, draw=white!80!black},
tick align=outside,
tick pos=left,
x grid style={darkgray176},
xlabel={\(\displaystyle \lambda\)},
xmin=0, xmax=5.2,
xtick style={color=black},
y grid style={darkgray176},
ylabel={MSE},
ymin=0.152989628270105, ymax=1.32539033414586,
ytick style={color=black}
]
\path [draw=lightcoral, semithick]
(axis cs:0.1,0.923737564407104)
--(axis cs:0.1,0.925920607197669);

\path [draw=lightcoral, semithick]
(axis cs:0.3,0.804393241100484)
--(axis cs:0.3,0.81386038120777);

\path [draw=lightcoral, semithick]
(axis cs:0.5,0.720281014015007)
--(axis cs:0.5,0.726999508933095);

\path [draw=lightcoral, semithick]
(axis cs:0.7,0.645750561762231)
--(axis cs:0.7,0.657776341270898);


\path [draw=lightcoral, semithick]
(axis cs:1,0.563689801272909)
--(axis cs:1,0.573432903917115);

\path [draw=lightcoral, semithick]
(axis cs:2,0.392454165647133)
--(axis cs:2,0.404943625182792);

\path [draw=lightcoral, semithick]
(axis cs:3,0.308128156083209)
--(axis cs:3,0.317630266805713);

\path [draw=lightcoral, semithick]
(axis cs:4,0.247104440654114)
--(axis cs:4,0.254713902511564);

\path [draw=lightcoral, semithick]
(axis cs:5,0.206280569446275)
--(axis cs:5,0.211848418333924);

\path [draw=lightslategray, semithick]
(axis cs:0.1,1.03303033145219)
--(axis cs:0.1,1.27209939296969);

\path [draw=lightslategray, semithick]
(axis cs:0.3,0.81987575197821)
--(axis cs:0.3,0.863033131115337);

\path [draw=lightslategray, semithick]
(axis cs:0.5,0.726273947666152)
--(axis cs:0.5,0.743930384552147);

\path [draw=lightslategray, semithick]
(axis cs:0.7,0.648403717688098)
--(axis cs:0.7,0.666974801994573);


\path [draw=lightslategray, semithick]
(axis cs:1,0.565385729356802)
--(axis cs:1,0.577083622597401);

\path [draw=lightslategray, semithick]
(axis cs:2,0.393250011156362)
--(axis cs:2,0.406413307591626);

\path [draw=lightslategray, semithick]
(axis cs:3,0.308871790800908)
--(axis cs:3,0.318785866515836);

\path [draw=lightslategray, semithick]
(axis cs:4,0.247606338085179)
--(axis cs:4,0.255309250887774);

\path [draw=lightslategray, semithick]
(axis cs:5,0.206662737482111)
--(axis cs:5,0.212248083249741);

\addplot [semithick, red, mark=triangle*, mark size=3, mark options={solid,rotate=180}, only marks]
table {%
0.1 0.924829085802387
0.3 0.809126811154127
0.5 0.723640261474051
0.7 0.651763451516565
1 0.568561352595012
2 0.398698895414963
3 0.312879211444461
4 0.250909171582839
5 0.2090644938901
};
\addlegendentry{Oracle-MSE}
\addplot [semithick, royalblue, mark=triangle*, mark size=3, mark options={solid}, only marks]
table {%
0.1 1.15256486221094
0.3 0.841454441546774
0.5 0.735102166109149
0.7 0.657689259841335
1 0.571234675977102
2 0.399831659373994
3 0.313828828658372
4 0.251457794486477
5 0.209455410365926
};
\addlegendentry{RIE-MSE}
\draw (axis cs:0.2,1.15) node[
  scale=0.6,
  anchor=base west,
  text=black,
  rotate=0.0
]{24.63\%};
\draw (axis cs:0.4,0.809126811154127) node[
  scale=0.6,
  anchor=base west,
  text=black,
  rotate=0.0
]{3.99\%};
\draw (axis cs:0.6,0.723640261474051) node[
  scale=0.6,
  anchor=base west,
  text=black,
  rotate=0.0
]{1.58\%};
\draw (axis cs:0.8,0.651763451516565) node[
  scale=0.6,
  anchor=base west,
  text=black,
  rotate=0.0
]{0.91\%};
\draw (axis cs:1.1,0.568561352595012) node[
  scale=0.6,
  anchor=base west,
  text=black,
  rotate=0.0
]{0.47\%};
\draw (axis cs:2.1,0.398698895414963) node[
  scale=0.6,
  anchor=base west,
  text=black,
  rotate=0.0
]{0.28\%};
\draw (axis cs:3.1,0.312879211444461) node[
  scale=0.6,
  anchor=base west,
  text=black,
  rotate=0.0
]{0.3\%};
\draw (axis cs:4.1,0.250909171582839) node[
  scale=0.6,
  anchor=base west,
  text=black,
  rotate=0.0
]{0.22\%};
\draw (axis cs:4.7,0.25) node[
  scale=0.6,
  anchor=base west,
  text=black,
  rotate=0.0
]{0.19\%};
\end{axis}

\end{tikzpicture}
    \caption{Uniform spectral noise matrix,\\ $ N = M = 1000$}
    \label{fig:uniform-noise}
\end{subfigure}
\hfill
\begin{subfigure}[t]{.48\textwidth}
    \centering
    \captionsetup{justification=centering}
\begin{tikzpicture}[scale = 0.7]

\definecolor{darkgray176}{RGB}{176,176,176}
\definecolor{darkorange}{RGB}{255,140,0}
\definecolor{green}{RGB}{0,128,0}
\definecolor{lightcoral}{RGB}{240,128,128}
\definecolor{lightgreen}{RGB}{144,238,144}
\definecolor{lightslategray}{RGB}{119,136,153}
\definecolor{moccasin}{RGB}{255,228,181}
\definecolor{royalblue}{RGB}{65,105,225}

\begin{axis}[
legend cell align={left},
legend style={fill opacity=0.8, draw opacity=1, text opacity=1, draw=white!80!black},
tick align=outside,
tick pos=left,
x grid style={darkgray176},
xlabel={\(\displaystyle \lambda\)},
xmin=0, xmax=5.2,
xtick style={color=black},
y grid style={darkgray176},
ylabel={MSE},
ymin=-0, ymax=1.58883287640586,
ytick style={color=black}
]
\path [draw=moccasin, semithick]
(axis cs:0.1,1.49154312527954)
--(axis cs:0.1,1.49414092816245);

\path [draw=moccasin, semithick]
(axis cs:0.3,1.13314154953142)
--(axis cs:0.3,1.13492923892317);

\path [draw=moccasin, semithick]
(axis cs:0.5,0.928468554435307)
--(axis cs:0.5,0.93026336225066);

\path [draw=moccasin, semithick]
(axis cs:0.7,0.786929071128695)
--(axis cs:0.7,0.788852871005469);

\path [draw=moccasin, semithick]
(axis cs:1,0.642015481550515)
--(axis cs:1,0.643133467987065);

\path [draw=moccasin, semithick]
(axis cs:2,0.39471665882669)
--(axis cs:2,0.395358793073859);

\path [draw=moccasin, semithick]
(axis cs:3,0.283750376167518)
--(axis cs:3,0.284481621375888);

\path [draw=moccasin, semithick]
(axis cs:4,0.221343339328497)
--(axis cs:4,0.221821689603955);

\path [draw=moccasin, semithick]
(axis cs:5,0.181339396113788)
--(axis cs:5,0.181674567951811);

\path [draw=lightgreen, semithick]
(axis cs:0.1,1.51204107781874)
--(axis cs:0.1,1.51769697873993);

\path [draw=lightgreen, semithick]
(axis cs:0.3,1.13853131645474)
--(axis cs:0.3,1.14108682663884);

\path [draw=lightgreen, semithick]
(axis cs:0.5,0.931657123037624)
--(axis cs:0.5,0.933818927506093);

\path [draw=lightgreen, semithick]
(axis cs:0.7,0.788957837199194)
--(axis cs:0.7,0.79085354010593);

\path [draw=lightgreen, semithick]
(axis cs:1,0.643222840444468)
--(axis cs:1,0.644474235611744);

\path [draw=lightgreen, semithick]
(axis cs:2,0.395169692023641)
--(axis cs:2,0.395865723975969);

\path [draw=lightgreen, semithick]
(axis cs:3,0.284029199118663)
--(axis cs:3,0.284760324251864);

\path [draw=lightgreen, semithick]
(axis cs:4,0.221546791271514)
--(axis cs:4,0.222023356156209);

\path [draw=lightgreen, semithick]
(axis cs:5,0.181493892147687)
--(axis cs:5,0.181821486393159);

\path [draw=lightcoral, semithick]
(axis cs:0.1,1.03345472510892)
--(axis cs:0.1,1.03609425633274);

\path [draw=lightcoral, semithick]
(axis cs:0.3,0.754228492059106)
--(axis cs:0.3,0.756178106741535);

\path [draw=lightcoral, semithick]
(axis cs:0.5,0.593763851164255)
--(axis cs:0.5,0.59493370697408);

\path [draw=lightcoral, semithick]
(axis cs:0.7,0.488018417154994)
--(axis cs:0.7,0.489210968758979);

\path [draw=lightcoral, semithick]
(axis cs:1,0.383221830362838)
--(axis cs:1,0.383790906428633);

\path [draw=lightcoral, semithick]
(axis cs:2,0.219939209336089)
--(axis cs:2,0.220218639158206);

\path [draw=lightcoral, semithick]
(axis cs:3,0.153167996908189)
--(axis cs:3,0.153453261286375);

\path [draw=lightcoral, semithick]
(axis cs:4,0.117292060462972)
--(axis cs:4,0.117516609890495);

\path [draw=lightcoral, semithick]
(axis cs:5,0.0949790254213866)
--(axis cs:5,0.0951502480244167);

\path [draw=lightslategray, semithick]
(axis cs:0.1,1.04756558110824)
--(axis cs:0.1,1.05076687856072);

\path [draw=lightslategray, semithick]
(axis cs:0.3,0.757955495887175)
--(axis cs:0.3,0.759720906710571);

\path [draw=lightslategray, semithick]
(axis cs:0.5,0.595528104822001)
--(axis cs:0.5,0.596639091061839);

\path [draw=lightslategray, semithick]
(axis cs:0.7,0.489098122487344)
--(axis cs:0.7,0.490314940291618);

\path [draw=lightslategray, semithick]
(axis cs:1,0.383889369501796)
--(axis cs:1,0.384422222184902);

\path [draw=lightslategray, semithick]
(axis cs:2,0.220185074956458)
--(axis cs:2,0.220453711720534);

\path [draw=lightslategray, semithick]
(axis cs:3,0.15330372891298)
--(axis cs:3,0.153590668764985);

\path [draw=lightslategray, semithick]
(axis cs:4,0.117382960080505)
--(axis cs:4,0.117608642350864);

\path [draw=lightslategray, semithick]
(axis cs:5,0.0950501900395601)
--(axis cs:5,0.0952217898725559);

\addplot [semithick, darkorange, mark=triangle*, mark size=3, mark options={solid,rotate=180}, only marks]
table {%
0.1 1.49284202672099
0.3 1.1340353942273
0.5 0.929365958342983
0.7 0.787890971067082
1 0.64257447476879
2 0.395037725950275
3 0.284115998771703
4 0.221582514466226
5 0.1815069820328
};
\addlegendentry{$c=1$, Oracle-MSE}
\addplot [semithick, green, mark=triangle*, mark size=3, mark options={solid}, only marks]
table {%
0.1 1.51486902827934
0.3 1.13980907154679
0.5 0.932738025271859
0.7 0.789905688652562
1 0.643848538028106
2 0.395517707999805
3 0.284394761685264
4 0.221785073713861
5 0.181657689270423
};
\addlegendentry{$c=1$, RIE-MSE}
\addplot [semithick, red, mark=triangle*, mark size=3, mark options={solid,rotate=180}, only marks]
table {%
0.1 1.03477449072083
0.3 0.75520329940032
0.5 0.594348779069167
0.7 0.488614692956986
1 0.383506368395735
2 0.220078924247147
3 0.153310629097282
4 0.117404335176733
5 0.0950646367229016
};
\addlegendentry{$c=0.5$, Oracle-MSE}
\addplot [semithick, royalblue, mark=triangle*, mark size=3, mark options={solid}, only marks]
table {%
0.1 1.04916622983448
0.3 0.758838201298873
0.5 0.59608359794192
0.7 0.489706531389481
1 0.384155795843349
2 0.220319393338496
3 0.153447198838982
4 0.117495801215685
5 0.095135989956058
};
\addlegendentry{$c=0.5$, RIE-MSE}
\draw (axis cs:0.2,1.49486902827934) node[
  scale=0.6,
  anchor=base west,
  text=black,
  rotate=0.0
]{1.48\%};
\draw (axis cs:0.2,1.2) node[
  scale=0.6,
  anchor=base west,
  text=black,
  rotate=0.0
]{0.51\%};
\draw (axis cs:0.4,0.99) node[
  scale=0.6,
  anchor=base west,
  text=black,
  rotate=0.0
]{0.36\%};
\draw (axis cs:0.6,0.85) node[
  scale=0.6,
  anchor=base west,
  text=black,
  rotate=0.0
]{0.26\%};
\draw (axis cs:0.9,0.69) node[
  scale=0.6,
  anchor=base west,
  text=black,
  rotate=0.0
]{0.2\%};
\draw (axis cs:1.8,0.45) node[
  scale=0.6,
  anchor=base west,
  text=black,
  rotate=0.0
]{0.12\%};
\draw (axis cs:2.8,0.35) node[
  scale=0.6,
  anchor=base west,
  text=black,
  rotate=0.0
]{0.1\%};
\draw (axis cs:3.8,0.29) node[
  scale=0.6,
  anchor=base west,
  text=black,
  rotate=0.0
]{0.09\%};
\draw (axis cs:4.7,0.25) node[
  scale=0.6,
  anchor=base west,
  text=black,
  rotate=0.0
]{0.08\%};
\draw (axis cs:0,0.954774490720832) node[
  scale=0.6,
  anchor=base west,
  text=black,
  rotate=0.0
]{1.39\%};
\draw (axis cs:0.12,0.67520329940032) node[
  scale=0.6,
  anchor=base west,
  text=black,
  rotate=0.0
]{0.48\%};
\draw (axis cs:0.32,0.514348779069168) node[
  scale=0.6,
  anchor=base west,
  text=black,
  rotate=0.0
]{0.29\%};
\draw (axis cs:0.52,0.408614692956986) node[
  scale=0.6,
  anchor=base west,
  text=black,
  rotate=0.0
]{0.22\%};
\draw (axis cs:0.82,0.303506368395735) node[
  scale=0.6,
  anchor=base west,
  text=black,
  rotate=0.0
]{0.17\%};
\draw (axis cs:1.82,0.140078924247147) node[
  scale=0.6,
  anchor=base west,
  text=black,
  rotate=0.0
]{0.11\%};
\draw (axis cs:2.82,0.0733106290972819) node[
  scale=0.6,
  anchor=base west,
  text=black,
  rotate=0.0
]{0.09\%};
\draw (axis cs:3.82,0.0374043351767332) node[
  scale=0.6,
  anchor=base west,
  text=black,
  rotate=0.0
]{0.08\%};
\draw (axis cs:4.7,0.0150646367229016) node[
  scale=0.6,
  anchor=base west,
  text=black,
  rotate=0.0
]{0.08\%};
\end{axis}

\end{tikzpicture}
    \caption{  Sum of rank-one factors noise matrix,\\ $N=1000, M=2000, L=c N$}
    \label{fig:Poisson-noise}
\end{subfigure}
    \caption{  Performance of the algorithmic RIE based on \eqref{rect-RIE} as compared to the oracle one. Signal matrix $\bS \in \bR^{N \times M}$ has i.i.d. Gaussian entries of variance $\nicefrac{1}{N}$. Results are averaged over 10 runs (error bars are invisible). Average relative error is also reported. In both examples, the Hilbert transform of the observation is computed numerically using Cauchy kernel method in \cite{potters2020first}. }
    \label{fig:General-noise}
\end{figure}

\subsection{Gaussian noise}
\subsubsection{Validity of Theorem \ref{trace-relation}}\label{numerical-thm-check}
In Fig. \ref{fig:Thm-check}, we numerically verify Theorem \ref{trace-relation} and check the behavior of the error term $\epsilon_N$. For simplicity, we set the SNR parameter to one, $\lambda = 1$. In Fig. \ref{rel-error}, the relative error is plotted,
\begin{equation}\label{rel-error-tr-relation}
    \frac{| \epsilon_N | }{| L(z) |} = \frac{ \bigg| L(z) - \Big[ G(z) \big(z^2 +1- \frac{1}{\alpha_0} \big) - z^2 G^2(z)  -1 \Big] \bigg|}{| L(z) |}
\end{equation}
for the case of a signal matrix with i.i.d. Gaussian entries of variance $\nicefrac{1}{N}$. In Fig. \ref{error-term}, the behavior of the error term is depicted, which verifies the conjecture stated in remark \ref{error-term-imz}, namely that the error is controlled by $\big( N | \im z | \big)^{-1}$.

\begin{figure}
\begin{subfigure}[t]{.48\textwidth}
    \centering
\begin{tikzpicture}[scale=0.7]

\definecolor{darkgray176}{RGB}{176,176,176}
\definecolor{darkorange25512714}{RGB}{255,127,14}
\definecolor{steelblue31119180}{RGB}{31,119,180}

\begin{axis}[
legend cell align={left},
legend style={fill opacity=0.8, draw opacity=1, text opacity=1, draw=white!80!black},
tick align=outside,
tick pos=left,
x grid style={darkgray176},
xlabel={$N$},
xmin=95, xmax=1005,
xtick style={color=black},
y grid style={darkgray176},
ylabel={Relative Error},
scaled y ticks=false,
yticklabel style={
  /pgf/number format/precision=3,
  /pgf/number format/fixed},
ymin=0.0124054446703834, ymax=0.107965314389338,
ytick style={color=black}
]
\path [draw=blue, fill=blue, opacity=0.1]
(axis cs:100,0.0684509735505266)
--(axis cs:100,0.0541992853698559)
--(axis cs:200,0.037191973717933)
--(axis cs:300,0.0310885882145584)
--(axis cs:400,0.02930396131263)
--(axis cs:500,0.0246807710004876)
--(axis cs:550,0.0246038642909233)
--(axis cs:600,0.0227796337693245)
--(axis cs:650,0.0227880768850582)
--(axis cs:700,0.0185331928993448)
--(axis cs:750,0.021162755151048)
--(axis cs:800,0.0201355881598352)
--(axis cs:850,0.018785657095846)
--(axis cs:900,0.0193711265496177)
--(axis cs:950,0.0183327402234124)
--(axis cs:1000,0.016749075112154)
--(axis cs:1000,0.0204627295009764)
--(axis cs:1000,0.0204627295009764)
--(axis cs:950,0.0222790260482745)
--(axis cs:900,0.0236150285237229)
--(axis cs:850,0.0227991284463493)
--(axis cs:800,0.0248086978916992)
--(axis cs:750,0.0257128237612763)
--(axis cs:700,0.0228904557430522)
--(axis cs:650,0.0277675119257918)
--(axis cs:600,0.0277991031443198)
--(axis cs:550,0.0296625022557581)
--(axis cs:500,0.0304651282864515)
--(axis cs:400,0.0359017409440501)
--(axis cs:300,0.038651525901367)
--(axis cs:200,0.0451476454836777)
--(axis cs:100,0.0684509735505266)
--cycle;

\path [draw=red, fill=red, opacity=0.1]
(axis cs:100,0.103621683947567)
--(axis cs:100,0.0839048487493169)
--(axis cs:200,0.0521479339681824)
--(axis cs:300,0.0466507670143497)
--(axis cs:400,0.0382001356067365)
--(axis cs:500,0.0354545843568907)
--(axis cs:550,0.0336866030180983)
--(axis cs:600,0.0293606282010053)
--(axis cs:650,0.028807310920765)
--(axis cs:700,0.0279020573926674)
--(axis cs:750,0.026809714740734)
--(axis cs:800,0.027466926064438)
--(axis cs:850,0.0242512695476647)
--(axis cs:900,0.0271178290770413)
--(axis cs:950,0.0246403243940644)
--(axis cs:1000,0.0255189776370957)
--(axis cs:1000,0.0307844386542427)
--(axis cs:1000,0.0307844386542427)
--(axis cs:950,0.0303112376511195)
--(axis cs:900,0.0329217364524386)
--(axis cs:850,0.0301621788140527)
--(axis cs:800,0.0336841539285233)
--(axis cs:750,0.0329946677290305)
--(axis cs:700,0.0334801542676869)
--(axis cs:650,0.0356719389225122)
--(axis cs:600,0.0359688681230622)
--(axis cs:550,0.0419167135575304)
--(axis cs:500,0.0445564103201718)
--(axis cs:400,0.0470372654703727)
--(axis cs:300,0.0566216443830082)
--(axis cs:200,0.0629219703411487)
--(axis cs:100,0.103621683947567)
--cycle;

\addplot [semithick, steelblue31119180]
table {%
100 0.0613251294601912
200 0.0411698096008054
300 0.0348700570579627
400 0.03260285112834
500 0.0275729496434696
550 0.0271331832733407
600 0.0252893684568222
650 0.025277794405425
700 0.0207118243211985
750 0.0234377894561622
800 0.0224721430257672
850 0.0207923927710976
900 0.0214930775366703
950 0.0203058831358434
1000 0.0186059023065652
};
\addlegendentry{$\alpha_0 = \nicefrac{1}{2}$}
\addplot [semithick, darkorange25512714]
table {%
100 0.0937632663484419
200 0.0575349521546655
300 0.0516362056986789
400 0.0426187005385546
500 0.0400054973385312
550 0.0378016582878144
600 0.0326647481620337
650 0.0322396249216386
700 0.0306911058301771
750 0.0299021912348822
800 0.0305755399964806
850 0.0272067241808587
900 0.0300197827647399
950 0.0274757810225919
1000 0.0281517081456692
};
\addlegendentry{$\alpha_0 = 1$}
\end{axis}

\end{tikzpicture}
    \caption{Relative error}
    \label{rel-error}
\end{subfigure}
\hfill
\begin{subfigure}[t]{.48\textwidth}
    \centering
\begin{tikzpicture}[scale=0.7]

\definecolor{darkgray176}{RGB}{176,176,176}
\definecolor{darkorange25512714}{RGB}{255,127,14}
\definecolor{steelblue31119180}{RGB}{31,119,180}

\begin{axis}[
legend cell align={left},
legend style={fill opacity=0.8, draw opacity=1, text opacity=1, draw=white!80!black},
tick align=outside,
tick pos=left,
x grid style={darkgray176},
xlabel={$N$},
xmin=95, xmax=1005,
xtick style={color=black},
y grid style={darkgray176},
ylabel={$| \epsilon_N |$},
scaled y ticks=false,
yticklabel style={
  /pgf/number format/precision=3,
  /pgf/number format/fixed},
ymin=0.00975137034649519, ymax=0.0495608900560206,
ytick style={color=black}
]
\path [draw=blue, fill=blue, opacity=0.1]
(axis cs:100,0.0445232168006955)
--(axis cs:100,0.035421876966326)
--(axis cs:200,0.024878054352905)
--(axis cs:300,0.0209361090045763)
--(axis cs:400,0.0199324877595992)
--(axis cs:500,0.0168454078241581)
--(axis cs:550,0.0168492021774531)
--(axis cs:600,0.0155956113848722)
--(axis cs:650,0.0156367624481691)
--(axis cs:700,0.0127490064805524)
--(axis cs:750,0.0145774416573974)
--(axis cs:800,0.0138604468589992)
--(axis cs:850,0.0129386885815661)
--(axis cs:900,0.0133608682330662)
--(axis cs:950,0.0126404341121142)
--(axis cs:1000,0.0115608939696554)
--(axis cs:1000,0.0141285600267789)
--(axis cs:1000,0.0141285600267789)
--(axis cs:950,0.0153528593695068)
--(axis cs:900,0.0162920837551136)
--(axis cs:850,0.0156949720278686)
--(axis cs:800,0.0170709669969439)
--(axis cs:750,0.0177159617834928)
--(axis cs:700,0.0157544846132649)
--(axis cs:650,0.019050365402043)
--(axis cs:600,0.0190192532467848)
--(axis cs:550,0.020306497489097)
--(axis cs:500,0.0207758744112006)
--(axis cs:400,0.0244038936523728)
--(axis cs:300,0.0259822434178814)
--(axis cs:200,0.0301480960814378)
--(axis cs:100,0.0445232168006955)
--cycle;

\path [draw=red, fill=red, opacity=0.1]
(axis cs:100,0.0477513664328604)
--(axis cs:100,0.0387379680767911)
--(axis cs:200,0.0247304088299934)
--(axis cs:300,0.0223149539010048)
--(axis cs:400,0.0184014912730989)
--(axis cs:500,0.0171179764180098)
--(axis cs:550,0.0162770780672544)
--(axis cs:600,0.0142451117135915)
--(axis cs:650,0.0139890390766946)
--(axis cs:700,0.0135632629438915)
--(axis cs:750,0.0130287081028991)
--(axis cs:800,0.0133397719215611)
--(axis cs:850,0.0117939516644908)
--(axis cs:900,0.0132178573498309)
--(axis cs:950,0.0120347795872168)
--(axis cs:1000,0.0124690167417364)
--(axis cs:1000,0.0150457126260265)
--(axis cs:1000,0.0150457126260265)
--(axis cs:950,0.0148170308188641)
--(axis cs:900,0.0160386523623824)
--(axis cs:850,0.0146414776494368)
--(axis cs:800,0.0163341141672044)
--(axis cs:750,0.0160097361936255)
--(axis cs:700,0.0162913708594708)
--(axis cs:650,0.0173766679025505)
--(axis cs:600,0.0174433060362718)
--(axis cs:550,0.0202297806937454)
--(axis cs:500,0.0214687373152963)
--(axis cs:400,0.0226558604402363)
--(axis cs:300,0.0270740894775627)
--(axis cs:200,0.0298734619697845)
--(axis cs:100,0.0477513664328604)
--cycle;

\addplot [semithick, steelblue31119180]
table {%
100 0.0399725468835108
200 0.0275130752171714
300 0.0234591762112289
400 0.022168190705986
500 0.0188106411176794
550 0.018577849833275
600 0.0173074323158285
650 0.0173435639251061
700 0.0142517455469086
750 0.0161467017204451
800 0.0154657069279716
850 0.0143168303047173
900 0.0148264759940899
950 0.0139966467408105
1000 0.0128447269982172
};
\addlegendentry{$\alpha_0 = \nicefrac{1}{2}$}
\addplot [semithick, darkorange25512714]
table {%
100 0.0432446672548258
200 0.027301935399889
300 0.0246945216892837
400 0.0205286758566676
500 0.0192933568666531
550 0.0182534293804999
600 0.0158442088749316
650 0.0156828534896225
700 0.0149273169016811
750 0.0145192221482623
800 0.0148369430443827
850 0.0132177146569638
900 0.0146282548561066
950 0.0134259052030405
1000 0.0137573646838814
};
\addlegendentry{$\alpha_0 = 1$}
\addplot [semithick, black]
table {%
100 0.04
200 0.0282842712474619
300 0.023094010767585
400 0.02
500 0.0178885438199983
550 0.0170560573084488
600 0.0163299316185545
650 0.0156892908110547
700 0.0151185789203691
750 0.0146059348668044
800 0.014142135623731
850 0.0137198868114007
900 0.0133333333333333
950 0.012977713690461
1000 0.0126491106406735
};
\addlegendentry{$0.4 N^{-\nicefrac{1}{2}}$}
\end{axis}

\end{tikzpicture}
    \caption{  Error term}
    \label{error-term}
\end{subfigure}
    \caption{  Validity of the estimation \eqref{L(z)-approx}. Plots are average of 100 experiments and $95\%$ confidence interval is also depicted. The signal matrix $\bS \in \bR^{N \times M}$ has i.i.d. Gaussian entries of variance $\nicefrac{1}{N}$, and $M = \nicefrac{N}{\alpha_0}$. the expressions are evaluated for $z = 1 + \nicefrac{\ci}{\sqrt{N}}$. In the left panel, the relative error \eqref{rel-error-tr-relation} is plotted for various values of $N$. On the same simulations, the error term is plotted in the right panel which behaves as $N^{-\nicefrac{1}{2}}$ which matches with the conjecture of remark \ref{error-term-imz}.}
    \label{fig:Thm-check}
\end{figure}
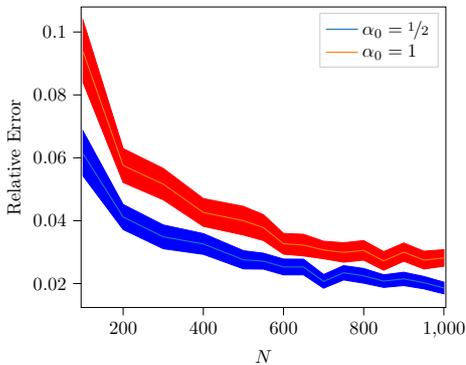
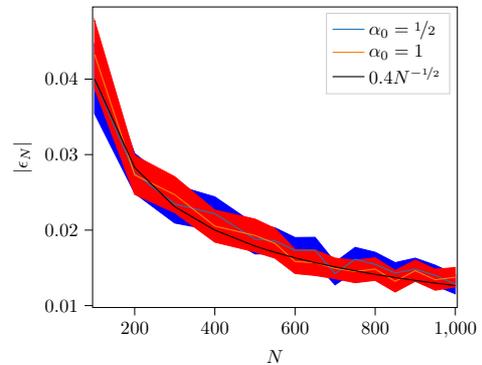

\subsubsection{Gaussian Signal}
If we consider the signal matrix to have i.i.d. Gaussian entries of variance $\nicefrac{1}{N}$, then each entry of $\bY$ can be viewed as an independent scalar AWGN channel. For this scalar channel, the MMSE equals $\frac{1}{N} \frac{1}{1 + \lambda}$ \cite{guo2005mutual}. Therefore, the (normalized) MMSE of the matrix problem is $\frac{M}{N} \frac{1}{1 + \lambda}\to \frac{1}{\alpha} \frac{1}{1 + \lambda}$ for $N \to \infty$. As a sanity check of Statement \ref{Gaus-MMSE-statement}, using the fact that $\mu_Y$ is the Marchenko-Pastur (MP) law rescaled with $\sqrt{\lambda + 1}$, we compute the MMSE analytically (with the help of \textit{Mathematica} \cite{Mathematica}) and find it equal to $\frac{1}{\alpha} \frac{1}{1 + \lambda}$. 
In Fig. \ref{fig:Gsign-Gnois}, MSE of RIE is compared to the theoretical MMSE for $\alpha = 1, \alpha = \nicefrac{1}{2}$. Note that, for this example, we use the RIE \eqref{rect-RIE}, and the Hilbert transform used in RIE is the exact Hilbert transform of the symmetrization of MP law rescaled with $\sqrt{\lambda + 1}$, which is $\sH_{\bar{\mu}_Y}(x) = \frac{x}{2 + 2 \lambda} - \frac{1-\alpha}{2 \alpha x}$.

\begin{figure}
\centering
\begin{minipage}[t]{.48\textwidth}
  \centering
  \input{Figures/G-G-MSE}
  \caption{Performance of the RIE \eqref{rect-RIE} for the Gaussian signal and noise. Signal and noise matrices $\bS, \bZ \in \bR^{N \times M}$ have i.i.d. Gaussian entries of variance $\nicefrac{1}{N}$. The MMSE is plotted for two aspect ratios $\alpha = 1, \nicefrac{1}{2}$, and the RIE \eqref{rect-RIE} is applied to $N = 1000$. Results are averaged over 10 runs.}
  \label{fig:Gsign-Gnois}
\end{minipage}%
\hfill
\begin{minipage}[t]{.48\textwidth}
  \centering
\begin{tikzpicture}[scale = 0.7]

\definecolor{darkgray176}{RGB}{176,176,176}
\definecolor{lightslategray}{RGB}{119,136,153}
\definecolor{royalblue}{RGB}{65,105,225}

\begin{axis}[
tick align=outside,
tick pos=left,
x grid style={darkgray176},
xlabel={$\epsilon$},
xmin=0.06, xmax=0.94,
xtick style={color=black},
y grid style={darkgray176},
ylabel={RIE-MSE - Oracle-MSE},
scaled y ticks=false,
yticklabel style={
  /pgf/number format/precision=3,
  /pgf/number format/fixed},
ymin=-0.000635500903793554, ymax=0.0272337103259102,
ytick style={color=black}
]
\path [draw=lightslategray, semithick]
(axis cs:0.1,0.0255622413762111)
--(axis cs:0.1,0.0259669279972873);

\path [draw=lightslategray, semithick]
(axis cs:0.2,0.00838840890779667)
--(axis cs:0.2,0.00867243956659257);

\path [draw=lightslategray, semithick]
(axis cs:0.3,0.00274818097981918)
--(axis cs:0.3,0.00289834183909538);

\path [draw=lightslategray, semithick]
(axis cs:0.4,0.000980799474877376)
--(axis cs:0.4,0.00108894189386074);

\path [draw=lightslategray, semithick]
(axis cs:0.5,0.000631281424829345)
--(axis cs:0.5,0.000682622071804239);

\path [draw=lightslategray, semithick]
(axis cs:0.6,0.000819266922011323)
--(axis cs:0.6,0.000922408151656081);

\path [draw=lightslategray, semithick]
(axis cs:0.7,0.00217492200526696)
--(axis cs:0.7,0.00257473153812355);

\path [draw=lightslategray, semithick]
(axis cs:0.8,0.00679902858218962)
--(axis cs:0.8,0.0077849741620709);

\path [draw=lightslategray, semithick]
(axis cs:0.9,0.0218470244135023)
--(axis cs:0.9,0.0240199954916553);

\addplot [semithick, royalblue, mark=triangle*, mark size=3, mark options={solid}, only marks]
table {%
0.1 0.0257645846867492
0.2 0.00853042423719462
0.3 0.00282326140945728
0.4 0.00103487068436906
0.5 0.000656951748316792
0.6 0.000870837536833702
0.7 0.00237482677169526
0.8 0.00729200137213026
0.9 0.0229335099525788
};
\end{axis}

\end{tikzpicture}
  \caption{Performance of the RIE using \eqref{optimal-sv-approx-alg} for the Gaussian signal. The formula \eqref{optimal-sv-approx-alg} is used to estimate the optimal singular values with $z = \gamma_j + \ci N^{-\epsilon}$. RIE is applied to $N = 1000, M=2000, \lambda = 2$, and results are averaged over 10 runs.}
  \label{fig:imz}
\end{minipage}
\end{figure}

In Fig. \ref{fig:imz}, we investigate the performance of the RIE using the relation \eqref{optimal-sv-approx-alg} for various values of the imaginary part of $z$. In this plot, the difference of the MSEs of the RIE and the oracle estimator for the Gaussian signal and noise matrices is depicted. The RIE is applied with $z = \gamma_j + \ci N^{-\epsilon}$. A few remarks about this plot are in order. First, it supports the conjecture stated in remark \ref{error-term-imz} that the error term in \eqref{trace-relation} is controlled by $\big( N | \im z | \big)^{-1}$. Moreover, we can see that as the imaginary part of $z$ increases ($\epsilon$ decreases) the difference increases. For this regime, the error term in \eqref{trace-relation} becomes small, however the approximation \eqref{optimal-sv-approx-alg} of the exact formula \eqref{exact-RND} is inaccurate. On the other hand, for $z$ with small imaginary part, this approximation is more accurate, but the error of the estimation in \eqref{trace-relation} becomes large. 

\subsubsection{Signal with sparse spectrum}
The signal matrix $\bS \in \bR^{N \times M}$ is constructed as $\bS = \bU [ {\rm diag}(\sigma_1, \cdots, \sigma_N), \mathbf{0}_{M-N} ] \bV^\intercal$, where $\bU \in \bR^{N \times N}, \bV \in \bR^{M \times M}$ are independent Haar distributed matrices, and the singular values $\sigma_1, \cdots, \sigma_N$ are independent Bernoulli random variables, $\mu_S = p \delta_0 + (1-p) \delta_{+1}$ for $ 0 \leq p \leq 1$. In Fig. \ref{fig:S-MSE} MSE of RIE is compared to the MSE of oracle estimator with $\alpha = \nicefrac{1}{2}$ for $p=0.2,0.9$. We observe that, in the high-sparsity regime $p=0.9$, the model behaves like finite-rank signal and the MSE is close to the rank-one MMSE computed in \cite{miolane2017fundamental}.

\begin{figure}
  \begin{minipage}[c]{0.6\textwidth}
    \centering
    \input{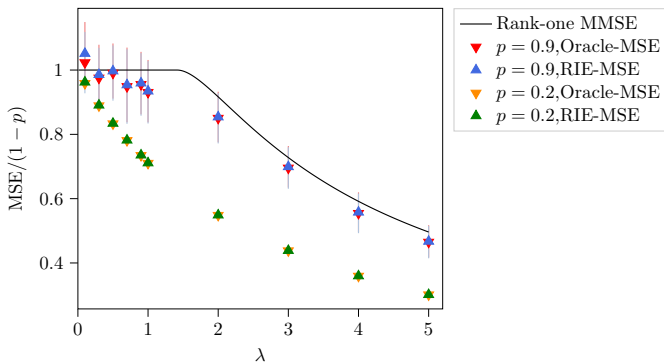}
  \end{minipage}\hfill
  \begin{minipage}[c]{0.35\textwidth}
    \caption{   Signal with Bernoulli spectrum. MSE is normalized by the norm of the signal, $1-p$. The RIE is applied to $N=1000, M =2000$, and the results are averaged over 10 runs (error bars might be invisible).}
    \label{fig:S-MSE}
  \end{minipage}
\end{figure}

    

\subsection{Non-rotational invariant signal distribution}
We consider $\bS$ to have i.i.d. entries from the Bernoulli-Rademacher distribution,
\begin{equation*}
    S_{i,j} = \begin{cases}
        +\frac{1}{\sqrt{N}} &\text{with probability } \frac{1-p}{2} \\
        0 &\text{with probability } p \\
        -\frac{1}{\sqrt{N}} &\text{with probability } \frac{1-p}{2} 
    \end{cases}, \quad \quad \forall \quad 1\leq i \leq N, \quad 1\leq j \leq M
\end{equation*}
With normalization $\nicefrac{1}{\sqrt{N}}$, the spectrum of $\bS$ does not grow with the dimension and has a finite support, thus we can apply our estimator to reconstruct $\bS$. \textit{Note that the prior of $\bS$ is not rotationally invariant, and neither the oracle estimator nor the RIE are optimal}. In Fig. \ref{fig:nri-signal}, the performance of the RIE is compared with the oracle estimator for two cases of noise priors. Note that under Gaussian noise, the MMSE can be computed simply by considering the MMSE of scalar channel, and the MMSE is also plotted. We can see that RIE, although it is sub-optimal, can give a non-trivial estimate of the signal for non-rotationally invariant priors. 

\begin{figure}
\centering
\begin{subfigure}[t]{.58\textwidth}
    \centering
    \captionsetup{justification=centering}
    \input{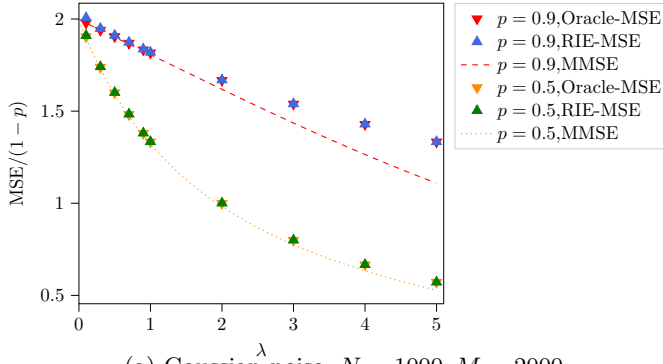}
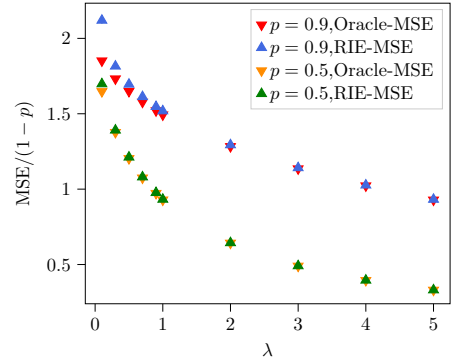
    \vspace{-9pt} 
   \caption{Gaussian noise, $ N =1000, M = 2000$}
    \label{fig:nri-signal-Gnoise}
\end{subfigure}
\hfill
\begin{subfigure}[t]{.41\textwidth}
    \centering
    \captionsetup{justification=centering}
\begin{tikzpicture}[scale=0.7]

\definecolor{darkgray176}{RGB}{176,176,176}
\definecolor{darkorange}{RGB}{255,140,0}
\definecolor{green}{RGB}{0,128,0}
\definecolor{lightcoral}{RGB}{240,128,128}
\definecolor{lightgreen}{RGB}{144,238,144}
\definecolor{lightslategray}{RGB}{119,136,153}
\definecolor{moccasin}{RGB}{255,228,181}
\definecolor{royalblue}{RGB}{65,105,225}

\begin{axis}[
legend cell align={left},
legend style={fill opacity=0.8, draw opacity=1, text opacity=1, draw=white!80!black},
tick align=outside,
tick pos=left,
x grid style={darkgray176},
xlabel={$\lambda$},
xmin=-0.145, xmax=5.245,
xtick style={color=black},
y grid style={darkgray176},
ylabel={${\rm MSE}/(1-p)$},
ymin=0.239791347727806, ymax=2.2333098180409,
ytick style={color=black}
]
\path [draw=lightcoral, semithick]
(axis cs:0.1,1.84806313807419)
--(axis cs:0.1,1.85466340159406);

\path [draw=lightcoral, semithick]
(axis cs:0.3,1.72930978403531)
--(axis cs:0.3,1.73504488348262);

\path [draw=lightcoral, semithick]
(axis cs:0.5,1.64469170423863)
--(axis cs:0.5,1.65326703419909);

\path [draw=lightcoral, semithick]
(axis cs:0.7,1.57432986339604)
--(axis cs:0.7,1.58106758885079);

\path [draw=lightcoral, semithick]
(axis cs:0.9,1.51865057196703)
--(axis cs:0.9,1.52212822914458);

\path [draw=lightcoral, semithick]
(axis cs:1,1.49053246258225)
--(axis cs:1,1.49566115300128);

\path [draw=lightcoral, semithick]
(axis cs:2,1.28126585966444)
--(axis cs:2,1.28438483012243);

\path [draw=lightcoral, semithick]
(axis cs:3,1.13423668731568)
--(axis cs:3,1.13660314679699);

\path [draw=lightcoral, semithick]
(axis cs:4,1.0208473796835)
--(axis cs:4,1.02399544422728);

\path [draw=lightcoral, semithick]
(axis cs:5,0.927320354036782)
--(axis cs:5,0.929562655863614);

\path [draw=lightslategray, semithick]
(axis cs:0.1,2.0960909069633)
--(axis cs:0.1,2.14269534211758);

\path [draw=lightslategray, semithick]
(axis cs:0.3,1.8025372810316)
--(axis cs:0.3,1.82788447650402);

\path [draw=lightslategray, semithick]
(axis cs:0.5,1.68784257040138)
--(axis cs:0.5,1.70260045568701);

\path [draw=lightslategray, semithick]
(axis cs:0.7,1.60599394704003)
--(axis cs:0.7,1.61962989287663);

\path [draw=lightslategray, semithick]
(axis cs:0.9,1.54353206943004)
--(axis cs:0.9,1.55053071597202);

\path [draw=lightslategray, semithick]
(axis cs:1,1.51387346247048)
--(axis cs:1,1.5228937241887);

\path [draw=lightslategray, semithick]
(axis cs:2,1.29080631992144)
--(axis cs:2,1.29573060093575);

\path [draw=lightslategray, semithick]
(axis cs:3,1.13997303744835)
--(axis cs:3,1.14268380878163);

\path [draw=lightslategray, semithick]
(axis cs:4,1.02486311791412)
--(axis cs:4,1.02785579866996);

\path [draw=lightslategray, semithick]
(axis cs:5,0.930660640349758)
--(axis cs:5,0.932889603142321);

\path [draw=moccasin, semithick]
(axis cs:0.1,1.64648900216539)
--(axis cs:0.1,1.6493200434523);

\path [draw=moccasin, semithick]
(axis cs:0.3,1.37561288936589)
--(axis cs:0.3,1.3779598069402);

\path [draw=moccasin, semithick]
(axis cs:0.5,1.20236741809203)
--(axis cs:0.5,1.20443373137243);

\path [draw=moccasin, semithick]
(axis cs:0.7,1.07371352337406)
--(axis cs:0.7,1.07562669461878);

\path [draw=moccasin, semithick]
(axis cs:0.9,0.971952946688706)
--(axis cs:0.9,0.973230065728519);

\path [draw=moccasin, semithick]
(axis cs:1,0.927712928050845)
--(axis cs:1,0.930098202443048);

\path [draw=moccasin, semithick]
(axis cs:2,0.642181383202648)
--(axis cs:2,0.643089093915395);

\path [draw=moccasin, semithick]
(axis cs:3,0.489724615745401)
--(axis cs:3,0.490542934792888);

\path [draw=moccasin, semithick]
(axis cs:4,0.394822235552243)
--(axis cs:4,0.395420145097881);

\path [draw=moccasin, semithick]
(axis cs:5,0.330405823651129)
--(axis cs:5,0.330829230280066);

\path [draw=lightgreen, semithick]
(axis cs:0.1,1.69173935441155)
--(axis cs:0.1,1.70358856636673);

\path [draw=lightgreen, semithick]
(axis cs:0.3,1.38939448813347)
--(axis cs:0.3,1.39231260307877);

\path [draw=lightgreen, semithick]
(axis cs:0.5,1.21071608620336)
--(axis cs:0.5,1.21268044198642);

\path [draw=lightgreen, semithick]
(axis cs:0.7,1.07900109638142)
--(axis cs:0.7,1.08122054753846);

\path [draw=lightgreen, semithick]
(axis cs:0.9,0.975426398081232)
--(axis cs:0.9,0.977320388496806);

\path [draw=lightgreen, semithick]
(axis cs:1,0.930903480658417)
--(axis cs:1,0.93340313413334);

\path [draw=lightgreen, semithick]
(axis cs:2,0.643403642367753)
--(axis cs:2,0.64428870992009);

\path [draw=lightgreen, semithick]
(axis cs:3,0.490410319351976)
--(axis cs:3,0.491306255637673);

\path [draw=lightgreen, semithick]
(axis cs:4,0.395321971499603)
--(axis cs:4,0.395910103292868);

\path [draw=lightgreen, semithick]
(axis cs:5,0.330733969032234)
--(axis cs:5,0.331175668501517);

\addplot [semithick, red, mark=triangle*, mark size=3, mark options={solid,rotate=180}, only marks]
table {%
0.1 1.85136326983413
0.3 1.73217733375896
0.5 1.64897936921886
0.7 1.57769872612341
0.9 1.52038940055581
1 1.49309680779176
2 1.28282534489343
3 1.13541991705634
4 1.02242141195539
5 0.928441504950198
};
\addlegendentry{$p=0.9$,Oracle-MSE}
\addplot [semithick, royalblue, mark=triangle*, mark size=3, mark options={solid}, only marks]
table {%
0.1 2.11939312454044
0.3 1.81521087876781
0.5 1.6952215130442
0.7 1.61281191995833
0.9 1.54703139270103
1 1.51838359332959
2 1.29326846042859
3 1.14132842311499
4 1.02635945829204
5 0.93177512174604
};
\addlegendentry{$p=0.9$,RIE-MSE}
\addplot [semithick, darkorange, mark=triangle*, mark size=3, mark options={solid,rotate=180}, only marks]
table {%
0.1 1.64790452280884
0.3 1.37678634815305
0.5 1.20340057473223
0.7 1.07467010899642
0.9 0.972591506208612
1 0.928905565246946
2 0.642635238559021
3 0.490133775269144
4 0.395121190325062
5 0.330617526965597
};
\addlegendentry{$p=0.5$,Oracle-MSE}
\addplot [semithick, green, mark=triangle*, mark size=3, mark options={solid}, only marks]
table {%
0.1 1.69766396038914
0.3 1.39085354560612
0.5 1.21169826409489
0.7 1.08011082195994
0.9 0.976373393289019
1 0.932153307395879
2 0.643846176143922
3 0.490858287494824
4 0.395616037396235
5 0.330954818766875
};
\addlegendentry{$p=0.5$,RIE-MSE}
\end{axis}

\end{tikzpicture}
    \caption{  Sum of rank-one factors noise matrix,\\ $N=1000, M=2000, c = 1$}
    \label{fig:nri-signal-Poisson-noise}
\end{subfigure}
    \caption{  Performance of the  RIE ( \eqref{rect-RIE} and \eqref{optimal-sv-approx-alg})  and oracle estimator for non-rotational invariant signal. Signal matrix $\bS \in \bR^{N \times M}$ has i.i.d. Bernoulli-Rademacher entries (divided by $\nicefrac{1}{\sqrt{N}}$. Results are averaged over 10 runs (error bars are invisible).}
    \label{fig:nri-signal}
\end{figure}

\section{Analytical Derivations and Proofs}\label{Proof-details}
\subsection{Derivation sketch of the rectangular explicit RIE}\label{RIE-derivation-sec}
Let the SVD of the signal be $ \bS = \sum_{k=1}^N \sigma_k \bs_k^{(l)} {\bs_k^{(r)}}^\intercal $ where $\bs_k^{(r)}$/$\bs_k^{(l)}$ is the right/left singular vector of $\bS$ corresponding to the $k$-th singular value $\sigma_k$. From \eqref{Oracle-sv}, the optimal singular values of the oracle RIE can be written as:
\begin{equation*}
    \xi_j^*  = \bu_j^\intercal \bS \bv_j =  \sum_{k=1}^N \sigma_k \big(  \bu_j^\intercal \bs_k^{(l)}\big) \big(  \bv_j^\intercal \bs_k^{(r)} \big)
\end{equation*}
The main assumption is that in the large-$N$ limit, $\xi_j^*$'s can be approximated by the expectation, $\widehat{\xi_i^*} = \sum_{j=1}^N \sigma_j \Big \langle \big(  \bu_j^\intercal \bs_k^{(l)}\big) \big(  \bv_j^\intercal \bs_k^{(r)} \big) \Big \rangle$, where the expectation $\langle -\rangle$  is over the singular vectors of the observation $\bY$. 

Therefore, to compute the optimal singular vales, we need to find the overlap $\Big \langle \big(  \bu_j^\intercal \bs_k^{(l)}\big) \big(  \bv_j^\intercal \bs_k^{(r)} \big) \Big \rangle$ between singular vectors of $\bS$ and singular vectors of $\bY$. In what follows, we will see that (a rescaling of) this quantity can be expressed in terms of $j$-th singular value of $\bY$ and $k$-th singular value of $\bS$ and the limiting measures, indeed. Thus, we will use the notation $O(\gamma_j, \sigma_k) = N \Big \langle \big(  \bu_j^\intercal \bs_k^{(l)}\big) \big(  \bv_j^\intercal \bs_k^{(r)} \big) \Big \rangle$ in the following and write
\begin{equation}
    \widehat{\xi_j^*} = \frac{1}{N} \sum_{k =1}^N \sigma_k O(\gamma_j, \sigma_k)
\end{equation}
In the next section, we discuss how this overlap can be computed from the resolvent of the "Hermitized" version of $\bY$.

\subsubsection{Relation Between Overlap and the Resolvent}
Construct the symmetric matrix $\cY \in \bR^{(N+M) \times (N+M)}$ from the matrix $\bY$,
\begin{equation}\label{hermitization}
    \cY = \left[
\begin{array}{cc}
\mathbf{0}_{N\times N} & \bY \\
\bY^\intercal  & \mathbf{0}_{M\times M}
\end{array}
\right]
\end{equation}
By Theorem 7.3.3 in \cite{horn2012matrix}, the eigen-decomposition of $\cY$ reads: 
\begin{equation}
    \cY = \bW \left[
\begin{array}{ccc}
\rm{diag}(\gamma_1, \cdots, \gamma_N ) & \mathbf{0} & \mathbf{0}\\
\mathbf{0} & -\rm{diag}(\gamma_1, \cdots, \gamma_N ) &  \mathbf{0}\\
\mathbf{0} & \mathbf{0} & \mathbf{0} 
\end{array}
\right]  \bW^\intercal, \quad \bW = \left[
\begin{array}{ccc}
\hat{\bU}_Y & \hat{\bU}_Y &  \mathbf{0}_{N \times (M-N)} \\
\hat{\bV}_Y^{(1)} & -\hat{\bV}_Y^{(1)} &  \bV_Y^{(2)}
\end{array}
\right]
\label{eigen-cY}
\end{equation}
with $\bV_Y = \left[
\begin{array}{cc}
\bV_Y^{(1)} & \bV_Y^{(2)}
\end{array}
\right]$ in which $\bV_Y^{(1)} \in \bR^{M \times N}$, and $\hat{\bV}_Y^{(1)} = \frac{1}{\sqrt{2}} \bV_Y^{(1)}$, $\hat{\bU}_Y= \frac{1}{\sqrt{2}} \bU_Y$. Denote the eigenvectors of $\cY$ by $\bw_i \in \bR^{M+N}$, $i = 1, \dots,M+N$.
Define the resolvent of $\cY$ 
\begin{equation*}
    \bG_{\mathcal{Y}}(z) = \left[
\begin{array}{c}
z \bI - \cY 
\end{array}
\right]^{-1}
\end{equation*}
For $z = x - \ci \eta $, we have:
\begin{equation*}
     \bG_{\mathcal{Y}}(x- \ci \eta) = \sum_{k=1}^{2N} \frac{x + \ci \eta }{(x - \tilde{\gamma}_k)^2+\eta^2} \bw_k \bw^\intercal _k + \frac{x + \ci \eta}{x^2 + \eta^2} \sum_{k=2N+1}^{M+N} \bw_k \bw_k^\intercal
\end{equation*}
where $\tilde{\gamma}_k$ are the non-trivially zero eigenvalues of $\cY$, which are in fact the (signed) singular values of $\bY$, $\tilde{\gamma}_1 = \gamma_1, \hdots, \tilde{\gamma}_N=\gamma_N, \tilde{\gamma}_{N+1}= - \gamma_1, \hdots, \tilde{\gamma}_{2N}=-\gamma_N$.
Define set of vectors $\br_i, \bl_i \in \bR^{N+M}$ for $i=1,\dots,N$ as:
\begin{equation*}
    \br_i = \left[
\begin{array}{c}
\mathbf{0}_N \\
\bs_i^{(r)}
\end{array}
\right] \hspace{1cm}
\bl_i =\left[
\begin{array}{c}
\bs_i^{(l)} \\
\mathbf{0}_M
\end{array}
\right]
\end{equation*}
We have
\begin{equation}\label{RIE-deriv-1}
\begin{split}
    \br_j^\intercal  \big( {\rm Im}\, \bG_{\mathcal{Y}}(x - \ci \eta) \big) \bl_j &= \sum_{k=1}^{2N} \frac{\eta }{(x - \tilde{\gamma}_k)^2+\eta^2} (\br_j^\intercal \bw_k) (\bw^\intercal _k \bl_j)  + \frac{x + \ci \eta}{x^2 + \eta^2} \sum_{k=2N+1}^{M+N} (\br_j^\intercal \bw_k) (\bw^\intercal _k \bl_j) 
\end{split}
\end{equation}
Given the structure of $\bw_k$'s in \eqref{eigen-cY}:
\begin{equation*}
    (\br_j^\intercal \bw_k) (\bw^\intercal _k \bl_j) = \begin{cases}
        \frac{1}{2} \big(  \bu_k^\intercal \bs_j^{(l)}\big) \big(  \bv_k^\intercal \bs_j^{(r)} \big) \quad &{\rm for} \, \, 1 \leq k \leq N \\
        -\frac{1}{2} \big(  \bu_k^\intercal \bs_j^{(l)}\big) \big(  \bv_k^\intercal \bs_j^{(r)} \big) \quad &{\rm for} \, \, N + 1 \leq k \leq 2 N\\
        0 &{\rm for} \, \, 2 N + 1 \leq k \leq M + N
    \end{cases}
\end{equation*}
Taking an average over singular vectors of $\bY$ in \eqref{RIE-deriv-1}, we find:
\begin{equation}\label{RIE-deriv-2}
\begin{split}
    \br_j^\intercal  \big\langle {\rm Im}\, \bG_{\mathcal{Y}}(x - \ci \eta) \big\rangle \bl_j &= \frac{1}{N} \sum_{k=1}^{2N} \frac{\eta }{(x - \tilde{\gamma}_k)^2+\eta^2}  (-1)^{\mathbb{I}(k>N)} O(\gamma_k, \sigma_j) 
\end{split}
\end{equation}
Now, taking the limit $N \to \infty$, we obtain:
\begin{equation*}
    \br_j^\intercal  \big( {\rm Im}\, \bG_{\mathcal{Y}}(x - \ci \eta) \big) \bl_j \xrightarrow[]{N \to \infty} \int_\bR \frac{\eta}{(x - t)^2+ \eta^2} O(t,\sigma_j) \bar{\mu}_{Y}(t) \, dt
\end{equation*}
where $O(t,\sigma_j)$ is extended (continuously) to arbitrary values inside the support of $\bar{\mu}_Y$ (the symmetrized limiting singular value distribution of $\bY$) with the property that $O(-t, \sigma_j) = - O(t, \sigma_j)$. Sending $\eta \to 0$, we find the following formula valid in the large N limit:
\begin{equation}
    \br_j^\intercal  \big\langle {\rm Im}\, \bG_{\mathcal{Y}}(x - \ci \eta) \big\rangle  \bl_j \approx \pi \bar{\mu}_{Y}(x) O(x, \sigma_j) 
    \label{resolvent-overlap}
\end{equation}
Eq. \eqref{resolvent-overlap} is important because it enables us to investigate the overlap through the resolvent of $\cY$. In the next section, we derive a relation between this resolvent and the signal $\bS$ which will allow us to find a formula for the optimal singular values $\xi^*_i$'s in terms of the singular values of the observation matrix $\bY$. 

\subsubsection{Resolvent Relation}
To derive a resolvent relation between te observation and the signal, we consider the model
\begin{equation}
   \bY = \bS + \bU \bZ \bV^\intercal 
   \label{RIE-observ-model}
\end{equation}
with $\bZ$ a fixed matrix with limiting singular value distribution $\mu_Z$, and $\bU \in \bR^{N \times N}, \bV \in \bR^{M \times M}$ random orthogonal matrices. Note that for convenience the SNR parameter has been absorbed into $\bS$, so to obtain the estimator for model \eqref{model}, this estimator should be divided by $\sqrt{\lambda}$ eventually.

In Appendix \ref{RIE-der}, we derive the relation \eqref{resolvent-relation} for the resolvent $\bG_{\mathcal{Y}}(z)$, in which $\langle . \rangle$ is the expectation w.r.t. the singular vectors of $\bY$, and $\bG_{S^\intercal  S}$ is the resolvent matrix of $\bS^\intercal  \bS$.

\begin{equation}
\begin{split}
\langle \bG_{\mathcal{Y}}(z) \rangle &= \Bigg \langle \left[
\begin{array}{cc}
z^{-1} \bI_N + z^{-1} \bY \bG_{Y^\intercal Y}(z^2) \bY^\intercal  & \bY \bG_{Y^\intercal Y}(z^2) \\
\bG_{Y^\intercal Y}(z^2) \bY^\intercal  & z \bG_{Y^\intercal Y}(z^2)
\end{array}
\right] \Bigg \rangle\\
&\hspace{-9pt}\approx\left[
\begin{array}{cc}
(z-\zeta_1^*)^{-1} \bI_N + (z-\zeta_1^*)^{-1} \bS \bG_{S^\intercal  S} \big((z-\zeta_2^*)(z - \zeta_1^*)\big) \bS^\intercal  &   \bS \bG_{S^\intercal  S} \big((z-\zeta_2^*)(z - \zeta_1^*)\big)  \\
\bG_{S^\intercal  S}\big((z-\zeta_2^*)(z - \zeta_1^*)\big) \bS^\intercal  & (z - \zeta_1^*) \bG_{S^\intercal  S} \big((z-\zeta_2^*)(z - \zeta_1^*)\big)
\end{array}
\right]
\end{split}
\label{resolvent-relation}
\end{equation}
with
\begin{equation}
    \zeta^*_a = z\frac{Z(z)}{\mathcal{M}_{\mu_Y} \big( \frac{1}{z^2} \big) +1}, \quad
    \zeta^*_b  = \alpha z\frac{Z(z)}{\alpha \mathcal{M}_{\mu_Y}  \big( \frac{1}{z^2} \big) +1}, \quad \quad {\rm and} \quad
    Z(z) = \mathcal{C}^{(\alpha)}_{\mu_Z}\bigg(\frac{1}{z^2} T^{(\alpha)} \Big( \mathcal{M}_{\mu_Y}  \big( \frac{1}{z^2} \big) \Big)\bigg)
    \label{zeta_sol}
\end{equation}

As a sanity check, by considering the normalized trace of the first block on both sides of \eqref{resolvent-relation}, one can recover the free rectangular addition formula $\mathcal{C}^{(\alpha)}_{\mu_S}(u) + \mathcal{C}^{(\alpha)}_{\mu_Z}(u) =  \mathcal{C}^{(\alpha)}_{\mu_Y}(u) $ for $u = \frac{1}{z^2} T^{(\alpha)} \Big( \mathcal{M}_{\mu_Y}  \big( \frac{1}{z^2} \big) \Big)$ (see Appendix \ref{free-add-conv}).

\subsubsection{Overlap and Optimal Singular Values}
From the lower-left block of \eqref{resolvent-relation}, we get:
\begin{equation*}
\begin{split}
    \br_j^\intercal  \, \langle \bG_{\mathcal{Y}}(z) \rangle \, \bl_j &= {\bs_j^{(r)}}^\intercal  \bG_{S^\intercal  S}\big((z-\zeta_2^*)(z - \zeta_1^*)\big) \bS^\intercal  \bs_j^{(l)} \\
&= \frac{\sigma_j}{(z-\zeta_2^*)(z - \zeta_1^*) - {\sigma_j}^2}
\end{split}
\end{equation*}
and using \eqref{resolvent-overlap}, we find:
\begin{equation}
    O(\gamma, \sigma) \approx  \frac{1}{\pi \bar{\mu}_{Y}(\gamma)}  \lim_{z \to \gamma - \ci 0^+} {\rm Im}\, \frac{\sigma}{(z-\zeta_2^*)(z - \zeta_1^*) - {\sigma}^2}
    \label{Overlap-eq}
\end{equation}
where $\sigma$ is in the support of the limiting singular value distribution of $\bS$, $\mu_{S}$. In Fig. \ref{fig:Overlap} we illustrate on an example that theoretical predictions for the overlaps from \eqref{Overlap-eq} are in good agreement with numerical simulations.

\begin{figure}
  \begin{minipage}[c]{0.5\textwidth}
    \centering
     \input{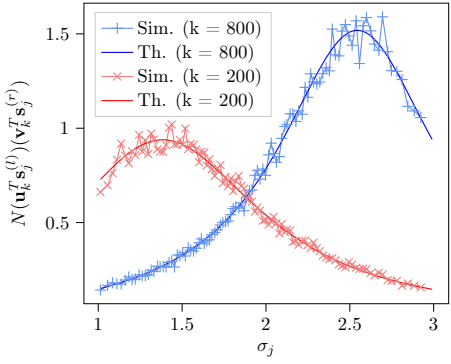}
  \end{minipage}\hfill
  \begin{minipage}[c]{0.5\textwidth}
    \caption{   Computation of the rescaled overlap. Both $\bS$ and $\bZ$ are $N \times M$ matrices with i.i.d. Gaussian entries of variance $1/N$, and $N/M = 0.25$. The simulation results are average of 1000 experiments with fixed $\bS$, and $N = 1000, M =4000$. Some of the simulation points are dropped for clarity.}
   \label{fig:Overlap}
  \end{minipage}
\end{figure}

The optimal estimator for singular values reads:
\begin{equation}\label{optxi}
\begin{split}
    \widehat{\xi_j^*} = \frac{1}{N} \sum_{k=1}^N \sigma_k O(\gamma_j, \sigma_k) &\approx \frac{1}{N \pi \bar{\mu}_{Y}(\gamma_j)} \lim_{z \to \gamma_j - \ci 0^+}  {\rm Im}\,\sum_{k=1}^N \frac{\sigma_k^2}{(z-\zeta_2^*)(z - \zeta_1^*) - \sigma_k^2}  \\
    &=  \frac{1}{N \pi \bar{\mu}_{Y}(\gamma_j)} \lim_{z \to \gamma_j - \ci 0^+}  {\rm Im} \, \Tr \bS \bG_{S^\intercal S}\big((z-\zeta_2^*)(z - \zeta_1^*)\big)\bS^\intercal
\end{split}
\end{equation}
Comparing the left-upper blocks in the first and second lines of \eqref{resolvent-relation} we find
\begin{equation}
\begin{split}
    \bS \bG_{S^\intercal S} &\big((z-\zeta_2^*)(z - \zeta_1^*)\big)\bS^\intercal = \Big\langle - \frac{\zeta_1^*}{z} \bI_N + \big( 1 - \frac{\zeta_1^*}{z} \big) \bY \bG_{Y^\intercal Y}(z^2)\bY^\intercal  \Big \rangle
\end{split}
\label{optimal-sv-eq1}
\end{equation}
The trace of the r.h.s of \eqref{optimal-sv-eq1} is (with multiplication by $1/N$)
\begin{equation*}
    \begin{split}
\frac{1}{N} \sum_{k=1}^N \Big[\frac{{\gamma_k}^2}{z^2 - {\gamma_k}^2 }\big( 1 - \frac{\zeta_1^*}{z} \big) - \frac{\zeta_1^*}{z} \Big] &=- \frac{\zeta_1^*}{z} \frac{1}{N} \sum_{k=1}^N \big[\frac{{\gamma_k}^2}{z^2 - {\gamma_k}^2 }+1 \big] +  \frac{1}{N} \sum_{k=1}^N \frac{{\gamma_k}^2}{z^2 - {\gamma_k}^2 } \\
        &\approx - \zeta_1^* z \mathcal{G}_{\rho_Y}(z^2) +  \mathcal{M}_{\mu_Y}  \big( \frac{1}{z^2} \big) \\
        &= - \zeta_1^* z \mathcal{G}_{\rho_Y}(z^2) +  z^2 \mathcal{G}_{\rho_Y}(z^2) - 1
    \end{split}
\end{equation*}
The last expression on the r.h.s can be expressed in terms of the symmetrized limiting spectral distribution of $\bY$. Indeed if we denote the Stieltjes of $\bar{\mu}_{Y}$ by $\mathcal{G}_{\bar{\mu}_{Y}}(z)$, using the relation $z \mathcal{G}_{\rho_Y}(z^2) = \mathcal{G}_{\bar{\mu}_Y}(z)$, the above trace implies with \eqref{optimal-sv-eq1}:
\begin{equation*}
    \frac{1}{N} \Tr \bS \bG_{S^\intercal S}\big((z-\zeta_2^*)(z - \zeta_1^*)\big)\bS \approx - \zeta_1^* \mathcal{G}_{\bar{\mu}_Y}(z) +  z \mathcal{G}_{\bar{\mu}_Y}(z) - 1
\end{equation*}
Moreover $\zeta_1^*$ in \eqref{zeta_sol} can be written as,
\begin{equation}
    \begin{split}
        \zeta_1^* = \frac{1}{\mathcal{G}_{\bar{\mu}_Y}(z)} \mathcal{C}^{(\alpha)}_{\mu_Z}\bigg(\frac{1}{z}  \mathcal{G}_{\bar{\mu}_Y}(z) \Big(1 - \alpha + \alpha z \mathcal{G}_{\bar{\mu}_Y}(z) \Big) \bigg)
    \end{split}
\end{equation}
Replacing these results in \eqref{optxi} we easily deduce \eqref{optimal-sv-final} for the optimal singular values of the RIE.
\begin{equation}\label{optimal-sv-final} 
    \begin{split}
        \widehat{\xi_j^*} &= \frac{1}{\pi \bar{\mu}_{Y}(\gamma_j)} {\rm Im} \,\Bigg[  \gamma_j \mathcal{G}_{\bar{\mu}_Y}(\gamma_j - \ci 0^+) - \mathcal{C}^{(\alpha)}_{\mu_Z}\bigg(\frac{1}{\gamma_j}  \mathcal{G}_{\bar{\mu}_Y}(\gamma_j - \ci 0^+) \Big(1 - \alpha + \alpha \gamma_j  \mathcal{G}_{\bar{\mu}_Y}(\gamma_j - \ci 0^+) \Big) \bigg) \Bigg] \\
        &=  \gamma_j - \frac{1}{\pi \bar{\mu}_{Y}(\gamma_j)} {\rm Im} \, \mathcal{C}^{(\alpha)}_{\mu_Z}\bigg( \frac{1- \alpha}{\gamma_j} \pi \sH [\bar{\mu}_{Y}](\gamma_j)+ \alpha \big( \pi \sH [\bar{\mu}_{Y}](\gamma_j)\big)^2  - \alpha \big( \pi \bar{\mu}_{Y}(\gamma_j)\big)^2 \\ 
       &\hspace{50pt}+\ci \pi \bar{\mu}_{Y}(\gamma_j) \big(\frac{1-\alpha}{\gamma_j} + 2 \alpha \pi \sH [\bar{\mu}_{Y}](\gamma_j) \big) \bigg)
    \end{split}
\end{equation}



\subsection{Optimality of oracle estimator}\label{optimality-RIE}
In this section, we show that for rotational invariant priors, the posterior mean estimator belongs to the RIE class. We proceed by presenting an equivalent definition of the RIE and then show that posterior mean estimator satisfies this definition.

\begin{lemma}\label{RIE-lemma}
Given the observation matrix $\bY$, let $\Theta_S(\bY)$ be an estimator for $\bS$. Then $\Theta_S(\bY)$ is a RIE if and only if for any  orthogonal matrices $\bU \in \bR^{N \times N}, \bV \in \bR^{M \times M}$:
    \begin{equation}
        \Theta_S( \bU \bY \bV^\intercal) = \bU \Theta_S( \bY ) \bV^\intercal
        \label{RIE-property}
    \end{equation}
\end{lemma}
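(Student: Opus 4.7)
My plan is to verify both implications of the equivalence by exploiting the SVD structure. For the forward direction (RIE $\Rightarrow$ equivariance), I will start from the definition $\Theta_S(\bY) = \sum_{j=1}^N \xi_j(\gamma_1,\dots,\gamma_N)\, \bu_j \bv_j^\intercal$, observe that for orthogonal $\bU,\bV$ the SVD of $\bU\bY\bV^\intercal$ is simply $(\bU\bU_Y)\,\bGam\,(\bV\bV_Y)^\intercal$ (the singular values are unchanged and the singular vectors are rotated by $\bU$ and $\bV$), and then directly compute
\begin{equation*}
\Theta_S(\bU \bY \bV^\intercal) = \sum_{j=1}^N \xi_j(\gamma_1,\dots,\gamma_N)\, (\bU \bu_j)(\bV \bv_j)^\intercal = \bU \Big(\sum_{j=1}^N \xi_j \bu_j \bv_j^\intercal \Big) \bV^\intercal = \bU \Theta_S(\bY) \bV^\intercal.
\end{equation*}

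For the reverse direction (equivariance $\Rightarrow$ RIE), the idea is a stabilizer argument. Applying property \eqref{RIE-property} with the choice $\bU = \bU_Y^\intercal$ and $\bV = \bV_Y^\intercal$ immediately yields $\Theta_S(\bY) = \bU_Y\,\Theta_S(\bGam)\,\bV_Y^\intercal$, so everything reduces to showing that $\bA := \Theta_S(\bGam)$ has the form $\sum_{j=1}^N \xi_j\, \be_j \be_j^\intercal$, where $\be_j$ are the standard basis vectors. To force this, I will exploit the stabilizer of $\bGam$: for any diagonal sign matrix $\bO=\mathrm{diag}(s_1,\dots,s_N)$ with $s_i\in\{\pm 1\}$, and any $\bP$ of the block form $\bP=\mathrm{diag}(s_1,\dots,s_N,\bQ)$ with $\bQ$ orthogonal on the last $M-N$ coordinates, one checks $\bO\bGam\bP^\intercal = \bGam$. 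Property \eqref{RIE-property} then gives $\bA = \bO\bA\bP^\intercal$. Reading this entrywise, $A_{ij} = s_i\,t_j\,A_{ij}$ with $t_j=s_j$ for $j\leq N$ and $t_j = \bQ_{j-N,\cdot}$-contribution for $j>N$; varying the signs and the block $\bQ$ forces $A_{ij}=0$ whenever $i\neq j$ (with $i,j\leq N$) and whenever $j>N$, so $\bA$ is diagonal in the $\be_j$ basis and supported on the first $N$ columns.

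To conclude I still have to handle possibly repeated singular values, where the SVD is not unique and where a well-defined RIE requires $\xi_i=\xi_j$ on multiplicity blocks. For this I will enlarge the stabilizer: on a block where $\gamma$ is constant, any pair of identical orthogonal rotations $(\bO_{\mathrm{blk}},\bO_{\mathrm{blk}})$ acting on the corresponding coordinates leaves $\bGam$ invariant, so the equivariance \eqref{RIE-property} forces the corresponding diagonal block of $\bA$ to commute with every orthogonal matrix, hence to be a scalar multiple of the identity. Substituting back into $\Theta_S(\bY) = \bU_Y \bA \bV_Y^\intercal$ gives exactly the RIE form \eqref{RIE-def}, and this scalar-on-blocks property guarantees the expression is independent of the choice of singular vectors within each multiplicity eigenspace, so $\Theta_S$ is a genuine well-defined RIE.

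The forward direction is essentially a one-line computation; the genuine work is in the reverse direction. The main obstacle I anticipate is bookkeeping the rectangular structure (the extra $M-N$ columns of $\bV_Y$) together with the repeated singular value case: one must choose the stabilizing pairs $(\bO,\bP)$ carefully so that both the off-diagonal entries in the $N\times N$ block \emph{and} the entries in the trailing $N\times(M-N)$ block are killed, and so that multiplicity blocks get the scalar-identity conclusion rather than just a diagonal one.
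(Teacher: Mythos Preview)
Your proof is correct and follows essentially the same route as the paper: the forward direction is the same one-line SVD computation, and the reverse direction is the same stabilizer argument (reduce to $\Theta_S(\bGam)$ via $\bU=\bU_Y^\intercal,\ \bV=\bV_Y^\intercal$, then use sign-flip matrices $\bI_k^-,\bJ_k^-$ fixing $\bGam$ to force diagonality). Your treatment is somewhat more thorough than the paper's: you also invoke the orthogonal freedom on the trailing $M-N$ block and you explicitly address repeated singular values via the block-commutant argument, whereas the paper is content with the sign-flip matrices alone (which already suffice to kill the $N\times(M-N)$ block) and does not discuss multiplicities.
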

\begin{proof}
If $\Theta_S(\bY)$ is a RIE, then this property clearly follows from the definition \eqref{RIE-def}. Let us now show the converse.

Suppose that an estimator $\Theta_S(\bY)$ satisfies \eqref{RIE-property}. First, we show that if the observation matrix is diagonal, then the estimator is also diagonal. Consider the observation matrix to be $\bY^{\rm diag} = \left[
\begin{array}{c|c}
\rm{diag}(y_1, \dots, y_N) & \mathbf{0}_{N \times (M-N)} 
\end{array}
\right]$. Let $\bI_k^- \in \bR^{N \times N}, \bJ_k^- \in \bR^{M \times M}$ be diagonal matrices with diagonal entries all one except the $k$-th entry which is $-1$. Note that for $1 \leq k \leq N$, we have $\bY^{\rm diag} = \bI_k^- \bY^{\rm diag} \bJ_k^-$. Moreover matrices $\bI_k^-, \bJ_k^-$ are orthogonal thus or any $1 \leq k \leq N$, from \eqref{RIE-property} we have:
\begin{equation}
    \Theta_S(\bY^{\rm diag}) = \Theta_S( \bI_k^- \bY^{\rm diag} \bJ_k^- )= \bI_k^-  \Theta_S(\bY^{\rm diag}) \bJ_k^-
\end{equation}
This implies that all entries on the $k$-th row and $k$-th column of $\Theta_S(\bY^{\rm diag})$ is zero except the $k$-th entry on the diagonal. Since this holds for any $k$, we conclude that $\Theta_S(\bY^{\rm diag})$ is diagonal.

Now, for a given general observation matrix $\bY = \bU_Y \bGam \bV_Y^\intercal$, put $\bU = \bU_Y^\intercal, \bV = \bV_Y^\intercal$ in the property \eqref{RIE-property}. We have:
\begin{equation*}
    \Theta_S( \bGam ) = \bU_Y^\intercal \Theta_S( \bY ) \bV_Y
\end{equation*}
From the argument above, the matrix on the l.h.s is diagonal. Consequently, the matrix $\bU_Y^\intercal \Theta_S( \bY ) \bV_Y$ is diagonal which implies that the columns of $\bU_Y, \bV_Y$ are the left and right singular vectors of $\Theta_S( \bY )$. Therefore, $\Theta_S( \bY ) $ is a RIE.
\end{proof}

Now, we prove that the posterior mean estimator $\Theta_S^*(\bY) = \bE[ \bS | \bY ]$ has the property \eqref{RIE-property}, and thus belongs to the RIE class. For simplicity, we drop the SNR factor $\sqrt{\lambda}$. For any orthogonal matrices $\bU \in \bR^{N \times N}, \bV \in \bR^{M \times M}$, we have:
\begin{equation*}
\begin{split}
     \bE[ \bS |  \bU \bY \bV^\intercal] &= \frac{\int  d \tilde{\bS} \, \tilde{\bS} \, P_S(\tilde{\bS}) P_Z(  \bU \bY \bV^\intercal - \tilde{\bS} ) }  {\int  d \tilde{\bS} \, P_S(\tilde{\bS}) P_Z(  \bU \bY \bV^\intercal - \tilde{\bS} )} \\
    &\stackrel{\text{(a)}}{=} \frac{\int \, d \tilde{\bS} \, \bU \tilde{\bS} \bV^\intercal \,  P_S(\tilde{\bS}) P_Z( \bU \bY \bV^\intercal -  \bU \tilde{\bS} \bV^\intercal ) }  {\int \, d \tilde{\bS} \,  P_S(\tilde{\bS}) P_Z( \bU \bY \bV^\intercal -  \bU \tilde{\bS} \bV^\intercal )}  \\
    &\stackrel{\text{(b)}}{=} \bU  \Big\{ \frac{\int  d \tilde{\bS} \, \tilde{\bS}  \, P_S(\tilde{\bS}) P_Z(  \bY -  \tilde{\bS} ) }  {\int  d \tilde{\bS} \, P_S(\tilde{\bS}) P_Z(  \bY -  \tilde{\bS} )} \Big\} \bV^\intercal \\
    &=  \bU \bE[ \bS |\bY]  \bV^\intercal
\end{split}
\end{equation*}
where in (a), we changed variables $\tilde{\bS} \to \bU \tilde{\bS} \bV^\intercal$, used $|\det \bU | = |\det \bV | = 1$, and bi-rotational invariance of $P_S$, $P_S(\tilde{\bS}) = P_S(  \bU \tilde{\bS} \bV^\intercal )$. In (b), we used the bi-rotational invariance property of $P_Z$, namely $P_Z( \bU \bY \bV^\intercal - \bU \tilde{\bS} \bV^\intercal ) = P_Z( \bY - \tilde{\bS} ) $.

\subsection{Proof of Proposition \ref{optimal-singular-values&traces}}\label{Proof-Prop2}
Define the two measures:
\begin{equation*}
    \nu := \frac{1}{2 N} \sum_{j=1}^{N} \bu_j^\intercal \bS \bv_j \big( \delta_{\gamma_j} - \delta_{- \gamma_j} \big), \quad \quad 
    \tau := \frac{1}{2 N} \sum_{j=1}^{N}  \big( \delta_{\gamma_j} + \delta_{- \gamma_j} \big)
\end{equation*}
Using the Stieltjes inversion formula, for any $\epsilon > 0$ such that $[\gamma_j -\epsilon, \gamma_j +\epsilon] \cap \{ \gamma_1, \cdots, \gamma_N \} = \{ \gamma_j\} $, the optimal singular value $\xi^*_j$ can be expressed as:
 \begin{equation}\label{proof-prop-step2}
        \xi_j^* =  \lim_{\eta \to 0} \frac{ \int_{\gamma_j - \epsilon}^{\gamma_j + \epsilon} \im  \mathcal{G}_{\nu} (x + \ci \eta ) \, dx}{\int_{\gamma_j - \epsilon}^{\gamma_j + \epsilon} \im \big\{\mathcal{G}_{\tau} (x + \ci \eta ) \big\} \, dx}
    \end{equation}
with $\mathcal{G}_{\nu}, \mathcal{G}_{\tau}$ the Stieltjes transforms of $\nu, \tau$. The first Stiltjes transform can be written as:
\begin{equation}\label{proof-prop-step3}
    \begin{split}
        \mathcal{G}_{\nu} (z) &= \frac{1}{2 N} \sum_{j=1}^{N} \bu_j^\intercal \bS \bv_j \big( \frac{1}{z - \gamma_j} - \frac{1}{z + \gamma_j} \big) \\
        &= \frac{1}{N} \sum_{j=1}^{N}  \frac{\gamma_j}{z^2 - \gamma_j^2} \bu_j^\intercal \bS \bv_j \\
        &= \frac{1}{N} \sum_{j=1}^{N}  \frac{\gamma_j}{z^2 - \gamma_j^2} \Tr \bv_j \bu_j^\intercal \bS \\
        &= \frac{1}{N} \sum_{j=1}^{N}  \frac{\gamma_j}{z^2 - \gamma_j^2} \Tr  \bS^\intercal \bu_j \bv_j^\intercal \\
        &= \frac{1}{N} \Tr  \bS^\intercal \sum_{j=1}^{N}  \frac{\gamma_j}{z^2 - \gamma_j^2}  \bu_j \bv_j^\intercal \\
        &= \frac{1}{N} \Tr  \bS^\intercal \big( z^2 \bI - \bY \bY^\intercal \big)^{-1} \bY \\
        &= \frac{1}{N} \Tr  \bG_{YY^\intercal}(z^2) \bY \bS^\intercal\\
        &= L (z)
    \end{split}
\end{equation}
Similarly, we get
\begin{equation}\label{proof-prop-step4}
    \mathcal{G}_{\tau} (z) = z G(z)
\end{equation}
Finally \eqref{exact-RND} follows from \eqref{proof-prop-step2}, \eqref{proof-prop-step3}, \eqref{proof-prop-step4}. \hfill $\square$
\subsection{Proof of Theorem \ref{trace-relation}}\label{Proof-Thm2}
For simplicity of notation, we drop the $z$-dependence of the random functions $G(z), L(z)$. Let 
\begin{equation*}
    g := \bE \, G, \quad \quad l := \bE \, L
\end{equation*}
where the expectation is over the noise matrix $\bZ$ in \eqref{model}. We will need the following lemma whose proof is deferred to  subsection \ref{concentrationlemma}.

\begin{lemma}\label{concentration-lemma}
    There is a numerical constant $c>0$ (depending on $K$) such that for any $z \in \mathbb{C} \backslash \bR$ and for any $t>0$, we have:
    \begin{equation*}
        \bP \big( | G - g | \geq t \big) \leq 2 e^{- c \big(t N \big| \im z \big|^3 \big)^2 }
    \end{equation*}
    The same is also true for $L, l$.
\end{lemma}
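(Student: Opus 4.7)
The plan is to invoke Gaussian (Borell--TIS) concentration. Since $\bZ$ has i.i.d.\ Gaussian entries of variance $1/N$, I would rescale by setting $\tilde{\bZ}=\sqrt{N}\,\bZ$ so that $\tilde\bZ$ has standard normal entries: for any real-valued function $f$ of $\tilde\bZ$ with Frobenius-Lipschitz constant $L_f$, one has $\bP(|f-\bE f|\ge t)\le 2 e^{-t^2/(2 L_f^2)}$. Applied separately to the real and imaginary parts of the complex-valued $G$ and $L$, this reduces the task to exhibiting \emph{deterministic} Frobenius-Lipschitz bounds of order $C_K/(N|\im z|^3)$ on $G$ and $L$ as functions of $\tilde\bZ$.

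The first step is to assemble the resolvent estimates that depend only on $|\im z|$. Writing $\bG_1:=\bG_{YY^\intercal}(z^2)$ and $\bG_2:=\bG_{Y^\intercal Y}(z^2)$: since the spectrum of $\bY\bY^\intercal$ lies in $[0,\infty)$, a short geometric calculation gives $\|\bG_1\|_{\mathrm{op}}\le |\im z|^{-2}$. Since the Hermitization $\cY$ of \eqref{hermitization} is symmetric, $\|\bG_\cY(z)\|_{\mathrm{op}}\le |\im z|^{-1}$; reading off the blocks of $\bG_\cY(z)$ (which are $z\bG_1,\,\bY\bG_2,\,\bG_2\bY^\intercal,\,z\bG_2$) yields $\|z\bG_1\|_\mathrm{op},\,\|z\bG_2\|_\mathrm{op},\,\|\bG_1\bY\|_\mathrm{op}=\|\bY\bG_2\|_\mathrm{op}\le |\im z|^{-1}$. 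These, together with $\|\bS\|_\mathrm{op}\le K$ and hence $\|\bS\|_F\le K\sqrt{N}$, are the only inputs needed.

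Next I would compute the gradients. For $G$, a standard resolvent derivative gives $\nabla_\bZ G = (2/N)\,\bG_1^2\bY$, and $\|\bG_1^2\bY\|_F \le \|\bG_1\|_\mathrm{op}\|\bG_1\bY\|_F \le |\im z|^{-2}\cdot \sqrt{N}\,|\im z|^{-1}$ produces a Lipschitz constant in $\tilde\bZ$ of at most $2/(N|\im z|^3)$, closing this case. For $L$, the Leibniz rule gives
\[
\nabla_\bZ(NL) \;=\; \bG_1\bY\bS^\intercal \bG_1\bY \;+\; \bG_1\bS\bY^\intercal \bG_1\bY \;+\; \bG_1\bS,
\]
(using symmetry of $\bG_1$). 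The outer two terms are immediately controlled by the estimates above, each yielding Frobenius norm $\le K\sqrt{N}/|\im z|^2$. The middle term is the main obstacle: a naive bound would retain $\|\bY\|_\mathrm{op}$, which is random and would spoil the clean exponent. My plan is to eliminate the stray $\bY$'s using the algebraic identity $\bY^\intercal \bG_1 \bY = z^2\bG_2 - \bI$ (read off from $\bG_2(z^2\bI-\bY^\intercal\bY)=\bI$), so that
\[
\bG_1\bS\bY^\intercal \bG_1\bY \;=\; (z\bG_1)\,\bS\,(z\bG_2) \;-\; \bG_1\bS,
\]
in which only the deterministic-norm operators $z\bG_1$, $z\bG_2$ remain; its Frobenius norm is then bounded by $\|z\bG_1\|_\mathrm{op}\|\bS\|_\mathrm{op}\|z\bG_2\|_F \le C_K\sqrt{N}/|\im z|^2$. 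Summing all contributions gives $\|\nabla_\bZ L\|_F \le C_K/(\sqrt{N}\,|\im z|^2)$, and in the regime $|\im z|\le 1$ relevant for Theorem~\ref{trace-relation} this is $\le C_K/(\sqrt{N}\,|\im z|^3)$. Rescaling to $\tilde\bZ$ gives a Lipschitz constant $C_K/(N|\im z|^3)$, and Borell--TIS delivers the asserted exponential concentration.

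The principal obstacle, to reiterate, is this middle term of $\nabla_\bZ(NL)$: any bound keeping the random factor $\|\bY\|_\mathrm{op}$ forces a non-deterministic Lipschitz constant and destroys the clean $|\im z|^3$ scaling of the exponent. The identity $\bY^\intercal \bG_1 \bY = z^2\bG_2 - \bI$ is the workhorse that sidesteps this by trading each $\bY$ for a resolvent controlled uniformly via Hermitization.
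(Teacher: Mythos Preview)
Your approach is correct and reaches the same Lipschitz constants as the paper, but by a somewhat different route. You work directly with $\bG_1=\bG_{YY^\intercal}(z^2)$, compute $\nabla_\bZ G$ and $\nabla_\bZ L$ explicitly, and then invoke the algebraic identity $\bY^\intercal\bG_1\bY=z^2\bG_2-\bI$ to eliminate the stray $\bY$ factor in the middle term of $\nabla_\bZ(NL)$. The paper instead lifts everything to the Hermitization from the outset, rewriting $G(z)=\frac{1}{2Nz}\Tr(z\bI-\cY)^{-1}+\text{const}$ and $L(z)=\frac{1}{2N}\Tr(z\bI-\cY)^{-1}\cS$, and then applies the finite-difference resolvent identity $(z\bI-\cY)^{-1}-(z\bI-\tilde{\cY})^{-1}=(z\bI-\cY)^{-1}(\cY-\tilde{\cY})(z\bI-\tilde{\cY})^{-1}$ together with $\|(z\bI-\cY)^{-1}\|_{\rm op}\le|\im z|^{-1}$. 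The payoff of the paper's route is that once $L$ is expressed as a trace against the \emph{linear} Hermitized resolvent, no random $\bY$ factors ever appear and the Lipschitz bound is a two-line estimate; your route is more computational but arrives at the same place via the identity you flagged as the workhorse. Both arguments in fact yield Lipschitz order $|\im z|^{-2}$ for $L$ (sharper than the $|\im z|^{-3}$ written in the lemma), so your restriction to $|\im z|\le 1$ to match the stated exponent mirrors a cosmetic slack already present in the paper's own statement.
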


Consider the decomposition
\begin{equation*}
    L = l + (L - l)
\end{equation*}
from the lemma above, $L-l$ is a sub-Gaussian random variable with sub-Gaussian norm $O\Big(\frac{1}{N \big| \im z \big|^3}\Big)$. Therefore, to prove the Theorem, it suffices to show that
\begin{equation}
    l = g \big( z^2 - \frac{1}{\alpha_0} \big) - g\big(z^2 g - 1) - 1 + O\Big( \frac{1}{N \big| \im z \big|^3} \Big)
\end{equation}
Let $\bG := \bG_{Y Y^\intercal}(z^2)$. We start by expanding the following matrix products:
\begin{equation*}
    \bG \bY \bY^\intercal = \lambda \bG \bS \bS^\intercal + \sqrt{\lambda} \bG \bS \bZ^\intercal + \sqrt{\lambda} \bG \bZ \bS^\intercal + \bG \bZ \bZ^\intercal
\end{equation*}
\begin{equation*}
    \bG \bY \bS^\intercal =  \sqrt{\lambda} \bG \bS \bS^\intercal +  \bG \bZ \bS^\intercal
\end{equation*}
Using the identity $ z^2 \bG - \bI = \bG \bY \bY^\intercal$, we have:
\begin{equation}
    \bG \bY \bS^\intercal = \frac{1}{\sqrt{\lambda}} \Big( z^2 \bG - \bI - \bG \bZ \bZ^\intercal \Big) - \bG \bS \bZ^\intercal 
    \label{expansion of GYS}
\end{equation}
Taking expectation and trace of the both sides:
\begin{equation}
    \bE \, \Tr \bG \bY \bS^\intercal = \frac{1}{\sqrt{\lambda}}  z^2 \bE \, \Tr \bG - \frac{1}{\sqrt{\lambda}}  N -  \bE \, \Tr   \bG \bS \bZ^\intercal - \frac{1}{\sqrt{\lambda}}  \bE  \Tr \,  \bG \bZ \bZ^\intercal 
    \label{expansion of E-Tr-GYS}
\end{equation}
The 
next step is to compute the last two terms in \eqref{expansion of E-Tr-GYS} through a use of gaussian integration by parts. \\

$\bullet$ Expansion of $\bE \, \Tr \bG \bS \bZ^\intercal$:
Using cyclicity of the trace and the fact that $\bG$ is symmetric, we have:
\begin{equation}
    \bE \, \Tr   \bG \bS \bZ^\intercal = \bE \, \Tr   \bG \bZ \bS^\intercal = \sum_{i = 1}^N \sum_{j = 1}^M \sum_{k = 1}^N \bE \, G_{ik} Z_{kj} \, S_{ij}
    \label{expansion of E-Tr-GSZ}
\end{equation}
Gaussian integration by parts yields:
\begin{equation}
    \bE \, G_{ik} Z_{kj} = \frac{1}{N} \bE \Big( \frac{\partial \bG}{\partial Z_{kj}} \Big)_{ik}
    \label{Tr-GSZ-Stein}
\end{equation}
We have:
\begin{equation*}
    \frac{\partial \bG}{\partial Z_{ab}} = - \bG \frac{\partial \big(z^2 \bI - \bY \bY^\intercal \big)}{\partial Z_{ab}} \bG = \bG \big( \bY {\bJ^{ab}}^\intercal + \bJ^{ab} \bY^\intercal \big) \bG 
\end{equation*}
with $\bJ^{ab} \in \bR^{N \times M}, J^{ab}_{ij} = \delta\{ i=a, j=b\}$. Thus, we find:
\begin{equation}
    \begin{split}
        \Big( \frac{\partial \bG}{\partial Z_{kj}} \Big)_{ik} &= \big[ \bG \bY {\bJ^{kj}}^\intercal \bG]_{ik} + \big[ \bG \bJ^{kj} \bY^\intercal \bG]_{ik} \\
        &= \sum_{a,b,c} G_{ia} \, Y_{ab} \, J^{kj}_{cb}\,  G_{ck} + \sum_{a,b,c} G_{ia}\, J^{kj}_{ab}\, Y_{cb} \,G_{ck} \\
        &= \sum_a G_{ia}\, Y_{aj} \,G_{kk} + \sum_c G_{ik}\, Y_{cj}\, G_{ck} \\
        &= G_{kk} \,\big( \bG \bY \big)_{ij} + G_{ik} \,(\bG \bY)_{kj}
    \end{split}
    \label{Tr-GSZ-Stein-second}
\end{equation}
Joining \eqref{Tr-GSZ-Stein-second} with \eqref{Tr-GSZ-Stein}, \eqref{expansion of E-Tr-GSZ} can be written further to be:
\begin{equation}
    \begin{split}
        \bE \, \Tr   \bG \bS \bZ^\intercal &= \frac{1}{N} \sum_{i = 1}^N \sum_{j = 1}^M \sum_{k = 1}^N \Big[ G_{kk} \,\big( \bG \bY \big)_{ij} + G_{ik} \,(\bG \bY)_{kj} \Big] S_{ij} \\
        &= \frac{1}{N} \bE \Big[ \big( \Tr \bG \big) \big( \Tr \bG \bY \bS^\intercal \big) \Big] + \frac{1}{N} \bE \, \Tr \bG \bG \bY \bS^{\intercal} 
    \end{split}
    \label{expansion of E-Tr-GSZ-final}
\end{equation}

$\bullet$ Expansion of $\bE \, \Tr \bG \bZ \bZ^\intercal$:
\begin{equation}
    \bE \, \Tr   \bG \bZ \bZ^\intercal  = \sum_{i = 1}^N \sum_{j = 1}^M \sum_{k = 1}^N \bE \, G_{ik} Z_{kj} \, Z_{ij}
    \label{expansion of E-Tr-GZZ}
\end{equation}
Using again gaussian integration by parts:
\begin{equation}
    \begin{split}
        \bE \, G_{ik} Z_{kj} Z_{ij} &= \frac{1}{N} \bE \, \frac{\partial G_{ik} Z_{ij}}{\partial Z_{kj}}\\
        &= \frac{1}{N} \bE \, Z_{ij} \Big( \frac{\partial \bG}{\partial Z_{kj}} \Big)_{ik} + \delta\{i=k\} \frac{1}{N} \bE \, G_{ik} \\
        &= \frac{1}{N} \bE \, Z_{ij} \Big( G_{kk} \,\big( \bG \bY \big)_{ij} + G_{ik} \,(\bG \bY)_{kj} \Big) + \delta\{i=k\} \frac{1}{N} \bE \, G_{ik}
    \end{split}
    \label{Tr-GZZ-Stein}
\end{equation}
Plugging in \eqref{expansion of E-Tr-GZZ}, we find
\begin{equation}
    \begin{split}
        \bE \, \Tr   \bG \bZ \bZ^\intercal &= \frac{1}{N} \sum_{i = 1}^N \sum_{j = 1}^M \sum_{k = 1}^N \bE \, Z_{ij} G_{kk} \,\big( \bG \bY \big)_{ij} +  \frac{1}{N} \sum_{i = 1}^N \sum_{j = 1}^M \sum_{k = 1}^N \bE \, Z_{ij} G_{ik} \,\big( \bG \bY \big)_{kj} +  \frac{1}{N} \sum_{i,j,k} \delta\{k=i\} \bE \, G_{ik}\\
        &= \frac{1}{N} \bE \Big[ \big( \Tr \bG \big) \big( \Tr \bG \bY \bZ^\intercal \big) \Big] + \frac{1}{N} \bE \, \Tr \bG \bG \bY \bZ^{\intercal} + \frac{M}{N} \bE \, \Tr \bG 
    \end{split}
    \label{expansion of E-Tr-GZZ-final}
\end{equation}

Replacing \eqref{expansion of E-Tr-GSZ-final} and \eqref{expansion of E-Tr-GZZ-final} in \eqref{expansion of E-Tr-GYS}, we find:
\begin{equation}
\begin{split}
    \bE \, \Tr \bG \bY \bS^\intercal &= \frac{1}{\sqrt{\lambda}} z^2 \bE \, \Tr \bG - \frac{1}{N} \bE \Big[ \big( \Tr \bG \big) \big( \Tr \bG \bY \bS^\intercal \big) \Big]  - \frac{1}{\sqrt{\lambda}} \frac{1}{N} \bE \Big[ \big( \Tr \bG \big) \big( \Tr \bG \bY \bZ^\intercal \big) \Big] \\
    &\hspace{1.5cm} - \frac{1}{N} \bE \, \Tr \bG \bG \bY \bS^{\intercal} - \frac{1}{\sqrt{\lambda}} \frac{1}{N} \bE \, \Tr \bG \bG \bY \bZ^{\intercal} - \frac{1}{\sqrt{\lambda}} \frac{M}{N} \bE \, \Tr \bG - \frac{1}{\sqrt{\lambda}} N \\
    &\hspace{-11.5pt}= \frac{1}{\sqrt{\lambda}} \big(z^2 - \frac{1}{\alpha_0} \big) \bE \, \Tr \bG - \frac{1}{\sqrt{\lambda}} \frac{1}{N} \bE \Big[ \big( \Tr \bG \big) \big( \Tr \bG \bY \bY^\intercal \big) \Big] - \frac{1}{\sqrt{\lambda}}\frac{1}{N} \bE \, \Tr \bG \bG \bY \bY^{\intercal} - \frac{1}{\sqrt{\lambda}} N \\
    &\hspace{-11.5pt}= \frac{1}{\sqrt{\lambda}} \big(z^2 - \frac{1}{\alpha_0} \big) \bE \, \Tr \bG - \frac{1}{\sqrt{\lambda}} \frac{1}{N} \bE \Big[ \big( \Tr \bG \big) \big( z^2 \Tr \bG -N \big) \Big] - \frac{1}{\sqrt{\lambda}} \frac{1}{N} \bE \, \Tr \bG \big( z^2  \bG -\bI \big) - \frac{1}{\sqrt{\lambda}} N
\end{split}
\end{equation}
Dividing by $N$ and  rearranging terms we find:
\begin{equation}
    l = \frac{1}{\sqrt{\lambda}} \big(z^2 +1- \frac{1}{\alpha_0} \big) g - \frac{1}{\sqrt{\lambda}} z^2 \bE \, G^2 - 1 - \frac{1}{N^2} z^2 \bE \, \Tr \bG^2 - \frac{1}{N} g
    \label{tr-expected-expa}
\end{equation}
Using lemma \ref{concentration-lemma},
\begin{equation}
\begin{split}
        \bE \, G^2 & = g^2 + \bE \, \big(g-G)^2 + 2 g \bE \, \big(g-G) \\
        &= g^2 + O \Big( \frac{1}{N \big|\im z \big|^3} \Big)
\end{split}
\label{tr-expan-term1}
\end{equation}
and,
\begin{equation}
    \begin{split}
        \big| \frac{1}{N^2} z^2 \bE \, \Tr \bG^2 \big| &\leq \bE \big| \frac{1}{N^2} z^2  \Tr \bG^2 \big|\\
        &\leq \frac{1}{N^2} \bE \big| \sum_{k=1}^N \frac{z^2}{\big( z^2 - \gamma_k^2 \big)^2} \big| \\
        &\leq \frac{1}{N^2} \bE \sum_{k=1}^N   \big| \frac{z^2}{\big( z^2 - \gamma_k^2 \big)^2} \big| \\ 
        &\leq \frac{1}{N^2} \bE \sum_{k=1}^N   \frac{1}{ \big( \im z \big)^2} = O \Big( \frac{1}{N  \big(\im z \big)^2} \Big)
    \end{split}
    \label{tr-expan-term2}
\end{equation}
Similarly, we have that 
\begin{equation}
    \frac{1}{N} g = O \Big( \frac{1}{N  \big(\im z \big)^2} \Big)
    \label{tr-expan-term3}
\end{equation}
Combining \eqref{tr-expected-expa}, \eqref{tr-expan-term1},\eqref{tr-expan-term2}, \eqref{tr-expan-term3}, we obtain the result:
\begin{equation*}
    l = \frac{1}{\sqrt{\lambda}} \Big[ \big(z^2 +1- \frac{1}{\alpha_0} \big) g - z^2 g^2 - 1 \Big] + O \Big( \frac{1}{N  \big|\im z \big|^3} \Big)
\end{equation*}
This completes the proof of Theorem \ref{trace-relation}.

\subsubsection{Proof of Lemma \ref{concentration-lemma}}\label{concentrationlemma}
To prove lemma \ref{concentration-lemma}, we use a Gaussian concentration inequality:
\begin{theorem}[Gaussian concentration inequality]
Let $X = (X_1 , . . . , X_n )$ be a vector of $n$ independent  Gaussian random variables of variance $\sigma^2$. Let $f : \bR^n \to \bR$ denote an $L$-Lipschitz function (w.r.t Euclidean norm in $\mathbb{R}^n$). Then, for any $t>0$,
\begin{equation*}
    \mathbb{P} \big( | f(X) - \bE f (X) | \geq t \big) \leq 2 e^{- \frac{1}{\sigma^2} \frac{t^2}{2 L^2}}
\end{equation*}
that is, $f(X) - \bE f (X)$ is sub-Gaussian with sub-Gaussian norm $O \big( \frac{1}{L \sigma} \big)$.
\end{theorem}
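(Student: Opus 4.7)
The approach is the classical Herbst argument combined with the Gaussian logarithmic Sobolev inequality. First I would reduce the two-sided tail bound to a moment generating function (MGF) estimate via the Chernoff bound: since $-f$ is also $L$-Lipschitz, applying Chernoff with $\pm\lambda$ and summing the one-sided bounds produces the factor of $2$, so it suffices to prove
\begin{equation*}
    \bE\bigl[e^{\lambda(f(X) - \bE f(X))}\bigr] \leq e^{\lambda^2 L^2 \sigma^2 / 2}
\end{equation*}
for every $\lambda \in \bR$, after which optimizing the Chernoff bound at $\lambda = t/(L^2\sigma^2)$ delivers the stated exponential tail with the sharp constant $1/(2L^2\sigma^2)$.

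\textbf{Core argument via log-Sobolev.} The key functional inequality I would invoke is the Gaussian logarithmic Sobolev inequality: for any smooth $g : \bR^n \to \bR$ with $X \sim \mathcal{N}(0,\sigma^2 I_n)$,
\begin{equation*}
    \bE\bigl[g(X)^2 \log g(X)^2\bigr] - \bE\bigl[g(X)^2\bigr] \log \bE\bigl[g(X)^2\bigr] \leq 2\sigma^2\, \bE\,|\nabla g(X)|^2 .
\end{equation*}
Applied with $g = e^{\lambda f/2}$, the left-hand side equals $\lambda\,\bE[f(X)\, e^{\lambda f(X)}] - \bE[e^{\lambda f(X)}] \log \bE[e^{\lambda f(X)}]$, while $|\nabla g|^2 = (\lambda/2)^2 |\nabla f|^2 e^{\lambda f} \leq (\lambda L/2)^2 e^{\lambda f}$ by the Lipschitz bound $|\nabla f| \leq L$. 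Setting $\psi(\lambda) := \log \bE[e^{\lambda(f - \bE f)}]$ and dividing by $\lambda^2\,\bE[e^{\lambda f}]$, the inequality rearranges into the Herbst differential inequality
\begin{equation*}
    \lambda \psi'(\lambda) - \psi(\lambda) \leq \frac{\sigma^2 L^2 \lambda^2}{2},
    \qquad \text{equivalently} \qquad \Bigl(\frac{\psi(\lambda)}{\lambda}\Bigr)' \leq \frac{\sigma^2 L^2}{2}.
\end{equation*}
Integrating from $0$, using $\psi(0)=0$ and $\lim_{\lambda \to 0} \psi(\lambda)/\lambda = \psi'(0) = 0$ (which holds because $f - \bE f$ is centered), yields $\psi(\lambda) \leq \sigma^2 L^2 \lambda^2 / 2$, which is the required MGF bound.

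\textbf{Technical loose ends.} An $L$-Lipschitz function is differentiable only almost everywhere (Rademacher's theorem), so a small regularization step is needed to justify the computation with $\nabla f$. I would mollify, $f_\epsilon := f * \phi_\epsilon$ with $\phi_\epsilon$ a Gaussian heat kernel, note that $f_\epsilon$ is smooth with $\|\nabla f_\epsilon\|_\infty \leq L$ and $f_\epsilon \to f$ uniformly on compacts, apply the argument above to $f_\epsilon$, and pass $\epsilon \to 0$ in the tail bound by dominated convergence.

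\textbf{Main obstacle.} The genuinely nontrivial ingredient is the Gaussian log-Sobolev inequality with the sharp constant $2\sigma^2$; this is not elementary. I would establish it by proving the one-dimensional case first, most cleanly via the Bakry--\'Emery $\Gamma_2$-criterion applied to the Ornstein--Uhlenbeck semigroup (curvature $\geq 1/\sigma^2$), and then tensorizing using the subadditivity of entropy. An elementary alternative avoiding log-Sobolev is a Gaussian interpolation via the rotation $X_\theta = \sin\theta\,X + \cos\theta\, X'$ with an independent copy $X'$, bounding $\bE[e^{\lambda(f(X) - f(X'))}]$ by conditioning on $X_\theta$; this route is much shorter but produces the suboptimal constant $\pi^2/8$ rather than $1/2$, so the sharp statement as written truly requires the log-Sobolev input.
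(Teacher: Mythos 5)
Your proof is correct and is the textbook Herbst argument; note, however, that the paper does not actually prove this theorem but simply states it as a known black-box result (the Tsirelson--Ibragimov--Sudakov / Borell Gaussian concentration inequality) and then applies it to deduce Lemma 5.2, so there is no internal proof to compare against. Your details check out: the scaling of the log-Sobolev constant to $2\sigma^2$ for $\mathcal{N}(0,\sigma^2 I_n)$, the identity $\lambda\psi'(\lambda)-\psi(\lambda)\le \sigma^2L^2\lambda^2/2$, the integration of $(\psi/\lambda)'\le \sigma^2L^2/2$ from $\psi'(0)=0$, the Chernoff optimization at $\lambda^*=t/(L^2\sigma^2)$, and the mollification step needed because a Lipschitz $f$ is only a.e.\ differentiable are all handled correctly. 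One peripheral remark: the paper's closing clause that $f(X)-\bE f(X)$ has sub-Gaussian norm $O(1/(L\sigma))$ is a typo — a tail bound $2e^{-t^2/(2L^2\sigma^2)}$ corresponds to sub-Gaussian norm $O(L\sigma)$, which is indeed how it is used in the application ($L\sigma = O(1/(N|\im z|^3))$); your proof gives the correct scaling.
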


Using the above result, it suffices to show that $G(z), L(z)$ as functions of the noise matrix $\bZ \in \bR^{N \times M}$ are Lipschitz with constant $O \big( \frac{1}{\sqrt{N} \big| \im z \big|^3} \big)$. 

Consider the Hermitization $\cY$ of $\bY$ in \eqref{hermitization}. Given the decomposition \eqref{eigen-cY}, we have:
\begin{equation}\label{alternativeforG}
    \begin{split}
        G(z) &= \frac{1}{N} \Tr \big(z^2 \bI - \bY \bY^\intercal \big)^{-1} \\
        &= \frac{1}{N} \sum_{k =1}^N \frac{1}{z^2 - \gamma_k^2} \\
        &= \frac{1}{N} \sum_{k =1}^N \frac{1}{2 z} \big( \frac{1}{z - \gamma_k} + \frac{1}{z+\gamma_k} \big) \\
        &= \frac{1}{N} \frac{1}{2 z} \Big[ \sum_{k =1}^N  \big( \frac{1}{z - \gamma_k} + \frac{1}{z+\gamma_k} \big) + (M-N) \frac{1}{z} \Big] - \frac{M -N}{2 N} \frac{1}{z^2} \\
        &= \frac{1}{N} \frac{1}{2 z} \Tr \big(z \bI - \cY \big)^{-1} - \frac{M -N}{2 N} \frac{1}{z^2}
    \end{split}
\end{equation}
and
\begin{equation} \label{alternativeforL}
    \begin{split}
        L(z) &= \frac{1}{N} \Tr \big(z^2 \bI - \bY \bY^\intercal \big)^{-1} \bY \bS^\intercal \\
        &= \frac{1}{N} \Tr  \Big( \sum_{k=1}^{N} \frac{1}{z^2 - \gamma_k^2} \bu_k \bu_k^\intercal \Big) \bY \bS^\intercal \\
        &= \frac{1}{N} \sum_{k=1}^{N} \frac{1}{z^2 - \gamma_k^2} \Tr \bu_k \bu_k^\intercal  \bY \bS^\intercal \\
        &= \frac{1}{N} \sum_{k=1}^{N} \frac{\gamma_k}{z^2 - \gamma_k^2} \Tr \bu_k \bv_k^\intercal \bS^\intercal
    \end{split}
\end{equation}
On the other hand, letting $\cS$ be the Hertmitization of $\bS$, and $\bw_k$ the $k$-th column of $\bW$ in \eqref{eigen-cY}, we have:
\begin{equation*}
    \bw_k^\intercal \cS \bw_k = \begin{cases}
        \bu_k^\intercal \bS \bv_k & \rm{for} \hspace{5pt} 1 \leq k \leq N\\
        -\bu_{k-N}^\intercal \bS \bv_{k-N} & \hspace{5pt} N+1 \leq k \leq 2 N \\
        0 & \hspace{5pt} 2 N+1 \leq k \leq M + N
    \end{cases}
\end{equation*}
Therefore, denoting eigenvalues of $\cY$ by $\tilde{\gamma_k}$, we find
\begin{equation}\label{alternativeforL-2}
    \begin{split}
        \frac{1}{N} \Tr (z \bI - \cY)^{-1} \cS &= \frac{1}{N} \sum_{k = 1}^{M + N} \frac{1}{z - \tilde{\gamma_k}} \Tr \bw_k \bw_k^\intercal \cS \\
        &= \frac{1}{N} \Big[ \sum_{k = 1}^{N} \frac{1}{z - \gamma_k} \Tr \bu_k \bv_k^\intercal \bS - \sum_{k = 1}^{N} \frac{1}{z + \gamma_k} \Tr \bu_k \bv_k^\intercal \bS \Big] \\
        &= \frac{1}{N} \sum_{k = 1}^{N} \big( \frac{1}{z - \gamma_k} - \frac{1}{z+\gamma_k} \big)  \Tr \bu_k \bv_k^\intercal \bS^\intercal \\
        &= \frac{2}{N} \sum_{k=1}^{N} \frac{\gamma_k}{z^2 - \gamma_k^2} \Tr \bu_k \bv_k^\intercal \bS^\intercal
    \end{split}
\end{equation}
From \eqref{alternativeforL}, \eqref{alternativeforL-2}, we obtain:
\begin{equation}
    L(z) = \frac{1}{2 N} \Tr (z \bI - \cY)^{-1} \cS 
\end{equation}

Now, to show Lipschitz continuity of the functions we consider a variation of the noise matrix $\bZ\to \bZ + \mathbf{\delta}_Z$. From now on, variables evaluated at $\bZ + \mathbf{\delta}_Z$ are denoted with a "tilde" symbol, for example:
\begin{equation*}
    \tilde{G}(z) = \frac{1}{N} \Tr \big( z^2 \bI -\tilde{\bY} {\tilde{\bY}}^\intercal \big)^{-1} = \frac{1}{N} \Tr \big( z^2 \bI - (\bS + \bZ + \mathbf{\delta}_Z ) (\bS + \bZ + \mathbf{\delta}_Z)^\intercal \big)^{-1}
\end{equation*}

We have:
\begin{equation}
    \begin{split}
       |G(z) - \tilde{G}(z)| &= \frac{1}{N} \bigg| \Tr \Big[  \big(z^2 \bI - \bY \bY^\intercal \big)^{-1} - \big(z^2 \bI - \tilde{\bY} \tilde{\bY}^\intercal \big)^{-1} \Big] \bigg| \\
       &\overset{\rm (a)}{=} \frac{1}{N} \frac{1}{2 |z| } \bigg| \Tr \Big[  \big(z \bI - \cY \big)^{-1} - \big(z \bI - \tilde{\cY} \big)^{-1} \Big] \bigg| \\
       &\overset{\rm (b)}{=}  \frac{1}{N} \frac{1}{2 |z| } \bigg| \Tr \big(z \bI - \cY \big)^{-1} \big(\tilde{\cY} - \cY \big) \big(z \bI - \tilde{\cY} \big)^{-1} \bigg| \\
       &\overset{\rm (c)}{\leq} \frac{\sqrt{M+N}}{N} \frac{1}{2 |z| } \Big\| \big(z \bI - \cY \big)^{-1} \big(\tilde{\cY} - \cY \big) \big(z \bI - \tilde{\cY} \big)^{-1} \Big\|_{\rm F} \\
       &\overset{\rm (d)}{\leq} \frac{\sqrt{M+N}}{N} \frac{1}{2 |z| } \Big\|\big(z \bI - \cY \big)^{-1} \Big\|_{\rm op} \Big\| \big(z \bI - \tilde{\cY} \big)^{-1} \Big\|_{\rm op} \Big\| \big(\tilde{\cY} - \cY \big) \Big\|_{\rm F} \\
       &\overset{\rm (e)}{\leq} C \frac{1}{\sqrt{N}} \frac{1}{ \big| \im z \big|^3} \big\|  \mathbf{\delta}_Z \big\|_{\rm F}
    \end{split}
\end{equation}
where in $(a)$ we use the identity in \eqref{alternativeforG}, in $(b)$ we use the following resolvent formula, namely that for any square matrices $\bA, \bB$:
\begin{equation*}
    \big(z \bI - \bA \big)^{-1} - \big(z \bI - \bB \big)^{-1} = \big(z \bI - \bA \big)^{-1} \big( \bA - \bB \big) \big(z \bI - \bB \big)^{-1} \quad ,
\end{equation*}
in $(c)$ we use the inequality for any matrix $\bA \in \bR^{N \times N}$:
\begin{equation*}
    \big| \Tr \bA \big| \leq \sqrt{N} \| \bA \|_{\rm F},
\end{equation*}
in $(d)$ we  use a non-commutative Hölder-type inequality (see e.g. \cite{simon2005trace}, Thm 2.8), namely for any product $\bA_1  \cdots \bA_k$ of matrices with any size and any $i = 1,...,k$,
\begin{equation*}
    \big\| \bA_1 \cdots \bA_k \big\|_{\rm F} \leq \prod_{j \neq i} \| \bA_j \|_{\rm op} \, \, \| \bA_i \|_{\rm F},
\end{equation*}
and finally in $(e)$ the constant $C$ depends only on $K$ and we use that the operator norm of  $\big(z \bI - \tilde{\cY} \big)^{-1}$ is bounded by $\im z$:
\begin{equation*}
    \Big\| \big(z \bI - \tilde{\cY} \big)^{-1} \Big\|_{\rm op} = \max_{ \{ 0, \pm \gamma_1, \cdots, \pm \gamma_k \}} \frac{1}{|z - x|} \leq \frac{1}{|z - {\rm Re} \, z|} = \frac{1}{\big| \im z \big| }
\end{equation*}

Similarly for $L(z)$, we have:
\begin{equation}
    \begin{split}
        | L(z) - \tilde{L}(z) | & = \frac{1}{2 N} \bigg| \Tr \Big[ \big(z \bI - \cY \big)^{-1} \cS  - \big(z \bI - \tilde{\cY} \big)^{-1} \cS \Big] \bigg| \\
        &= \frac{1}{2 N} \bigg| \Tr \Big[ \big(z \bI - \cY \big)^{-1} - \big(z \bI - \tilde{\cY} \big)^{-1} \Big] \cS \ \bigg| \\
        &\leq \frac{1}{2 N} \Big\| \big(z \bI - \cY \big)^{-1} - \big(z \bI - \tilde{\cY} \big)^{-1} \Big\|_{\rm F} \big\| \cS \big\|_{\rm F} \\
        &\leq \frac{\sqrt{2 N}}{2 N} \Big\| \big(z \bI - \cY \big)^{-1} - \big(z \bI - \tilde{\cY} \big)^{-1} \Big\|_{\rm F} \big\| \bS \big\|_{\rm op} \\
        &\leq \frac{C'}{\sqrt{N}} \frac{1}{ \big( \im z \big)^2}  \big\| \mathbf{\delta}_Z \big\|_{\rm F}
    \end{split}
\end{equation}
with $C'$ a positive constant depending only on $K$.

\subsection{Computation of MMSE for the Gaussian Noise - Statement \ref{Gaus-MMSE-statement}}\label{com-G-MMSE}
From \eqref{MMSE-eq} and \eqref{rect-RIE-Gauss} we see that to compute the MMSE we must compute the following expectation:
\begin{equation*}
    \int \bigg( x - \frac{1-\alpha}{\alpha} \frac{1}{x} - 2 \pi \sH [\bar{\mu}_{Y}](x) \bigg)^2 \, \mu_Y(x) \, dx
\end{equation*}
In the following, using properties of the Hilbert transform, we show this integral equals:
\begin{equation}
    \int x^2 \mu_Y(x) \, dx + \big(\frac{1}{\alpha} -1 \big)^2 \int \frac{\mu_Y(x)}{x^2}  dx + \frac{\pi^2}{3} \int {\mu_Y(x)}^3 \, dx - \frac{2}{\alpha}
    \label{expect-xi*}
\end{equation}
Putting these relations together, we deduce (for Gaussian noise):
\begin{equation*}
\begin{split}
    \int {\xi^*(x)}^2 \mu_Y(x) \, dx &= \int x^2 \mu_S(x) \, dx - \frac{1}{\lambda} \Big[ \frac{1}{\alpha} \\
    &\hspace{-30pt}- \big(\frac{1}{\alpha} -1 \big)^2 \int \frac{\mu_Y(x)}{x^2}\, dx 
        - \frac{\pi^2}{3} \int {\mu_Y(x)}^3 \, dx \Big]
\end{split}
\end{equation*}
Replacing this identity in \eqref{MMSE-eq}, we get \eqref{G-MMSE-eq}.

\subsubsection{Derivation of \texorpdfstring{\eqref{expect-xi*}}{}}
For simplicity we denote $\sH [\bar{\mu}_{Y}](x)$ by $\bar{\sH}(x)$. Expanding the square in the integrand, we find
\begin{equation}
\begin{split}
    x^2 + &\big( \frac{1-\alpha}{\alpha} \big)^2 \frac{1}{x^2} -  2 \frac{1-\alpha}{\alpha}+ 4 \pi^2 \big(\bar{\sH}(x)\big)^2 - 4 \pi x \bar{\sH}(x) + 4 \pi \frac{1-\alpha}{\alpha} \frac{\bar{\sH}(x)}{x}
\end{split}
\label{expansion}
\end{equation}
To compute the expectation of the last three terms, we need the following properties of the Hilbert transform. 
\begin{lemma}\label{properties of Hilbert}
If $f : \bR \to \bR$ is compactly supported and
sufficiently regular, then one has the identities
\begin{equation}
   \int_\bR f(x)\big( \sH [f] (x) \big)^2 \, dx = \frac{1}{3} \int_\bR f^3(x) \, dx
   \label{Hilbert-1}
\end{equation}
\begin{equation}
   \int_\bR \sH [f] (x) x f(x) \, dx = \frac{1}{2 \pi} \Big( \int_\bR f(x) \, dx \Big)^2
   \label{Hilbert-2}
\end{equation}
\begin{equation}
   \int_\bR \frac{\sH [f] (x)}{x}  f(x) \, dx = -\frac{1}{2 \pi} \Big( \int_\bR \frac{f(x)}{x} \, dx \Big)^2
   \label{Hilbert-3}
\end{equation}
\label{Hilbert-iden}
\end{lemma}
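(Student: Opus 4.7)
My plan is to prove all three identities via elementary manipulations of the double-integral representation of $\sH[f]$, using two basic ingredients: the antisymmetry of the Hilbert transform as an operator on $L^{2}(\mathbb{R})$, i.e.\ $\int f\, \sH[g]\, dx = -\int \sH[f]\, g\, dx$, and the Tricomi product identity $\sH[f\, \sH[g]] + \sH[g\, \sH[f]] = \sH[f]\, \sH[g] - fg$. The common thread is that each integral, once written as a principal-value double integral against $\tfrac{1}{\pi(x-y)}$, can be reduced by exploiting a symmetry under the swap $(x,y)\leftrightarrow(y,x)$.

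I would start with identity \eqref{Hilbert-2}: writing
\begin{equation*}
\int x\, f(x)\, \sH[f](x)\, dx \;=\; \frac{1}{\pi}\iint \frac{x\, f(x)\, f(y)}{x-y}\, dx\, dy
\end{equation*}
and decomposing $x = \tfrac{1}{2}(x+y) + \tfrac{1}{2}(x-y)$, the $(x-y)$-contribution cancels the denominator and yields $\tfrac{1}{2\pi}(\int f)^{2}$, while the $(x+y)$-contribution is antisymmetric under $(x,y)\leftrightarrow(y,x)$ and hence vanishes. For identity \eqref{Hilbert-3} I would apply the partial-fraction decomposition $\tfrac{1}{x(x-y)} = \tfrac{1}{y(x-y)} - \tfrac{1}{xy}$: the product $\tfrac{1}{xy}$ decouples and produces $-\tfrac{1}{\pi}(\int f/x)^{2}$, while the first piece, upon relabeling $x\leftrightarrow y$, equals $-\int\tfrac{f\, \sH[f]}{x}\, dx$. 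Solving the resulting linear equation for the original integral yields the stated factor $\tfrac{1}{2}$.

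For identity \eqref{Hilbert-1}, specializing Tricomi to $g = f$ gives the pointwise relation $\sH[f\, \sH[f]] = \tfrac{1}{2}\bigl((\sH[f])^{2} - f^{2}\bigr)$. Multiplying by $f$, integrating, and using antisymmetry to rewrite $\int f\, \sH[f\, \sH[f]]\, dx = -\int f\,(\sH[f])^{2}\, dx$ produces the scalar relation $3\int f(\sH[f])^{2}\, dx = \int f^{3}\, dx$.

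The main technical obstacle will be the careful handling of the principal-value singularities, especially in \eqref{Hilbert-3}, where both the diagonal singularity $x=y$ and the singularity at $x=0$ coexist; the combined double integral is well-defined for $f$ compactly supported away from the origin, and the general compactly-supported case can be recovered by approximating $f$ by such functions and passing to the limit. As an alternative unified proof, one may instead use the Cauchy transform $F(z) = \tfrac{1}{\pi}\int\tfrac{f(t)}{z-t}\, dt$, which is analytic in the upper half-plane with boundary values $F(x + i0^{+}) = \sH[f](x) - i f(x)$ and which decays as $F(z) = \tfrac{I_{0}}{\pi z} + O(z^{-2})$ with $I_{0}=\int f$; then $\int_{\mathbb{R}} F^{3}\, dx = 0$ and $\oint z F^{2}\, dz = \tfrac{i I_{0}^{2}}{\pi}$ yield identities \eqref{Hilbert-1} and \eqref{Hilbert-2} directly from their imaginary parts, while a residue calculation for the analog of $\oint F^{2}/z\, dz$ produces identity \eqref{Hilbert-3}.
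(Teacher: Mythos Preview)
Your proposal is correct. For identities \eqref{Hilbert-2} and \eqref{Hilbert-3} your double-integral symmetrization is precisely the paper's method: the paper cites Lemma~3.1 of \cite{shlyakhtenko2020fractional} for \eqref{Hilbert-2} and, for \eqref{Hilbert-3}, performs the same symmetrization you describe, writing $\tfrac{1}{x-y}\bigl(\tfrac{1}{x}-\tfrac{1}{y}\bigr)=-\tfrac{1}{xy}$ directly rather than via your partial fraction plus relabeling, but the two manipulations are algebraically identical.

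The only genuine difference is in \eqref{Hilbert-1}: the paper does not prove it, simply citing \cite{shlyakhtenko2020fractional}, whereas you supply an explicit argument via the Tricomi product identity $\sH[f\,\sH[f]]=\tfrac12\bigl((\sH[f])^2-f^2\bigr)$ combined with the $L^2$-antisymmetry of $\sH$. This is a standard and clean route (and your alternative Cauchy-transform argument, taking the imaginary part of $\int_{\bR}F^3\,dx=0$ with $F(x+\ci 0^+)=\sH[f]-\ci f$, is equally effective). Either approach is more self-contained than the paper's treatment of \eqref{Hilbert-1}.
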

\begin{proof}
The proof of the first two properties can be found in Lemma 3.1 of \cite{shlyakhtenko2020fractional}. To prove the last equality, we apply the same idea as in remark 3.2 of this paper to write:
\begin{equation*}
    \begin{split}
        \int_\bR \frac{\sH [f] (x)}{x}  f(x) \, dx&= \frac{1}{2 \pi} \iint \big( \frac{1}{x} - \frac{1}{y} \big) \frac{1}{x-y} f(x) f(y) \, dx \, dy \\
        &= - \frac{1}{2 \pi} \iint  \frac{1}{xy} f(x) f(y) \, dx \, dy \\
        &= - \frac{1}{2 \pi} \Big( \int \frac{f(x)}{x} \, dx \Big)^2
    \end{split}
\end{equation*}
\end{proof}

We remark that the Hilbert transform  of an even function is an odd function (see e.g. \cite{kschischang2006hilbert}), in other words for the symmetrized measure $\bar{\mu}_Y$ we have $\bar{\sH}(x) = - \bar{\sH}(x)$.
From \eqref{Hilbert-1} we have:
\begin{equation}
     \int \big(\bar{\sH}(x)\big)^2 \bar{\mu}_Y(x)\, dx = \frac{1}{3} \int {\bar{\mu}_Y(x)}^3 \, dx
     \label{first-term}
\end{equation}
The l.h.s can be written as:
\begin{equation*}
\begin{split}
        \frac{1}{2} \int_{\bR_+} \big(\bar{\sH}(x)\big)^2 \mu_Y(x)\, dx &+ \frac{1}{2} \int_{\bR_-} \big(\bar{\sH}(x)\big)^2 \mu_Y(-x)\, dx \\
        &= \frac{1}{2} \int_{\bR_+} \big(\bar{\sH}(x)\big)^2 \mu_Y(x)\, dx +  \frac{1}{2} \int_{\bR_+} \big(\bar{\sH}(-x)\big)^2 \mu_Y(x)\, dx \\
        &= \frac{1}{2} \int_{\bR_+} \big(\bar{\sH}(x)\big)^2 \mu_Y(x)\, dx +  \frac{1}{2} \int_{\bR_+} \big(\bar{\sH}(x)\big)^2 \mu_Y(x)\, dx \\
        &= \int_{\bR_+} \big(\bar{\sH}(x)\big)^2 \mu_Y(x)\, dx
\end{split}
\end{equation*}
The rhs in \eqref{first-term} equals $ \frac{1}{12} \int {\mu_Y(x)}^3 \, dx$. Therefore, the expectation of the fourth term in \eqref{expansion} is:
\begin{equation}
   4 \pi^2 \int  \big(\bar{\sH}(x)\big)^2 \mu_Y(x) \, dx = \frac{\pi^2}{3}  \int {\mu_Y(x)}^3 \, dx
   \label{expec1}
\end{equation}

Similarly, using symmetry properties of $\bar{H}$ and  $\bar{\mu}_Y$ we have that:
\begin{equation*}
    \int x \bar{H}(x) \bar{\mu}_Y(x) \, dx = \int x \bar{H}(x) \mu_Y(x) \, dx
\end{equation*}
Thus, by \eqref{Hilbert-2}, the expectation of the fifth term in \eqref{expansion} is:
\begin{equation}
    -4 \pi \int x \bar{H}(x) \mu_Y(x) \, dx = -2  \Big( \int_\bR \bar{\mu}_Y(x) \, dx \Big)^2 = -2.
    \label{expec2}
\end{equation}
Again, by symmetry we have:
\begin{equation*}
    \int  \frac{\bar{H}(x)}{x} \bar{\mu}_Y(x) \, dx = \int \frac{\bar{H}(x)}{x}  \mu_Y(x) \, dx
\end{equation*}
Thus, by \eqref{Hilbert-3}, the expectation of the last term in \eqref{expansion} is:
\begin{equation}
    \int \frac{\bar{H}(x)}{x}  \mu_Y(x) \, dx =  \Big( \int_\bR \frac{\bar{\mu}_Y(x)}{x} \, dx \Big)^2 = 0
    \label{expec3}
\end{equation}
where we used that $\frac{\bar{\mu}_Y(x)}{x}$ is an odd function.

Finally putting together \eqref{expansion}, \eqref{expec1}, \eqref{expec2}, \eqref{expec3} we get \eqref{expect-xi*}.

\subsection{Proof of Theorem \ref{MI-th}}
We start from the posterior distribution of the model \eqref{model} which reads (up to some constants):
\begin{equation}
    \begin{split}
        P (\bX | \bY) &\propto e^{-\frac{N}{2} \| \bY - \sqrt{\lambda}\bX \|_{\rm F}^2 } P_{S}(\bX) \\
        & \propto e^{N \Tr \big[ \sqrt{\lambda}\bX \bY^\intercal - \frac{\lambda}{2} \bX \bX^\intercal  \big] } P_{S}(\bX)
    \end{split}
    \label{post-model1}
\end{equation}
The partition function is defined as the normalizing factor of the posterior distribution \eqref{post-model1}:
\begin{equation}
    Z(\bY) = \int \, d \bX \, e^{N \Tr \big[ \sqrt{\lambda}\bX \bY^\intercal  - \frac{\lambda}{2} \bX \bX^\intercal  \big] } P_{S}(\bX)
    \label{partition-function-def}
\end{equation}
and the free energy is defined as:
\begin{equation}
    F_N (\lambda) = -\frac{1}{M N} \bE_{Y} \big[ \ln Z(\bY) \big]
    \label{free-energy-def}
\end{equation}
One can easily see that the free energy is linked to the (average) mutual information via the relation:
\begin{equation*}
    \frac{1}{M N} I_N(\bS;\bY) = F_N (\lambda) + \frac{\lambda}{2 M} \bE \big[ \Tr \bS \bS^\intercal  \big]
\end{equation*}
in which $\frac{1}{M} \bE \big[ \Tr \bS^\intercal  \bS \big]$ converges to the second moment of $\mu_S$ rescaled by the factor $\alpha$. Therefore, to prove theorem \ref{MI-th}, it is enough to show that
\begin{equation*}
     \lim_{N \to \infty} F_N (\lambda)  = \frac{\lambda}{2} \alpha \int x^2 \mu_{S}(x) \, d x - \mathcal{J}[\mu_{\sqrt{\lambda} S}, \mu_{\sqrt{\lambda} S}\boxplus_{\alpha} \mu_{\rm MP}]
\end{equation*}
To prove this limit, first, we show that this limit also holds for the free energy of a simpler model. Then, using the \textit{pseudo-Lipschitz} continuity of the free energy w.r.t. to a distance between two models which converges to $0$ as $N \to \infty$, we deduce that the same limit holds for the free energy of the original model.

\subsubsection{An independent singular value model}
Suppose $\bsig^0 \in \bR^N$ is generated with i.i.d. elements from $\mu_S$, and is ordered in non-decreasing way. Fix $\bsig^0$ once for all. Let $\tilde{\bS} \in \bR^{N \times M}$  the matrix  contructed as $\bU \tilde{\bSig} \bV^\intercal $ where $\bU \in \bR^{N \times N}, \bV \in \bR^{M \times M}$ are independent and distributed according to the Haar measure, and $\tilde{\bSig} \in \bR^{N \times M}$ with $\tilde{\bsig}$ on its main diagonal for $\tilde{\bsig} \in \bR^N$. The distribution of the matrix $\tilde{\bS}$ is :
\begin{equation}
\begin{split}
        d P_{\tilde{S}}(\tilde{\bS}) &= d \mu_N(\bU) \,  d \mu_M(\bV) d p_{\tilde{S}}(\tilde{\bsig}) = d \mu_N(\bU) \, d \mu_M(\bV) \, \prod_{i=1}^N \delta(\tilde{\sigma}_i - \sigma^0_i) \, d \tilde{\bsig}
\end{split}
\label{model 2}
\end{equation}
 
Matrix $\tilde{\bS}$ is observed through an AWGN channel as in \eqref{model}, $\tilde{\bY} = \sqrt{\lambda} \tilde{\bS} + \tilde{\bZ}$. The partition function and the free energy can be defined in the same way as in \eqref{partition-function-def},\eqref{free-energy-def} denoted by $\tilde{Z}(\tilde{\bY})$, $\tilde{F}_N(\lambda)$ respectively.
\begin{proposition}
For $\mu_S$ with compact support, and any $\lambda > 0$, we have $\mu_S$-almost surely
\begin{equation*}
    \lim_{N \to \infty} \tilde{F}_N(\lambda) = \frac{\lambda}{2} \alpha \int x^2 \mu_{S}(x) \, dx - \mathcal{J}[\mu_{\sqrt{\lambda} S}, \mu_{\sqrt{\lambda} S}\boxplus_{\alpha} \mu_{\rm MP}] 
\end{equation*}
\label{asymp-free-energy-2-proposition}
\end{proposition}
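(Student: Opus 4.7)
The plan is to exploit the particular structure of the prior $P_{\tilde S}$ in \eqref{model 2}, under which the partition function collapses to a rectangular spherical integral. Substituting $\tilde{\bX}=\bU\bSig^0\bV^\intercal$ in \eqref{partition-function-def} and using $\bV^\intercal\bV=\bI_M$, the quadratic piece of the exponent reduces to the deterministic constant $\sum_i(\sigma^0_i)^2$, and what remains is $\iint d\mu_N(\bU)\,d\mu_M(\bV)\,\exp\!\bigl(N\sqrt{\lambda}\,\Tr\tilde{\bY}^\intercal\bU\bSig^0\bV^\intercal\bigr)$. Since $\mathcal{I}_{N,M}(\cdot,\cdot)$ depends only on the singular values of its arguments, this identifies
\begin{equation*}
\tilde Z(\tilde{\bY})\;=\;\exp\!\Bigl(-\tfrac{N\lambda}{2}\textstyle\sum_{i=1}^N(\sigma^0_i)^2\Bigr)\,\mathcal{I}_{N,M}\!\bigl(\tilde{\bY},\sqrt{\lambda}\,\bS^0\bigr),
\end{equation*}
for $\bS^0$ any deterministic matrix with singular values $\sigma^0_1,\dots,\sigma^0_N$. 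Dividing by $-MN$ and taking $\bE_{\tilde Y}$ yields
\begin{equation*}
\tilde F_N(\lambda)=\frac{\lambda}{2M}\sum_{i=1}^N(\sigma^0_i)^2\;-\;\frac{1}{MN}\,\bE_{\tilde Y}\!\bigl[\ln\mathcal{I}_{N,M}(\tilde{\bY},\sqrt{\lambda}\bS^0)\bigr].
\end{equation*}

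The first term converges $\mu_S$-almost surely to $\tfrac{\lambda}{2}\alpha\int x^2\,d\mu_S(x)$ by the strong law of large numbers for the i.i.d.\ draws $\sigma^0_i\sim\mu_S$, using only the finite second moment implied by compact support. For the second term I would appeal to the variational formula of \cite{guionnet2021large}: $\frac{1}{MN}\ln\mathcal{I}_{N,M}(\bA,\bB)\to\mathcal{J}[\mu_A,\mu_B]$ whenever the ESDs of $\bA,\bB$ converge to compactly supported limits $\mu_A,\mu_B$. The first argument $\sqrt{\lambda}\bS^0$ has deterministic limiting ESD $\mu_{\sqrt{\lambda}S}$ (again by SLLN on $\sigma^0$), and the second argument $\tilde{\bY}=\sqrt{\lambda}\tilde{\bS}+\tilde{\bZ}$ has limiting ESD $\mu_{\sqrt{\lambda}S}\boxplus_\alpha\mu_{\rm MP}$ by rectangular free convolution, applicable because $\tilde{\bS}$ is bi-rotationally invariant and $\tilde{\bZ}$ has i.i.d.\ Gaussian entries with Marchenko--Pastur limit. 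Together these identify the almost-sure pointwise limit of the integrand inside $\bE_{\tilde Y}$ as the deterministic constant $\mathcal{J}[\mu_{\sqrt{\lambda}S},\mu_{\sqrt{\lambda}S}\boxplus_\alpha\mu_{\rm MP}]$.

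To commute the limit with $\bE_{\tilde Y}$ I would establish uniform integrability via two-sided bounds on $\tfrac{1}{MN}\ln\mathcal{I}_{N,M}(\tilde{\bY},\sqrt{\lambda}\bS^0)$. An upper bound of order $C\bigl(1+\|\tilde{\bY}\|_{\rm F}^2/N+\|\bS^0\|_{\rm F}^2/N\bigr)$ follows from $|\Tr\bA^\intercal\bU\bB\bV^\intercal|\le\|\bA\|_{\rm F}\|\bB\|_{\rm F}$; both Frobenius norms have Gaussian tails through concentration of $\|\tilde{\bZ}\|_{\rm F}$ and are bounded on average thanks to the compact support of $\mu_S$. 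A matching lower bound $\ge 0$ follows from Jensen's inequality applied to the Haar integrals (using that the Haar mean of $\bU$ and $\bV$ vanishes). Dominated convergence then gives the claimed limit for $\tilde F_N(\lambda)$.

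The main obstacle I anticipate is the clean application of the large-deviation result of \cite{guionnet2021large}, since that theorem is typically stated for matrices with deterministic or at least tightly concentrated spectra, whereas $\tilde{\bY}$ has a genuinely random spectrum driven by $\tilde{\bZ}$. The natural workaround is to first condition on $\tilde{\bZ}$: given $\tilde{\bZ}$ and the fixed $\sigma^0$, the singular values of both arguments of $\mathcal{I}_{N,M}$ are deterministic, so Guionnet's theorem applies and gives convergence to $\mathcal{J}[\mu_{\sqrt{\lambda}S},\mu_{\tilde Y}^{(N)}]$; a separate appeal to continuity of $\mathcal{J}$ in its second argument, combined with the a.s.\ weak convergence $\mu_{\tilde Y}^{(N)}\to\mu_{\sqrt{\lambda}S}\boxplus_\alpha\mu_{\rm MP}$, completes the identification of the limit.
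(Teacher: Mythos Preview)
Your approach is essentially the same as the paper's: reduce $\tilde Z(\tilde\bY)$ to a rectangular spherical integral, kill the first term by the SLLN, and upgrade the almost-sure convergence of $\tfrac{1}{MN}\ln\mathcal{I}_{N,M}$ (from \cite{guionnet2021large}) to convergence of its expectation via a uniform-integrability argument. The difference lies in how uniform integrability is obtained. The paper bounds $|\mathcal{J}_{N,M}(\bA,\bB)|\le\sigma_N^A\sigma_N^B$ via the \emph{operator} norms and then controls the top singular value of $\sqrt{\lambda}\bSig^0+\tilde\bZ$ through sub-Gaussian tails on $\|\tilde\bZ\|_{\rm op}$. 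Your route---an upper bound through $\|\cdot\|_{\rm F}$ and a lower bound $\mathcal{I}_{N,M}\ge 1$ from Jensen and $\bE_{\bU}[\bU]=0$---is a legitimate alternative and arguably cleaner on the lower side; on the upper side you pay with a larger (Frobenius) bound, but $\tfrac{1}{M}\|\tilde\bY\|_{\rm F}\|\bS^0\|_{\rm F}$ still has uniformly bounded second moment, so Vitali's theorem (not dominated convergence, since the dominating variable changes with $N$) closes the argument.

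Your last paragraph's ``workaround'' is unnecessary and slightly off. Guionnet's theorem requires only that the ESDs of the two matrix sequences converge; here the ESD of $\tilde\bY$ converges almost surely to $\mu_{\sqrt{\lambda}S}\boxplus_\alpha\mu_{\rm MP}$, so on that full-measure event the theorem applies directly and yields $\mathcal{J}_{N,M}\to\mathcal{J}[\mu_{\sqrt{\lambda}S},\mu_{\sqrt{\lambda}S}\boxplus_\alpha\mu_{\rm MP}]$---no conditioning on $\tilde\bZ$, no continuity of $\mathcal{J}$, and in particular no object like ``$\mathcal{J}[\mu_{\sqrt{\lambda}S},\mu_{\tilde Y}^{(N)}]$'' (which is ill-defined, as $\mathcal{J}$ takes limiting measures, not finite-$N$ empirical ones). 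This is exactly what the paper does.
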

\textit{Proof } Appendix \ref{proof of prop1}.

\subsubsection{Pseudo-Lipschitz continuity of the free energy}
Consider two bi-rotationally invariant matrix ensembles $P^{(1)}$, $P^{(2)}$, i.e. for $\bS \sim P^{(1)}(\bS)$, $\tilde{\bS} \sim P^{(2)}(\tilde{\bS})$ with SVDs $\bS = \bU \bSig \bV^\intercal $, $\tilde{\bS} = \tilde{\bU} \tilde{\bSig} \tilde{\bV}^\intercal $
\begin{equation*}
\begin{split}
        d P_N^{(1)} (\bS )  &\propto d \mu_N(\bU) \, d \mu_M(\bV) \, p^{(1)} (\bsig ) \, d \bsig \\
        d P_N^{(2)} ( \tilde{\bS} )  &\propto d \mu_N(\tilde{\bU}) \, d \mu_M(\tilde{\bV})\, p^{(2)} (\tilde{\bsig} ) \, d \tilde{\bsig}
\end{split}
\end{equation*}
where  $p^{(1)}(\bsig)$, $p^{(2)}(\tilde{\bsig})$ are the joint probability density functions for the singular values, induced by the priors. Suppose each of these distributions to be the prior of an inference problem in model \eqref{model}. The free energy can be defined similarly for each of the priors, which are denoted by $F_N^{(1)}(\lambda), F_N^{(2)}(\lambda)$ respectively. Then, we have

\begin{proposition}
For all $\lambda > 0$ and $N$ :
\begin{equation}\label{pseudo-lip-ub}
        \big| F_N^{(1)}(\lambda) - F^{(2)}_N(\lambda) \big| \leq \frac{\lambda}{2 N} \Big( \sqrt{ \bE_{\bsig} \big[ \| \bsig \|^2 \big]} + \sqrt{ \bE_{\tilde{\bsig}} \big[ \| \tilde{\bsig} \|^2 \big]} \Big) \sqrt{\bE_{\bsig, \tilde{\bsig}} \big[ \| \bsig - \tilde{\bsig} \|^2 \big] }  
\end{equation}
\label{pseudo-lip}
\end{proposition}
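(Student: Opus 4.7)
The plan is a linear interpolation between Bayes-matched free energies. Given the coupling $p(\bsig, \tilde{\bsig})$, I share Haar matrices $\bU \in \bR^{N \times N}$, $\bV \in \bR^{M \times M}$ and noise $\bZ$ across both models and put $\bS = \bU \bSig \bV^\intercal$, $\tilde{\bS} = \bU \tilde{\bSig} \bV^\intercal$, $\bS_t = (1-t)\bS + t \tilde{\bS}$, $\bY_t = \sqrt{\lambda}\bS_t + \bZ$ for $t \in [0,1]$. The matrix $\bS_t$ is bi-rotationally invariant with singular value law that of $(1-t)\bsig + t\tilde{\bsig}$; denote it $P_t$. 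If $\phi(t) := -\frac{1}{MN}\bE\ln Z_t(\bY_t)$, where $Z_t$ is the Bayes-matched partition function built from $P_t$, then $\phi(0) = F_N^{(1)}(\lambda)$ and $\phi(1) = F_N^{(2)}(\lambda)$, so it suffices to bound $\int_0^1 |\phi'(t)|\,dt$.

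To compute $\phi'(t)$, I parametrize the integration variable $\bX$ in $Z_t$ in the same augmented way as the signal: $\bX = \mathbf{O}_1\,{\rm diag}\big((1-t)\bsig_X + t\tilde{\bsig}_X\big)\mathbf{O}_2^\intercal$, with $\mathbf{O}_1, \mathbf{O}_2$ independent Haar and $(\bsig_X, \tilde{\bsig}_X) \sim p$; put $\dot{\bX} := \partial_t \bX$. The chain rule yields a likelihood term from $\partial_\bY \ln Z_t = N\sqrt{\lambda}\langle \bX\rangle_t$ with $\dot{\bY}_t = \sqrt{\lambda}(\tilde{\bS}-\bS)$, together with a prior term $\langle \partial_t H_\lambda(\bX, \bY_t)\rangle_t$. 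Gaussian integration by parts applied to the resulting noise contribution $\bE\Tr\langle\dot{\bX}\rangle_t\bZ^\intercal$ produces $\sqrt{\lambda}\,\bE\bigl[\langle\Tr\dot{\bX}\bX^\intercal\rangle_t - \Tr\langle\dot{\bX}\rangle_t\langle\bX\rangle_t^\intercal\bigr]$, which cancels the $\langle \Tr\dot{\bX}\bX^\intercal\rangle_t$ term generated by $\partial_t H_\lambda$ and leaves
\begin{equation*}
\phi'(t) = -\frac{\lambda}{M}\,\bE\Bigl[\Tr\langle\dot{\bX}\rangle_t\bS_t^\intercal + \Tr\langle\bX\rangle_t(\tilde{\bS}-\bS)^\intercal - \Tr\langle\dot{\bX}\rangle_t\langle\bX\rangle_t^\intercal\Bigr].
\end{equation*}

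Bayes-matching at each $t$ permits the Nishimori identity on the augmented space: the posterior law of $(\bU, \bV, \bsig, \tilde{\bsig})$ given $\bY_t$ agrees with the Gibbs law of $(\mathbf{O}_1, \mathbf{O}_2, \bsig_X, \tilde{\bsig}_X)$, so $\bE[\bS_t\mid\bY_t] = \langle\bX\rangle_t$ and $\bE[\tilde{\bS}-\bS\mid\bY_t] = \langle\dot{\bX}\rangle_t$. Taking conditional expectations in the first two traces above reduces each of them to $\bE\Tr\langle\dot{\bX}\rangle_t\langle\bX\rangle_t^\intercal$, and $\phi'(t)$ collapses to $-\frac{\lambda}{M}\bE\Tr\langle\bX\rangle_t\langle\dot{\bX}\rangle_t^\intercal$. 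Two Cauchy--Schwarz steps and Jensen's inequality then give $|\phi'(t)| \leq \frac{\lambda}{M}\sqrt{\bE\langle\|\bX\|_{\rm F}^2\rangle_t}\sqrt{\bE\langle\|\dot{\bX}\|_{\rm F}^2\rangle_t}$, and Nishimori identifies these with $\bE\|\bS_t\|_{\rm F}^2 = \bE\|(1-t)\bsig + t\tilde{\bsig}\|^2$ and $\bE\|\tilde{\bsig}-\bsig\|^2$ respectively. Minkowski yields $\sqrt{\bE\|(1-t)\bsig + t\tilde{\bsig}\|^2} \leq (1-t)\sqrt{\bE\|\bsig\|^2} + t\sqrt{\bE\|\tilde{\bsig}\|^2}$; integrating over $t\in[0,1]$ gives a factor $\tfrac12$, and $1/M \leq 1/N$ (since $\alpha = N/M \in (0,1]$) upgrades this to \eqref{pseudo-lip-ub}.

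The principal subtlety is the twin $t$-dependence of both the likelihood (through $\bY_t$) and the prior (through $P_t$): introducing the augmented parametrization of $Z_t$ is what makes these two sources compatible, and combining the Gaussian integration by parts with the Nishimori identity on the augmented space is what collapses the three terms of $\phi'(t)$ into the single, cleanly bounded quantity $\bE\Tr\langle\bX\rangle_t\langle\dot{\bX}\rangle_t^\intercal$.
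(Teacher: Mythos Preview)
Your proof is correct and takes a genuinely different route from the paper's. The paper uses a two-channel Guerra-type interpolation: for $t\in[0,1]$ one observes simultaneously $\bY_1^{(t)}=\sqrt{\lambda t}\,\bS+\bZ_1$ and $\bY_2^{(t)}=\sqrt{\lambda(1-t)}\,\tilde{\bS}+\bZ_2$ with two independent noise matrices and a \emph{fixed} joint prior on $(\bU,\bV,\bsig,\tilde{\bsig})$; after Gaussian integration by parts and Nishimori the $t$-derivative reduces to $-\frac{\lambda}{2M}\,\bE\bigl[\langle\Tr\bX^\intercal\bS\rangle_t-\langle\Tr\tilde{\bX}^\intercal\tilde{\bS}\rangle_t\bigr]$, and the bound then follows from an add-and-subtract of $\Tr\bS^\intercal\tilde{\bX}$ together with Cauchy--Schwarz and Nishimori. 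Your single-channel interpolation $\bS_t=(1-t)\bS+t\tilde{\bS}$ instead pushes the $t$-dependence into the signal and prior; your augmented parametrization of $Z_t$ (integrating over both $\bsig_X$ and $\tilde{\bsig}_X$) is exactly what restores a Bayes-matched setting at every $t$, so that Nishimori can be applied to the pair $(\bX_t,\dot{\bX}_t)$ jointly and the three terms of $\phi'(t)$ collapse as you describe. The paper's approach is more standard and sidesteps any subtlety about differentiating a $t$-dependent prior; yours is more compact (one noise, one observation) and in fact works for any coupling $p(\bsig,\tilde{\bsig})$, not only the product one used in the statement. Both routes yield the same constant: your factor $\tfrac12$ arises from $\int_0^1[(1-t)a+tb]\,dt=\tfrac12(a+b)$ via Minkowski, while the paper's comes directly from the $\sqrt{t}$, $\sqrt{1-t}$ scalings of the two channels.
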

\textit{Proof } Appendix \ref{proof of prop2}.

\subsubsection{The distance between two models}
Recall that
\begin{equation*}
\begin{split}
        d P_{S}(\bS)  \propto d \mu_N(\bU) \, d \mu_M(\bV) p_{S}(\bsig) \, d \bsig
\end{split}
\end{equation*}
where $p_{S}(\bsig)$ is the joint p.d.f. of singular values of $\bS$. Moreover, $d P_{\tilde{S}}(\tilde{\bS})$ is defined in \eqref{model 2} with  $p_{\tilde{S}}(\tilde{\bsig}) \equiv \prod_{i=1}^N \delta(\tilde{\sigma}_i - \sigma^0_i)$, where $\bsig^0$ is generated with i.i.d. elements from $\mu_S$
\begin{lemma}
Under assumptions \ref{assumptions on law}, \ref{bounded-mom}, for $\bsig \sim p_{S}(\bsig)$, $\tilde{\bsig} \sim p_{\tilde{S}}(\tilde{\bsig})$ , we have:
\begin{equation}
    \lim_{N \to \infty} \frac{1}{N} \bE_{\bsig, \tilde{\bsig}} \big[ \| \bsig - \tilde{\bsig} \|^2 \big] =0
    \label{W2-dis-eq}
\end{equation}
\label{W2-dis}
\end{lemma}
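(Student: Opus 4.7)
\medskip
\noindent\textbf{Proof plan.} The plan is to recognize the left-hand side of \eqref{W2-dis-eq} as (essentially) the squared $2$-Wasserstein distance between two empirical measures that both converge to $\mu_S$, and then to upgrade almost-sure convergence to convergence in expectation via a uniform-integrability argument. Concretely, since the components of $\tilde{\bsig}$ are sorted in non-decreasing order by construction, and since the singular values of $\bS$ may also be sorted in non-decreasing order without affecting the norm $\|\bsig - \tilde{\bsig}\|$ (the inner Haar integrals in $dP_S$ are invariant under relabeling), we have the one-dimensional optimal-transport identity
\begin{equation*}
\frac{1}{N} \|\bsig - \tilde{\bsig}\|^2 \;=\; W_2^2\bigl(\mu_S^{(N)}, \, \tilde{\mu}_S^{(N)}\bigr),
\end{equation*}
where $\mu_S^{(N)} := \frac{1}{N}\sum_i \delta_{\sigma_i}$ is the ESD of $\bS$ and $\tilde{\mu}_S^{(N)} := \frac{1}{N}\sum_i \delta_{\sigma_i^0}$ is the empirical measure of the i.i.d.\ sample from $\mu_S$.

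\medskip
\noindent The second step is to establish $W_2(\mu_S^{(N)},\mu_S) \to 0$ and $W_2(\tilde{\mu}_S^{(N)},\mu_S) \to 0$ almost surely, and conclude by the triangle inequality for $W_2$ that $W_2(\mu_S^{(N)}, \tilde{\mu}_S^{(N)}) \to 0$ almost surely. For the i.i.d.\ side, $\mu_S$ has compact support $[C_1,C_2]$, so by the classical law of large numbers the empirical measure of a compactly supported i.i.d.\ sample converges in $W_2$ to $\mu_S$ almost surely. For the ESD side, Assumption \ref{assumptions on law} gives weak convergence $\mu_S^{(N)}\Rightarrow\mu_S$ a.s., while Assumption \ref{bounded-mom} gives almost-sure boundedness of the second moment; combined with the compact support of $\mu_S$, this upgrades weak convergence to $W_2$ convergence via the standard characterization (weak convergence plus convergence of second moments). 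The convergence of second moments itself is obtained by truncating $x\mapsto x^2$ at height $R^2$ (using weak convergence on the bounded continuous truncation) and then letting $R\to\infty$, controlling the tail by the a.s.\ uniform bound on $\int x^2\,d\mu_S^{(N)}$; this step is the only real technical delicate point.

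\medskip
\noindent The final step is to pass from almost-sure to $L^1$ convergence. Using $(a-b)^2 \le 2a^2 + 2b^2$ termwise,
\begin{equation*}
\frac{1}{N}\|\bsig - \tilde{\bsig}\|^2 \;\le\; \frac{2}{N}\|\bsig\|^2 + \frac{2}{N}\|\tilde{\bsig}\|^2 \;=\; 2\!\int\! x^2\,d\mu_S^{(N)} + 2\!\int\! x^2\,d\tilde{\mu}_S^{(N)}.
\end{equation*}
The second term is deterministically bounded by $2C_2^2$ since $\mu_S$ is supported in $[C_1,C_2]$. The first term is almost surely bounded by some constant $K$ by Assumption \ref{bounded-mom}, and its expectation is uniformly bounded in $N$ (one may take expectations directly in the a.s.\ bound, or invoke Fatou together with the uniform a.s.\ control). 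Therefore the sequence $\frac{1}{N}\|\bsig-\tilde{\bsig}\|^2$ is uniformly integrable (in fact bounded in $L^\infty$ up to a.s.\ constants), and dominated convergence yields $\bE_{\bsig,\tilde{\bsig}}\bigl[\frac{1}{N}\|\bsig-\tilde{\bsig}\|^2\bigr]\to 0$, which is \eqref{W2-dis-eq}.

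\medskip
\noindent The main obstacle, as indicated above, is upgrading weak convergence of the ESD $\mu_S^{(N)}$ to $\mu_S$ into $W_2$ convergence—i.e., obtaining convergence (not merely boundedness) of the second moment of $\mu_S^{(N)}$. With only the stated boundedness of the second moment one needs an extra uniform integrability argument for $x^2$ under $\mu_S^{(N)}$; this is the only nontrivial input and is handled by the truncation argument sketched above, exploiting the compact support of the limit $\mu_S$.
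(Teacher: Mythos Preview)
Your approach is essentially identical to the paper's: reduce $\frac{1}{N}\|\bsig-\tilde{\bsig}\|^2$ to the squared $W_2$ distance between the two empirical measures (using that both vectors can be sorted), split via the triangle inequality through $\mu_S$, obtain almost-sure convergence to zero, and then pass to expectations by dominating with $2\int x^2\,d\mu_S^{(N)} + 2C_2^2$ and invoking dominated convergence. The paper organizes this in exactly the same way (it isolates the last step as a separate sublemma), so at the structural level your proposal matches.

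The one substantive issue is precisely the point you flag as ``delicate'': upgrading weak convergence of $\mu_S^{(N)}$ to $W_2$-convergence. Your proposed truncation argument, relying only on the compact support of the \emph{limit} $\mu_S$ together with a uniform bound on $\int x^2\,d\mu_S^{(N)}$, does \emph{not} close this gap. A clean counterexample is $\mu_N = (1-\tfrac1N)\delta_0 + \tfrac1N\delta_{\sqrt N}$: here $\mu_N\Rightarrow\delta_0$ (compactly supported limit), $\int x^2\,d\mu_N = 1$ is uniformly bounded, yet $\int x^2\,d\mu_N\not\to 0$, so $W_2(\mu_N,\delta_0)\not\to 0$. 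In the setting of the lemma this would correspond to a single singular value of order $\sqrt N$, which alone contributes order $1$ to $\frac{1}{N}\|\bsig-\bsig^0\|^2$. The paper glosses over this same point (it asserts convergence of second moments without a clear justification from the stated assumptions). What is actually needed is a tightness condition on $x^2$ under $\mu_S^{(N)}$, e.g.\ an almost-sure bound on the operator norm $\|\bS\|_{\mathrm{op}}$ (the paper's Assumption~1) or a direct assumption that $\int x^2\,d\mu_S^{(N)}\to\int x^2\,d\mu_S$; either of these makes your truncation argument, and hence the whole proof, go through cleanly.
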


\textit{Proof } Appendix \ref{proof of lemma 2}.

\subsubsection{Concluding the Proof}
By proposition \ref{pseudo-lip},  the distance between the free energies $F_N(\lambda)$ (defined in \eqref{free-energy-def}) and $\tilde{F}_N(\lambda)$ is upper bounded by rhs in \eqref{pseudo-lip-ub}.
The term $\frac{1}{N} \| \bsig \|^2 = \frac{1}{N} \sum \sigma_i^2 $ is the second moment of the empirical spectral distribution of $\bS$, which is almost surely bounded by assumption \ref{bounded-mom}. So,  $\frac{1}{N} \bE_{\bsig} \big[ \| \bsig \|^2 \big]$ is bounded.
 Moreover, $\frac{1}{N} \bE \big[ \| \tilde{\bsig} \|^2 \big] = \frac{1}{N}  \sum {\sigma^0}_i^2$ which is bounded by $C_2^2$. By proposition \ref{W2-dis}, $\lim_{N \to \infty} \frac{1}{N} \bE_{\bsig, \tilde{\bsig}} \big[ \| \bsig - \tilde{\bsig} \|^2 \big] = 0$. Therefore $    \lim_{N \to \infty} |F_N(\lambda) - \tilde{F}_N(\lambda) | = 0$ and Proposition \ref{asymp-free-energy-2-proposition} gives the result. $\hfill \square$

\section*{Acknowledgments} 
We are thankful to Pierre Mergny for interesting discussions. The work of F. P has been supported by Swiss National Science Foundation grant no 200021-204119.

\bibliographystyle{unsrt}
\bibliography{References}

\begin{thebibliography}{10}

\bibitem{baik2005phase}
Jinho Baik, G{\'e}rard Ben~Arous, and Sandrine P{\'e}ch{\'e}.
\newblock Phase transition of the largest eigenvalue for nonnull complex sample covariance matrices.
\newblock 2005.

\bibitem{benaych2011eigenvalues}
Florent Benaych-Georges and Raj~Rao Nadakuditi.
\newblock The eigenvalues and eigenvectors of finite, low rank perturbations of large random matrices.
\newblock {\em Advances in Mathematics}, 227(1):494--521, 2011.

\bibitem{benaych2012singular}
Florent Benaych-Georges and Raj~Rao Nadakuditi.
\newblock The singular values and vectors of low rank perturbations of large rectangular random matrices.
\newblock {\em Journal of Multivariate Analysis}, 111:120--135, 2012.

\bibitem{lesieur2017constrained}
Thibault Lesieur, Florent Krzakala, and Lenka Zdeborov{\'a}.
\newblock Constrained low-rank matrix estimation: Phase transitions, approximate message passing and applications.
\newblock {\em Journal of Statistical Mechanics: Theory and Experiment}, 2017(7):073403, 2017.

\bibitem{dia2016mutual}
Mohamad Dia, Nicolas Macris, Florent Krzakala, Thibault Lesieur, Lenka Zdeborov{\'a}, et~al.
\newblock Mutual information for symmetric rank-one matrix estimation: A proof of the replica formula.
\newblock {\em Advances in Neural Information Processing Systems}, 29, 2016.

\bibitem{lelarge2019fundamental}
Marc Lelarge and L{\'e}o Miolane.
\newblock Fundamental limits of symmetric low-rank matrix estimation.
\newblock {\em Probability Theory and Related Fields}, 173(3):859--929, 2019.

\bibitem{miolane2017fundamental}
L{\'e}o Miolane.
\newblock Fundamental limits of low-rank matrix estimation: the non-symmetric case.
\newblock {\em arXiv preprint arXiv:1702.00473}, 2017.

\bibitem{barbier2019adaptive}
Jean Barbier and Nicolas Macris.
\newblock The adaptive interpolation method: a simple scheme to prove replica formulas in bayesian inference.
\newblock {\em Probability theory and related fields}, 174(3):1133--1185, 2019.

\bibitem{montanari2021estimation}
Andrea Montanari and Ramji Venkataramanan.
\newblock Estimation of low-rank matrices via approximate message passing.
\newblock {\em The Annals of Statistics}, 49(1), 2021.

\bibitem{fletcher2018iterative}
Alyson~K Fletcher and Sundeep Rangan.
\newblock Iterative reconstruction of rank-one matrices in noise.
\newblock {\em Information and Inference: A Journal of the IMA}, 7(3):531--562, 2018.

\bibitem{barbier2023fundamental}
Jean Barbier, Francesco Camilli, Marco Mondelli, and Manuel S{\'a}enz.
\newblock Fundamental limits in structured principal component analysis and how to reach them.
\newblock {\em Proceedings of the National Academy of Sciences}, 120(30):e2302028120, 2023.

\bibitem{fan2022approximate}
Zhou Fan.
\newblock Approximate message passing algorithms for rotationally invariant matrices.
\newblock {\em The Annals of Statistics}, 50(1):197--224, 2022.

\bibitem{pourkamali2021mismatched}
Farzad Pourkamali and Nicolas Macris.
\newblock Mismatched estimation of symmetric rank-one matrices under gaussian noise.
\newblock In {\em International Zurich Seminar on Information and Communication (IZS 2022). Proceedings}, pages 84--88. ETH Zurich, 2022.

\bibitem{barbier2022price}
Jean Barbier, TianQi Hou, Marco Mondelli, and Manuel S{\'a}enz.
\newblock The price of ignorance: how much does it cost to forget noise structure in low-rank matrix estimation?
\newblock {\em Advances in Neural Information Processing Systems}, 35:36733--36747, 2022.

\bibitem{pourkamali2022mismatched}
Farzad Pourkamali and Nicolas Macris.
\newblock Mismatched estimation of non-symmetric rank-one matrices under gaussian noise.
\newblock In {\em 2022 IEEE International Symposium on Information Theory (ISIT)}, pages 1288--1293. IEEE, 2022.

\bibitem{guionnet2023estimating}
Alice Guionnet, Justin Ko, Florent Krzakala, and Lenka Zdeborov{\'a}.
\newblock Estimating rank-one matrices with mismatched prior and noise: universality and large deviations.
\newblock {\em arXiv preprint arXiv:2306.09283}, 2023.

\bibitem{montanari2022fundamental}
Andrea Montanari and Yuchen Wu.
\newblock Fundamental limits of low-rank matrix estimation with diverging aspect ratios.
\newblock {\em arXiv preprint arXiv:2211.00488}, 2022.

\bibitem{pourkamali2023matrix}
Farzad Pourkamali, Jean Barbier, and Nicolas Macris.
\newblock Matrix inference in growing rank regimes.
\newblock {\em arXiv preprint arXiv:2306.01412}, 2023.

\bibitem{camilli2023matrix}
Francesco Camilli and Marc M{\'e}zard.
\newblock Matrix factorization with neural networks.
\newblock {\em Physical Review E}, 107(6):064308, 2023.

\bibitem{camilli2023decimation}
Francesco Camilli and Marc M{\'e}zard.
\newblock The decimation scheme for symmetric matrix factorization.
\newblock {\em arXiv preprint arXiv:2307.16564}, 2023.

\bibitem{husson2022spherical}
Jonathan Husson and Justin Ko.
\newblock Spherical integrals of sublinear rank.
\newblock {\em arXiv preprint arXiv:2208.03642}, 2022.

\bibitem{kabashima2016phase}
Yoshiyuki Kabashima, Florent Krzakala, Marc M{\'e}zard, Ayaka Sakata, and Lenka Zdeborov{\'a}.
\newblock Phase transitions and sample complexity in bayes-optimal matrix factorization.
\newblock {\em IEEE Transactions on information theory}, 62(7):4228--4265, 2016.

\bibitem{barbier2022statistical}
Jean Barbier and Nicolas Macris.
\newblock Statistical limits of dictionary learning: random matrix theory and the spectral replica method.
\newblock {\em Physical Review E}, 106(2):024136, 2022.

\bibitem{maillard2022perturbative}
Antoine Maillard, Florent Krzakala, Marc M{\'e}zard, and Lenka Zdeborov{\'a}.
\newblock Perturbative construction of mean-field equations in extensive-rank matrix factorization and denoising.
\newblock {\em Journal of Statistical Mechanics: Theory and Experiment}, 2022(8):083301, 2022.

\bibitem{troiani2022optimal}
Emanuele Troiani, Vittorio Erba, Florent Krzakala, Antoine Maillard, and Lenka Zdeborov{\'a}.
\newblock Optimal denoising of rotationally invariant rectangular matrices.
\newblock In {\em Mathematical and Scientific Machine Learning}, pages 97--112. PMLR, 2022.

\bibitem{bodin2023gradient}
Antoine Bodin and Nicolas Macris.
\newblock Gradient flow on extensive-rank positive semi-definite matrix denoising.
\newblock In {\em 2023 IEEE Information Theory Workshop (ITW)}, pages 365--370, 2023.

\bibitem{bun2016rotational}
Jo{\"e}l Bun, Romain Allez, Jean-Philippe Bouchaud, and Marc Potters.
\newblock Rotational invariant estimator for general noisy matrices.
\newblock {\em IEEE Transactions on Information Theory}, 62(12):7475--7490, 2016.

\bibitem{Pourkamali2023RectangularRI}
Farzad Pourkamali and Nicolas Macris.
\newblock Rectangular rotational invariant estimator for general additive noise matrices.
\newblock In {\em 2023 IEEE International Symposium on Information Theory (ISIT)}, pages 2081--2086, 2023.

\bibitem{landau2023singular}
Itamar~Daniel Landau, Gabriel~C. Mel, and Surya Ganguli.
\newblock Singular vectors of sums of rectangular random matrices and optimal estimation of high-rank signals: The extensive spike model.
\newblock {\em Physical Review E}, 2023.

\bibitem{semerjian2024matrix}
Guilhem Semerjian.
\newblock Matrix denoising: Bayes-optimal estimators via low-degree polynomials.
\newblock {\em arXiv preprint arXiv:2402.16719}, 2024.

\bibitem{pourkamali2023bayesian}
Farzad Pourkamali and Nicolas Macris.
\newblock Bayesian extensive-rank matrix factorization with rotational invariant priors.
\newblock In {\em Thirty-seventh Conference on Neural Information Processing Systems}, 2023.

\bibitem{stein1975estimation}
Charles Stein.
\newblock Estimation of a covariance matrix.
\newblock In {\em 39th Annual Meeting IMS, Atlanta, GA, 1975}, 1975.

\bibitem{takemura1984orthogonally}
Akimichi Takemura.
\newblock An orthogonally invariant minimax estimator of the covariance matrix of a multivariate normal population.
\newblock {\em Tsukuba journal of mathematics}, 8(2):367--376, 1984.

\bibitem{ledoit2011eigenvectors}
Olivier Ledoit and Sandrine P{\'e}ch{\'e}.
\newblock Eigenvectors of some large sample covariance matrix ensembles.
\newblock {\em Probability Theory and Related Fields}, 151(1-2):233--264, 2011.

\bibitem{bun2017cleaning}
Jo{\"e}l Bun, Jean-Philippe Bouchaud, and Marc Potters.
\newblock Cleaning large correlation matrices: tools from random matrix theory.
\newblock {\em Physics Reports}, 666:1--109, 2017.

\bibitem{benaych2023optimal}
Florent Benaych-Georges, Jean-Philippe Bouchaud, and Marc Potters.
\newblock Optimal cleaning for singular values of cross-covariance matrices.
\newblock {\em The Annals of Applied Probability}, 33(2):1295--1326, 2023.

\bibitem{guionnet2021large}
Alice Guionnet and Jiaoyang Huang.
\newblock Asymptotics of rectangular spherical integrals.
\newblock {\em Journal of Functional Analysis}, page 110144, 2023.

\bibitem{benaych2009rectangular}
Florent Benaych-Georges.
\newblock Rectangular random matrices, related convolution.
\newblock {\em Probability Theory and Related Fields}, 144(3):471--515, 2009.

\bibitem{mondelli2021approximate}
Marco Mondelli and Ramji Venkataramanan.
\newblock Approximate message passing with spectral initialization for generalized linear models.
\newblock In {\em International Conference on Artificial Intelligence and Statistics}, pages 397--405. PMLR, 2021.

\bibitem{erdHos2009semicircle}
L{\'a}szl{\'o} Erd{\H{o}}s, Benjamin Schlein, and Horng-Tzer Yau.
\newblock Semicircle law on short scales and delocalization of eigenvectors for wigner random matrices.
\newblock 2009.

\bibitem{erdHos2017dynamical}
L{\'a}szl{\'o} Erd{\H{o}}s and Horng-Tzer Yau.
\newblock {\em A dynamical approach to random matrix theory}, volume~28.
\newblock American Mathematical Soc., 2017.

\bibitem{guo2005mutual}
Dongning Guo, Shlomo Shamai, and Sergio Verd{\'u}.
\newblock Mutual information and minimum mean-square error in {G}aussian channels.
\newblock {\em IEEE transactions on information theory}, 51(4):1261--1282, 2005.

\bibitem{benaych2009freerectangular}
Florent Benaych-Georges.
\newblock Rectangular random matrices, entropy, and fisher's information.
\newblock {\em Journal of Operator Theory}, pages 371--419, 2009.

\bibitem{benaych2007infinitely}
Florent Benaych-Georges.
\newblock Infinitely divisible distributions for rectangular free convolution: classification and matricial interpretation.
\newblock {\em Probability Theory and Related Fields}, 139:143--189, 2007.

\bibitem{potters2020first}
Marc Potters and Jean-Philippe Bouchaud.
\newblock {\em A First Course in Random Matrix Theory: For Physicists, Engineers and Data Scientists}.
\newblock Cambridge University Press, 2020.

\bibitem{Mathematica}
Wolfram~Research{,} Inc.
\newblock Mathematica, {V}ersion 13.2.
\newblock Champaign, IL, 2022.

\bibitem{horn2012matrix}
Roger~A Horn and Charles~R Johnson.
\newblock {\em Matrix analysis}.
\newblock Cambridge university press, 2012.

\bibitem{simon2005trace}
Barry Simon.
\newblock {\em Trace ideals and their applications}.
\newblock Number 120. American Mathematical Soc., 2005.

\bibitem{shlyakhtenko2020fractional}
Dimitri Shlyakhtenko and Terence Tao.~With an~appendix~by David~Jekel.
\newblock Fractional free convolution powers.
\newblock {\em Indiana University Mathematics Journal}, 2020.

\bibitem{kschischang2006hilbert}
Frank~R Kschischang.
\newblock The {H}ilbert transform.
\newblock {\em University of Toronto}, 83:277, 2006.

\bibitem{zinn2021quantum}
Jean Zinn-Justin.
\newblock {\em Quantum field theory and critical phenomena}, volume 171.
\newblock Oxford university press, 2021.

\bibitem{benaych2011rectangular}
Florent Benaych-Georges.
\newblock Rectangular {R}-transform as the limit of rectangular spherical integrals.
\newblock {\em Journal of Theoretical Probability}, 24(4):969--987, 2011.

\bibitem{bernstein2009matrix}
Dennis~S Bernstein.
\newblock Matrix mathematics.
\newblock In {\em Matrix Mathematics}. Princeton university press, 2009.

\bibitem{davidson2001local}
Kenneth~R Davidson and Stanislaw~J Szarek.
\newblock Local operator theory, random matrices and banach spaces.
\newblock {\em Handbook of the geometry of Banach spaces}, 1(317-366):131, 2001.

\bibitem{hardy1952inequalities}
Godfrey~Harold Hardy, John~Edensor Littlewood, George P{\'o}lya, Gy{\"o}rgy P{\'o}lya, et~al.
\newblock {\em Inequalities}.
\newblock Cambridge university press, 1952.

\bibitem{villani2021topics}
C{\'e}dric Villani.
\newblock {\em Topics in optimal transportation}, volume~58.
\newblock American Mathematical Soc., 2021.

\bibitem{harish1957differential}
Harish-Chandra.
\newblock Differential operators on a semisimple lie algebra.
\newblock {\em American Journal of Mathematics}, pages 87--120, 1957.

\bibitem{mergny2022right}
Pierre Mergny and Marc Potters.
\newblock Right large deviation principle for the top eigenvalue of the sum or product of invariant random matrices.
\newblock {\em Journal of Statistical Mechanics: Theory and Experiment}, 2022(6):063301, 2022.

\bibitem{guionnet2022asymptotics}
Alice Guionnet and Jonathan Husson.
\newblock Asymptotics of k dimensional spherical integrals and applications.
\newblock {\em ALEA}, 19:769--797, 2022.

\end{thebibliography}

\newpage

\begin{appendices}
    \section{Reminder on Transforms in Random Matrix Theory}\label{RMT-transforms}
The ESD of $\bS$ is defined as:
\begin{equation*}
    \mu_{S}^{(N)}(x) = \frac{1}{N} \sum_{i=1}^N \delta(x - \sigma_i )
\end{equation*}
where $(\sigma_i)_{1 \leq i \leq N}$ are the singular values of $\bS$.

For a probability measure $\mu$ with support contained in $[-K, K ]$ with $K>0$, we define a generating function of (even) moments $\mathcal{M}_{\mu} : [0,K^{-2}] \to \bR_+$
as
\begin{equation*}
    \mathcal{M}_{\mu} (z) = \int \frac{1}{1-t^2 z} \mu(t) \, d t - 1
\end{equation*}
For $\alpha \in [0,1]$, set $T^{(\alpha)}(z) = (\alpha z +1)(z+1)$ and $\mathcal{H}_{\mu}^{(\alpha)}(z) = z T^{(\alpha)}\big(\mathcal{M}_{\mu} (z)\big)$. The \textit{rectangular R-transform} is then defined as:
\begin{equation*}
    \mathcal{C}_{\mu}^{(\alpha)}(z) = {T^{(\alpha)}}^{-1}\Big(\frac{z}{{\mathcal{H}_{\mu}^{(\alpha)}}^{-1}(z)} \Big)
\end{equation*}

For a probability density $\mu(x)$ on $\bR$, the \textit{Stieltjes}  (or \textit{Cauchy}) transform  is defined as 
\begin{equation*}
   \mathcal{G}_\mu (z) = \int_{\bR} \frac{1}{z - x} \mu(x) \, dx \hspace{10 pt} \text{for } z \in \mathbb{C} \backslash {\rm supp}(\mu)
\end{equation*}
By  Plemelj formulae we have for $x\in \mathbb{R}$,
\begin{equation}
    \lim_{y \to 0^+} \mathcal{G}_\mu(x - \ci y) = \pi \sH [\mu](x) + \ci\pi \mu(x) 
    \label{Plemelj formulae}
\end{equation}
with $\sH [\mu](x) = {\rm p.v.} \frac{1}{\pi} \int_{\bR} \frac{\mu(t)}{x - t}  d t$ the \textit{Hilbert} transform of $\mu$.
\section{Derivation of the resolvent relation }\label{RIE-der}
From \eqref{RIE-observ-model}, we have
\begin{equation}
\begin{split}
    \cY &=  \left[
\begin{array}{cc}
\mathbf{0} & \bS \\
\bS^\intercal  & \mathbf{0}
\end{array}
\right] + \left[
\begin{array}{cc}
\bU & \mathbf{0}  \\
\mathbf{0} & \bV
\end{array}
\right] \left[
\begin{array}{cc}
\mathbf{0} & \bZ \\
\bZ^\intercal  & \mathbf{0}
\end{array}
\right] \left[
\begin{array}{cc}
\bU^\intercal  & \mathbf{0}  \\
\mathbf{0} & \bV^\intercal 
\end{array}
\right] \\
&= \cS + \bO \cZ \bO^\intercal 
\end{split}
\label{RIE-1}
\end{equation}
Let $\bG(z) \equiv \bG_{\mathcal{Y}}(z)$. First, we express the entries of $\bG(z)$ using the Gaussian integral representation of an inverse matrix \cite{zinn2021quantum}:
\begin{equation}
\begin{split}
    G_{ij}(z) &= \sqrt{\frac{1}{( 2 \pi )^{N+M} \det \,  (z \bI - \cY) } }  \int \Big( \prod_{k=1}^{M+N}  d \eta_k  \Big) \, \eta_i \eta_j \,  \exp \Big\{ -\frac{1}{2}  \bbeta^\intercal \big( z \bI - \cY \big) \bbeta \Big\} \\
    &= \ddfrac{\int \Big( \prod_{k=1}^{M+N}  d \eta_k  \Big) \, \eta_i \eta_j \,   \exp \Big\{ -\frac{1}{2} \bbeta^\intercal \big( z \bI - \cY \big) \bbeta \Big\}}{\int \Big( \prod_{k=1}^{M+N}  d \eta_k  \Big) \,    \exp \Big\{ -\frac{1}{2} \bbeta^\intercal \big( z \bI - \cY \big) \bbeta \Big\}}
\end{split}
\label{Inverse-G-integral}
\end{equation}
For $z$ not close to the real axis, the resolvent is expected to exhibit self-averaging behavior in the limit of large N, meaning that it will not depend on the particular matrix realization. Thus, we can examine the resolvent  $\bG_{\mathcal{Y}}(z)$ by analyzing its ensemble average, denoted by $\langle . \rangle$ in the following.
\begin{equation}
    \big\langle G_{ij}(z) \big\rangle = \bigg\langle \frac{1}{\mathcal{Z}} \, \int \Big( \prod_{k=1}^{M+N}  d \eta_k  \Big) \, \eta_i \eta_j \,   \exp \Big\{ -\frac{1}{2} \bbeta^\intercal \big( z \bI - \cY \big) \bbeta \Big\} \bigg\rangle
    \label{Inverse-G-integral-average}
\end{equation}
where $\mathcal{Z}$ is the denominator in \eqref{Inverse-G-integral}. Computing the average is, in general, non-trivial. However, the replica method provides us with a technique to overcome this issue by employing the following identity:
\begin{equation}
    \begin{split}
        \big\langle G_{ij}(z) \big\rangle &=  \lim_{n \to 0} \bigg\langle \mathcal{Z}^{n-1} \, \int \Big( \prod_{k=1}^{M+N}  d \eta_k  \Big) \, \eta_i \eta_j \,   \exp \Big\{ -\frac{1}{2} \bbeta^\intercal \big( z \bI - \cY \big) \bbeta \Big\} \bigg\rangle \\
        &=  \lim_{n \to 0} \bigg\langle  \, \int \Big( \prod_{k=1}^{M+N} \prod_{\tau=1}^{n}  d \eta^{(\tau)}_k  \Big) \, \eta^{(1)}_i \eta^{(1)}_j \,   \exp \Big\{ -\frac{1}{2} \sum_{\tau =1}^n {\bbeta^{(\tau)}}^\intercal \big( z \bI - \cY \big) \bbeta^{(\tau)} \Big\} \bigg\rangle
    \end{split}
    \label{resolvent-replica}
\end{equation}
For the expression in the exponent, we have:
\begin{equation}
    \begin{split}
        -\frac{1}{2} \sum_{\tau=1}^n \sum_{k,l=1}^{M+N} \eta_k^{(\tau)} \big( z \delta_{kl} - \cY_{kl} \big) \eta_l^{(\tau)} = -\frac{1}{2} \sum_{\tau=1}^n \sum_{k,l=1}^{M+N} \eta_k^{(\tau)} \big( z \delta_{kl} - \cS_{kl} \big) \eta_l^{(\tau)} + \frac{1}{2} \sum_{\tau=1}^n \sum_{k,l=1}^{M+N} \eta_k^{(\tau)} (\bO \cZ \bO^\intercal )_{kl} \eta_l^{(\tau)} 
    \end{split}
    \label{exp1}
\end{equation}
The first term in the RHS can be written as
\begin{equation}
    -\frac{1}{2} \sum_{\tau=1}^n {\bbeta^{(\tau)} }^\intercal  \big( z \bI_{N+M} - \cS \big) \bbeta^{(\tau)}
    \label{exp2}
\end{equation}
Given the structure \eqref{RIE-1} for $\bO \cZ \bO^\intercal $, the second sum in \eqref{exp1} can be written as:
\begin{equation}
    \begin{split}
 \sum_{k=1}^N \sum_{l=N+1}^{M+N} &\eta_k^{(\tau)} \big(\bU \bZ \bV^\intercal )_{k,l-N} \eta_l^{(\tau)} + \sum_{k=N+1}^{M+N} \sum_{l=1}^{N} \eta_k^{(\tau)} \big(\bV \bZ^\intercal  \bU^\intercal )_{k-N,l} \eta_l^{(\tau)}
    \end{split}
    \label{exp3}
\end{equation}
Split each replica $\bbeta^{(\tau)}$ into two vectors $\ba^{(\tau)} \in \bR^N, \bb^{(\tau)} \in \bR^M$, $\bbeta^{(\tau)} = \left[
\begin{array}{c}
\ba^{(\tau)} \\
\bb^{(\tau)}
\end{array}
\right]$. The expression in \eqref{exp3} can be rewritten as $
 2 \Tr \bb^{(\tau)} {\ba^{(\tau)}}^\intercal  \bU \bZ \bV^\intercal  
 $.
So,  we have (dropping the limit term for brevity):
\begin{equation}
    \begin{split}
        \langle  G_{ij}(z) \rangle = \int \bigg( \prod_{k=1}^{M+N} \prod_{\tau =1}^n d \eta_k^{(\tau)}  \bigg) \, \eta^{(1)}_i & \eta^{(1)}_j  \exp \Big\{-\frac{1}{2} \sum_{\tau=1}^n {\bbeta^{(\tau)} }^\intercal  \big( z \bI_{N+M} - \cS \big) \bbeta^{(\tau)} \Big\}  \\
        &\times \bigg\langle \exp \Big\{ \sum_{\tau=1}^n \Tr \bb^{(\tau)} {\ba^{(\tau)}}^\intercal  \bU \bZ \bV^\intercal   \Big\}  \bigg\rangle_{\bU,\bV}
    \end{split}
\end{equation}
 Using the formula for the rectangular spherical integral \cite{benaych2011rectangular} (reviewed as Theorem \ref{rank-one-rect-sphericla-integral} in appendix  \ref{spherical integral app}) for the last term we find:
\begin{equation*}
\begin{split}
    \bigg\langle & \exp \Big\{ \sum_{\tau=1}^n \Tr \bb^{(\tau)} {\ba^{(\tau)}}^\intercal  \bU \bZ \bV^\intercal   \Big\}  \bigg\rangle_{\bU,\bV} \approx \exp \Big\{ \frac{N}{2} \sum_{\tau=1}^n \mathcal{Q}_{\mu_Z} \big(\frac{1}{N M} \| \ba^{(\tau)} \|^2 \| \bb^{(\tau)} \|^2 \big) \Big\}
\end{split}
\end{equation*}
where we used that for each replica, the non-zero singular value of $\bb^{(\tau)} {\ba^{(\tau)}}^\intercal $ is $\| \ba^{(\tau)} \| \| \bb^{(\tau)} \|$.

Therefore, we find
\begin{equation}
    \begin{split}
        &\langle  G_{ij}(z) \rangle = \int \bigg( \prod_{k=1}^{M+N} \prod_{\tau =1}^n d \eta_k^{(\tau)}  \bigg) \, \eta^{(1)}_i \eta^{(1)}_j \\
        &\hspace{3cm} \times \exp \bigg\{ \sum_{\tau=1}^n \Big[  -\frac{1}{2}  {\bbeta^{(\tau)} }^\intercal  \big( z \bI - \cS \big) \bbeta^{(\tau)} +   \frac{N}{2} \mathcal{Q}_{\mu_Z} \big(\frac{1}{N M} \| \ba^{(\tau)} \|^2 \| \bb^{(\tau)} \|^2 \big) \Big] \bigg\} 
    \end{split}
    \label{first-integral}
\end{equation}

Introducing delta functions $\delta \big(p_1^{(\tau)} - \frac{1}{N}\|\ba^{(\tau)}\|^2\big)$, $\delta \big(p_2^{(\tau)} - \frac{1}{M}\|\bb^{(\tau)}\|^2\big)$, \eqref{first-integral} can be written as:
\begin{equation}
    \begin{split}
        \langle  G_{ij}(z) \rangle = \int \bigg( \prod_{k=1}^{M+N} \prod_{\tau =1}^n d \eta_k^{(\tau)}  \bigg) & \bigg( \prod_{\tau =1}^n d p_1^{(\tau)} \,  d p_2^{(\tau)}  \bigg) \, \eta^{(1)}_i \eta^{(1)}_j \\
        & \times  \prod_{\tau=1}^n \delta \big(p_1^{(\tau)} - \frac{1}{N}\|\ba^{(\tau)}\|^2\big) \delta \big(p_2^{(\tau)} - \frac{1}{M}\|\bb^{(\tau)}\|^2\big) \\
        &\hspace{10pt} \times \exp \bigg\{ \sum_{\tau=1}^n \Big[  -\frac{1}{2}  {\bbeta^{(\tau)} }^\intercal  \big( z \bI - \cS \big) \bbeta^{(\tau)} +   \frac{N}{2} \mathcal{Q}_{\mu_Z} \big( p_1^{(\tau)} p_2^{(\tau)} \big) \Big] \bigg\} 
    \end{split}
    \label{second-integral}
\end{equation}
In the next step, we replace each delta with its Fourier transform $\delta \big(p_1^{(\tau)} - \frac{1}{N}\|\ba^{(\tau)}\|^2\big) \propto \int \, d \zeta_1^{(\tau)}  \exp  \Big\{ -\frac{N}{2} \zeta_1^{(\tau)} \big( p_1^{(\tau)} - \frac{1}{N}\|\ba^{(\tau)}\|^2 \big) \Big\}$. After rearranging, we find:
\begin{equation}
    \begin{split}
        \langle  G_{ij}(z) \rangle &\propto  \int \bigg( \prod d p_1^{(\tau)} \, d p_2^{(\tau)} \,  d \zeta_1^{(\tau)} \,  d \zeta_2^{(\tau)}  \bigg) \exp \Big\{ \frac{N}{2} \sum_{\tau=1}^n  \big[  \mathcal{Q}_{\mu_Z} ( p_1^{(\tau)} p_2^{(\tau)} ) - \zeta_1^{(\tau)} p_1^{(\tau)} - \frac{1}{\alpha} \zeta_2^{(\tau)} p_2^{(\tau)} \big] \Big\} \\
        &\times  \int \bigg( \prod_{k=1}^{M+N} \prod_{\tau =1}^n d \eta_k^{(\tau)}  \bigg) \, \eta^{(1)}_i \eta^{(1)}_j  \exp \bigg\{ \sum_{\tau=1}^n \Big[  -\frac{1}{2}  {\bbeta^{(\tau)} }^\intercal  \big( z \bI  - \cS \big) \bbeta^{(\tau)} \\
        &\hspace{7cm}+ \frac{1}{2} \zeta_1^{(\tau)} \|\ba^{(\tau)}\|^2 + \frac{1}{2} \zeta_2^{(\tau)} \|\bb^{(\tau)}\|^2   \Big] \bigg\} 
    \end{split}
    \label{eq15}
\end{equation}

The second integral in \eqref{eq15} is a Gaussian integral with matrix
\begin{equation}
    \bM^{(\tau)} = \left[
\begin{array}{cc}
(z-\zeta_1^{(\tau)}) \bI_N & -\bS  \\
-\bS^\intercal  & (z-\zeta_2^{(\tau)})  \bI_M 
\end{array}
\right]
\label{matrix-M}
\end{equation}
Using the formula for determinant of block matrices, we have 
\begin{equation*}
\begin{split}
    \det \bM^{(\tau)} &= \det \big[ (z-\zeta_1^{(\tau)})\bI_N - (z-\zeta_2^{(\tau)})^{-1} \bS \bS^\intercal  \big] \det \big[ (z-\zeta_2^{(\tau)}) \bI_M \big] \\
    &= (z-\zeta_2^{(\tau)})^{M-N} \prod_{k=1}^N \big[ (z-\zeta_1^{(\tau)})(z-\zeta_2^{(\tau)}) - {\sigma_k}^2 \big]
\end{split}
\end{equation*}

Except for the first replica, the Gaussian integral is (up to constants):
\begin{equation*}
    \exp \bigg\{ -\frac{1}{2} \Big[ (M-N) \ln (z-\zeta_2^{(\tau)}) + \sum_{k=1}^N \ln \big\{ (z-\zeta_1^{(\tau)})(z-\zeta_2^{(\tau)}) - {\sigma_k}^2 \big\} \Big] \bigg\}
\end{equation*}
And, the integral for the first replica is the above expression multiplied by $\big({\bM^{(1)}}^{-1}\big)_{ij}$. By Proposition 2.8.7 \cite{bernstein2009matrix}), ${\bM^{(1)}}^{-1}$ can be written as 
\begin{equation}
   {\bM^{(1)}}^{-1} = \left[
\begin{array}{cc}
\frac{1}{z-\zeta_1^{(1)}} \bI_N +\frac{1}{z-\zeta_1^{(1)}} \bS \bG_{S^\intercal  S} \big((z-\zeta_2^{(1)})(z - \zeta_1^{(1)})\big) \bS^\intercal  &  \bS \bG_{S^\intercal  S} \big((z-\zeta_2^{(1)})(z - \zeta_1^{(1)})\big)  \\
\bG_{S^\intercal  S}\big((z-\zeta_2^{(1)})(z - \zeta_1^{(1)})\big) \bS^\intercal  & (z - \zeta_1^{(1)}) \bG_{S^\intercal  S} \big((z-\zeta_2^{(1)})(z - \zeta_1^{(1)})\big)
\end{array}
\right]
\label{M-inverse}
\end{equation}
with $\bG_{S^\intercal  S}$ the resolvent of the matrix $\bS^\intercal  \bS$. 

Putting all this together, the integral in \eqref{eq15} can be written as 
\begin{equation}
    \begin{split}
        \langle  G_{ij}(z) \rangle &\propto  \int \bigg( \prod_{\tau=1}^n d p_1^{(\tau)} \, d p_2^{(\tau)} \,  d \zeta_1^{(\tau)} \,  d \zeta_2^{(\tau)}  \bigg) \big({\bM^{(1)}}^{-1}\big)_{ij}   \exp \Big\{ - \frac{N n}{2} F_0 \big( \bm{p}_1, \bm{p}_2, \boldsymbol{\zeta}_1, \bm{\zeta}_2 \big)\Big\}
    \end{split}
    \label{eq19}
\end{equation}
with
\begin{equation*}
\begin{split}
        F_0 \big( \bm{p}_1, \bm{p}_2, \bm{\zeta}_1, \bm{\zeta}_2 \big) &= \frac{1}{n} \sum_{\tau=1}^n  \Big[ \frac{1}{N} \sum_{k=1}^N \ln \big\{ (z-\zeta_1^{(\tau)})(z-\zeta_2^{(\tau)}) - {\sigma_k}^2 \big\} - \big( 1 - \frac{1}{\alpha} \big) \ln (z-\zeta_2^{(\tau)})\\
        &\quad \quad \quad \quad  - \mathcal{Q}_{\mu_Z} ( p_1^{(\tau)} p_2^{(\tau)} ) + \zeta_1^{(\tau)} p_1^{(\tau)} + \frac{1}{\alpha} \zeta_2^{(\tau)} p_2^{(\tau)} \Big] 
\end{split}
\end{equation*}
In the large $N$ limit, the integral in \eqref{eq19} can be computed using the saddle-points of the function $F_0$. In the evaluation of this integral, we use the \textit{replica symmetric} ansatz that assumes a saddle-point of the form:
\begin{equation*}
    \forall \tau \in \{1, \cdots, n\}: \quad 
        p_1^{(\tau)} = p_1, \quad p_2^{(\tau)} = p_2, \quad \zeta_1^{(\tau)} = \zeta_1, \quad \zeta_2^{(\tau)} = \zeta_2
\end{equation*}

One finds that the extremum of the function is then at:
\begin{equation}
    \begin{cases}
    p_1^* = (z - \zeta^*_2) \mathcal{G}_{\rho_S}\big( (z-\zeta^*_1)(z-\zeta^*_2) \big) \\
    p_2^* = (1-\alpha ) \frac{1}{z - \zeta^*_2}+\alpha(z - \zeta^*_1) \mathcal{G}_{\rho_S}\big( (z-\zeta^*_1)(z-\zeta^*_2) \big)\\
    \zeta^*_1 = \frac{\mathcal{C}^{(\alpha)}_{\mu_Z}( p_1^* p_2^* )} {p_1^*} \\
    \zeta^*_2 = \alpha \frac{\mathcal{C}^{(\alpha)}_{\mu_Z}( p_1^* p_2^* )} {p_2^*}
    \end{cases}
\end{equation}
where $\mathcal{G}_{\rho_S}$ is the Stieltjes transform of the matrix $\bS \bS^\intercal $,  whose limiting eigenvalue distribution is the squared transform of the limiting singular value distribution of $\bS$.

To simplify the solution, we compute the normalized trace of both sides in \eqref{eq19}. First on the r.h.s we compute the trace of the matrix $\bM^{-1}$ in \eqref{M-inverse} plugging $\zeta^*_1, \zeta^*_2$. The trace of the first block is:
\begin{equation}
    \begin{split}
 \frac{1}{N} \frac{1}{z-\zeta_1^*} \sum_{k=1}^{N} \Big[ 1 +  \frac{{\sigma_k}^2}{(z-\zeta_2^*)(z - \zeta_1^*) - {\sigma_k}^2} \Big] &=\frac{1}{N} (z-\zeta_2^*) \sum_{k=1}^{N}  \frac{1}{(z-\zeta_2^*)(z - \zeta_1^*) - {\sigma_k}^2} \\
        &\approx (z-\zeta_2^*) \mathcal{G}_{\rho_S}\big( (z-\zeta_2^*)(z - \zeta_1^*) \big) \\
        &= p_1^*
    \end{split}
    \label{trof1stblock}
\end{equation}
Similarly, the trace of the last block can be computed to be $p^*_b$.

The matrix in the l.h.s of \eqref{eq19} is $\bG_{\mathcal{Y}}(z)$, which has the blocks 
\begin{equation}
    \begin{split}
        \bG_{\mathcal{Y}}(z) &= \big(z \bI - \cY \big)^{-1} \\
&\hspace{-10pt}= \left[
\begin{array}{cc}
z^{-1} \bI_N + z^{-1} \bY \bG_{Y^\intercal Y}(z^2) \bY^\intercal  & \bY \bG_{Y^\intercal Y}(z^2) \\
\bG_{Y^\intercal Y}(z^2) \bY^\intercal  & z \bG_{Y^\intercal Y}(z^2)
\end{array}
\right]
    \end{split}
    \label{resolvent_Y}
\end{equation}
The trace of the first block is:
\begin{equation}
\begin{split}
\frac{1}{N} \frac{1}{z} \sum_{k=1}^N \big[ 1 + \frac{{\gamma_k}^2}{z^2 - {\gamma_k}^2} \big] &= \frac{1}{N} z \sum_{k=1}^N \frac{1}{z^2 - {\gamma_k}^2} \\
    &\approx z \mathcal{G}_{\rho_Y}(z^2)
\end{split}
\label{trace of first LHS}
\end{equation}
Therefore, from \eqref{trof1stblock}, we find $p_1^* = z \mathcal{G}_{\rho_Y}(z^2)$.
The trace of the last block can be evaluated to be
$\alpha z \mathcal{G}_{\rho_Y}(z^2) + (1- \alpha ) \frac{1}{z}$. So, $p_2^* = \alpha z \mathcal{G}_{\rho_Y}(z^2) + (1- \alpha ) \frac{1}{z}$.






Thus we find
\begin{equation}
    \begin{cases}
    p_1^* = z \mathcal{G}_{\rho_Y}(z^2) = \frac{1}{z} \mathcal{M}_{\mu_Y} \big( \frac{1}{z^2} \big) + \frac{1}{z} \\
    p_2^* = \alpha z \mathcal{G}_{\rho_Y}(z^2) + (1- \alpha ) \frac{1}{z}= \alpha \frac{1}{z} \mathcal{M}_{\mu_Y} \big( \frac{1}{z^2} \big) + \frac{1}{z}
    \end{cases}
\end{equation}
and 
\begin{equation*}
    p_1^* p_2^* = \frac{1}{z^2} T^{(\alpha)} \Big( \mathcal{M}_{\mu_Y} \big( \frac{1}{z^2} \big) \Big),
\end{equation*}
which implies
\begin{equation}
    \zeta^*_1 = z\frac{\mathcal{C}^{(\alpha)}_{\mu_Z}\bigg(\frac{1}{z^2} T^{(\alpha)} \Big( \mathcal{M}_{\mu_Y} \big( \frac{1}{z^2} \big) \Big)\bigg)}{\mathcal{M}_{\mu_Y} \big( \frac{1}{z^2} \big) +1} ,\quad
    \zeta^*_2 =  \alpha z\frac{\mathcal{C}^{(\alpha)}_{\mu_Z}\bigg(\frac{1}{z^2} T^{(\alpha)} \Big( \mathcal{M}_{\mu_Y} \big( \frac{1}{z^2} \big) \Big)\bigg)}{\alpha \mathcal{M}_{\mu_Y} \big( \frac{1}{z^2} \big) +1}
    \label{zeta_sol-app}
\end{equation}
\section{Derivation of Rectangular Free Convolution}\label{free-add-conv}
Consider the normalized trace of the first block on each side in \eqref{resolvent-relation}. The trace of the first block of the lhs is computed in \eqref{trace of first LHS} which is $\frac{1}{z} \mathcal{M}_{\mu_Y}\big( \frac{1}{z^2} \big) + \frac{1}{z}$. The trace of the first block in rhs is computed in \eqref{trof1stblock} which is $(z-\zeta_2^*) \mathcal{G}_{\rho_S}\big( (z-\zeta_2^*)(z - \zeta_1^*) \big)$.
\begin{equation*}
    \begin{split}
        \frac{1}{z} \mathcal{M}_{\mu_Y}\big( \frac{1}{z^2} \big) + \frac{1}{z} &= (z-\zeta_2^*) \mathcal{G}_{\rho_S}\big( (z-\zeta_2^*)(z - \zeta_1^*) \big) \\
        &= (z-\zeta_2^*) \frac{1}{(z-\zeta_2^*)(z - \zeta_1^*)} \Big(\mathcal{M}_{\mu_S} \big( \frac{1}{(z-\zeta_2^*)(z - \zeta_1^*)} \big) + 1 \Big) \\
        &= \frac{1}{z - \zeta_1^*} \mathcal{M}_{\mu_S} \big( \frac{1}{(z-\zeta_2^*)(z - \zeta_1^*)} \big) + \frac{1}{z - \zeta_1^*}
    \end{split}
\end{equation*}
From which, we get:
\begin{equation*}
    (z - \zeta_1^*) \mathcal{M}_{\mu_Y}\big( \frac{1}{z^2} \big) + z - \zeta_1^* = z \mathcal{M}_{\mu_S} \big( \frac{1}{(z-\zeta_2^*)(z - \zeta_1^*)} \big) + z
\end{equation*}

Taking the $\zeta_1^*$ to the rhs, and plugging the expression for $\zeta_1^*$ from \eqref{zeta_sol-app}, after a bit of algebra we find:
\begin{equation} \label{eq27}
    \mathcal{M}_{\mu_Y}\big( \frac{1}{z^2} \big) = \mathcal{M}_{\mu_S} \big( \frac{1}{(z-\zeta_2^*)(z - \zeta_1^*)} \big) + \mathcal{C}^{(\alpha)}_{\mu_Z}\bigg(\frac{1}{z^2} T^{(\alpha)} \Big( \mathcal{M}_{\mu_Y}\big( \frac{1}{z^2} \big) \Big)\bigg) \notag
\end{equation}

Let $\frac{1}{z^2} T^{(\alpha)} \Big( \mathcal{M}_{\mu_Y}\big( \frac{1}{z^2} \big) \Big) = u$. Then, $\frac{1}{z^2} =  {\mathcal{H}_{\mu_Y}^{(\alpha)}}^{-1}(u)$. Moreover, from the definition one can see that $\mathcal{M}_{\mu_Y}\Big( {\mathcal{H}_{\mu_Y}^{(\alpha)}}^{-1}(u) \Big) = \mathcal{C}^{(\alpha)}_{\mu_Y} (u)$. So, \eqref{eq27} can be written as:
\begin{equation}
    \mathcal{C}^{(\alpha)}_{\mu_Y} (u) = \mathcal{M}_{\mu_S} \big( \frac{1}{(z-\zeta_2^*)(z - \zeta_1^*)} \big) + \mathcal{C}^{(\alpha)}_{\mu_Z}(u)
    \label{relation for u}
\end{equation}

From \eqref{zeta_sol-app},
\begin{equation*}
    \begin{split}
        (z-\zeta_2^*)(z - \zeta_1^*) &= z^2 \Big( 1 - \frac{\mathcal{C}^{(\alpha)}_{\mu_Z}(u)}{\mathcal{C}^{(\alpha)}_{\mu_Y}(u) + 1} \Big) \Big( 1 - \frac{\alpha \mathcal{C}^{(\alpha)}_{\mu_Z}(u)}{\alpha \mathcal{C}^{(\alpha)}_{\mu_Y}(u) + 1} \Big) \\
        &= z^2 \bigg[ 1 - \mathcal{C}^{(\alpha)}_{\mu_Z}(u) \Big( \frac{1}{\mathcal{C}^{(\alpha)}_{\mu_Y}(u) + 1} + \frac{\alpha}{\alpha \mathcal{C}^{(\alpha)}_{\mu_Y}(u) + 1} \Big) + \frac{\alpha \big( \mathcal{C}^{(\alpha)}_{\mu_Z}(u)\big)^2}{T^{(\alpha)}\big( \mathcal{C}^{(\alpha)}_{\mu_Z}(u)\big)} \bigg] \\
        &= \frac{z^2}{T^{(\alpha)}\big( \mathcal{C}^{(\alpha)}_{\mu_Y}(u)\big)} T^{(\alpha)} \big(  \mathcal{C}^{(\alpha)}_{\mu_Y}(u) -  \mathcal{C}^{(\alpha)}_{\mu_Z}(u) \big)
    \end{split}
\end{equation*}
The first factor, using the definition of $ \mathcal{C}^{(\alpha)}_{\mu_Y}(u)$, is:
\begin{equation*}
    \begin{split}
        \frac{z^2}{T^{(\alpha)}\big( \mathcal{C}^{(\alpha)}_{\mu_Y}(u)\big)} &= \frac{1}{\frac{1}{z^2}} \frac{1}{T^{(\alpha)}\big( \mathcal{C}^{(\alpha)}_{\mu_Y}(u)\big)} \\
        &= \frac{1}{{\mathcal{H}_{\mu_Y}^{(\alpha)}}^{-1}(u)} \frac{1}{\frac{u}{{\mathcal{H}_{\mu_Y}^{(\alpha)}}^{-1}(u)}} = \frac{1}{u}
    \end{split}
\end{equation*}

So, \eqref{relation for u} can be written as
\begin{equation*}
    \mathcal{C}^{(\alpha)}_{\mu_Y}(u) -  \mathcal{C}^{(\alpha)}_{\mu_Z}(u) = \mathcal{M}_{\mu_S} \Big( \frac{u}{T^{(\alpha)} \big(  \mathcal{C}^{(\alpha)}_{\mu_Y}(u) -  \mathcal{C}^{(\alpha)}_{\mu_Z}(u) \big)} \Big)
\end{equation*}

One can see that, if the limiting singular value distribution of $\bS$, is not $\delta(x)$, the unique solution to the equation $ \mathcal{M}_{\mu_S}  \big( \frac{u}{T^{(\alpha)}(x)} \big) = x$, is $x = \mathcal{C}^{(\alpha)}_{\mu_S}(u)$ (see lemma 4.2 in \cite{benaych2011rectangular} for a particular case). Therefore, we find:
\begin{equation}
    \mathcal{C}^{(\alpha)}_{\mu_Y}(u) -  \mathcal{C}^{(\alpha)}_{\mu_Z}(u) = \mathcal{C}^{(\alpha)}_{\mu_S}(u)
\end{equation}
as we expected.
\section{Details of proof of Theorem 5}\label{proof-thm1-app}
\subsection{Proof of proposition \ref{asymp-free-energy-2-proposition}}\label{proof of prop1}
We start from the partition function,
\begin{equation}\label{Z2-comp1}
    \begin{split}
        \tilde{Z}(\tilde{\bY}) &=  \int d \bX e^{N \Tr \big[ \sqrt{\lambda}\bX^\intercal  \tilde{\bY} - \frac{\lambda}{2} \bX^\intercal  \bX \big] } P_{\tilde{S}}(\bX)  \\
        &= \iiint d \bsig \,d \mu_N(\bU) \, d \mu_M(\bV) \, \prod_{i=1}^N \delta(\tilde{\sigma}_i - \sigma^0_i) \,  e^{N \Tr [ \sqrt{\lambda} \bV \tilde{\bSig}^\intercal  \bU^\intercal  \tilde{\bY} - \frac{\lambda}{2}\tilde{\bSig} \tilde{\bSig}^\intercal  ]}\\
        &= e^{ - \frac{N}{2} \lambda \Tr {\bSig^0}^\intercal  {\bSig^0} } \iint d \mu_N(\bU) \,  d \mu_M(\bV) \, e^{N \Tr [ \sqrt{\lambda} {\bSig^0}^\intercal  \bU  \tilde{\bY} \bV^\intercal  ]} \\
        &=  e^{ - \frac{N}{2} \lambda \Tr  {\bSig^0}^\intercal  {\bSig^0}  } \mathcal{I}_{N,M} \big( \sqrt{\lambda} \bSig^0, \tilde{\bY} \big)
    \end{split}
\end{equation}

where, we change variables $\bU \to \bU^\intercal , \bV \to \bV^\intercal $ in third line to match the definition of the spherical integral.

Recall that $\tilde{\bY} = \sqrt{\lambda} \bU \bSig^0 \bV^\intercal  + \tilde{\bZ}$, so with $\mathcal{J}_{N,M} \big( \sqrt{\lambda} \bSig^0, \tilde{\bY} \big) \equiv \frac{1}{N M} \ln \mathcal{I}_{N,M} \big( \sqrt{\lambda} \bSig^0, \tilde{\bY} \big) $ the free energy can be written as:
\begin{equation}\label{F2-com}
    \begin{split}
        \tilde{F}_N(\lambda) &= \bE_{\tilde{\bY}} \Big[ \frac{\lambda}{2 M} \Tr {\bSig^0}^\intercal  {\bSig^0}  - \mathcal{J}_{N,M} \big( \sqrt{\lambda} \bSig^0, \tilde{\bY} \big) \Big]  \\
        &=  \frac{\lambda}{2 M} \sum_{i=1}^N {\sigma^0_i}^2 - \bE_{\bU,\bV,\tilde{\bZ}} \Big[ \mathcal{J}_{N,M}\big( \sqrt{\lambda} \bSig^0, \sqrt{\lambda} \bU \bSig^0 \bV^\intercal  + \tilde{\bZ} \big) \Big]
    \end{split}
\end{equation}
By bi-rotational invariance of $\tilde{\bZ}$, the second term equals 
\begin{equation*}
\begin{split}
        \bE_{\bU,\bV,\tilde{\bZ}} \Big[ \mathcal{J}_{N,M}\big( \sqrt{\lambda} \bSig^0, \sqrt{\lambda} \bU \bSig^0 \bV^\intercal  + \bU \tilde{\bZ} \bV^\intercal  \big) \Big]
\end{split}
\end{equation*}
and then, both matrices $\bU, \bV$ can be absorbed into the integration in $\mathcal{J}_{N,M}$. So, the free energy equals:
\begin{equation*}
     \tilde{F}_N(\lambda) = \frac{\lambda}{2 M} \sum_{i=1}^N {\sigma^0_i}^2 - \bE_{\tilde{\bZ}} \Big[ \mathcal{J}_{N,M}\big( \sqrt{\lambda} \bSig^0, \sqrt{\lambda} \bSig^0  +  \tilde{\bZ}\big) \Big]
\end{equation*}

By the strong law of large numbers, the first term in \eqref{F2-com} converges to $\frac{\lambda}{2} \alpha \int x^2 \mu_S(x) \, d x$ almost surely, and proposition \ref{asymp-free-energy-2-proposition} follows from the following lemma.
\begin{lemma}
For any $\lambda \in \bR_+$, the sequence $\bE_{\tilde{\bZ}} \Big[ \mathcal{J}_{N,M}\big( \sqrt{\lambda} \bSig^0, \sqrt{\lambda} \bSig^0  +  \tilde{\bZ}\big) \Big]$ converges to $\mathcal{J}[\mu_{\sqrt{\lambda} S}, \mu_{\sqrt{\lambda} S}\boxplus_{\alpha} \mu_{\rm MP}]$ as $N \to \infty$, $\mu_S$-almost surely.
\label{converg-Expec}
\end{lemma}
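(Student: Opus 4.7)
\bigskip

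\noindent\textbf{Proof proposal for Lemma \ref{converg-Expec}.} The plan is to (i) identify the almost-sure limiting singular-value distributions of the two matrix arguments of $\mathcal{I}_{N,M}$, (ii) invoke the asymptotic formula for the rectangular spherical integral from \cite{guionnet2021large} to get pointwise a.s.\ convergence of $\mathcal{J}_{N,M}$, and (iii) upgrade pointwise convergence to convergence in expectation over $\tilde{\bZ}$ via uniform integrability.

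For step (i): since the entries of $\bsig^0$ are i.i.d.\ from $\mu_S$, the strong law of large numbers implies that, $\mu_S$-almost surely, the ESD of $\sqrt{\lambda}\bSig^0$ converges weakly to $\mu_{\sqrt{\lambda}S}$, which under Assumption \ref{assumptions on law} has compact support in $[\sqrt{\lambda}C_1,\sqrt{\lambda}C_2]$. For the second argument $\sqrt{\lambda}\bSig^0+\tilde{\bZ}$, the matrix $\tilde{\bZ}$ has i.i.d.\ Gaussian entries of variance $1/N$, so its ESD converges a.s.\ to a rescaled Marchenko--Pastur law $\mu_{\rm MP}$ with ratio $\alpha$. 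By the bi-rotational invariance of $\tilde{\bZ}$ one has the distributional identity $\sqrt{\lambda}\bSig^0+\tilde{\bZ}\stackrel{d}{=}\sqrt{\lambda}\bU\bSig^0\bV^\intercal+\tilde{\bZ}$ for independent Haar $\bU,\bV$; hence the standard rectangular asymptotic freeness result (see \cite{benaych2009rectangular}) yields that, $\mu_S$-a.s.\ in $\bsig^0$ and a.s.\ in $\tilde{\bZ}$, the limiting ESD of $\sqrt{\lambda}\bSig^0+\tilde{\bZ}$ equals $\mu_{\sqrt{\lambda}S}\boxplus_\alpha\mu_{\rm MP}$, with bounded second moment.

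For step (ii): With both limiting measures in hand (compactly supported, satisfying the regularity hypotheses of \cite{guionnet2021large}), the main theorem of \cite{guionnet2021large} implies that $\mu_S$-a.s.\ and a.s.\ in $\tilde{\bZ}$,
\[
\mathcal{J}_{N,M}\bigl(\sqrt{\lambda}\bSig^0,\,\sqrt{\lambda}\bSig^0+\tilde{\bZ}\bigr)\;\xrightarrow[N\to\infty]{}\;\mathcal{J}\bigl[\mu_{\sqrt{\lambda}S},\,\mu_{\sqrt{\lambda}S}\boxplus_\alpha\mu_{\rm MP}\bigr].
\]

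For step (iii): I will prove uniform integrability in $\tilde{\bZ}$, conditionally on $\bsig^0$. From $|\Tr\bA^\intercal\bU\bB\bV^\intercal|\le\|\bA\|_{\rm F}\|\bB\|_{\rm F}$ and Jensen's inequality (the Haar average of $\Tr\bA^\intercal\bU\bB\bV^\intercal$ is zero) one obtains the deterministic bounds
\[
0\;\le\;\mathcal{J}_{N,M}\bigl(\sqrt{\lambda}\bSig^0,\sqrt{\lambda}\bSig^0+\tilde{\bZ}\bigr)\;\le\;\frac{1}{M}\|\sqrt{\lambda}\bSig^0\|_{\rm F}\bigl(\|\sqrt{\lambda}\bSig^0\|_{\rm F}+\|\tilde{\bZ}\|_{\rm F}\bigr).
\]
Since $\|\bSig^0\|_{\rm F}^2\le NC_2^2$ and $\mathbb{E}[\|\tilde{\bZ}\|_{\rm F}^2]=M$ with Gaussian Lipschitz concentration around $\sqrt{M}$, the right-hand side has $L^2(\tilde{\bZ})$-norm bounded uniformly in $N$ (using $N/M\to\alpha$). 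This yields uniform integrability and allows to pass the limit through $\mathbb{E}_{\tilde{\bZ}}$.

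The main obstacle is step (ii): one must verify that the hypotheses of the Guionnet--Huang asymptotic formula for rectangular spherical integrals from \cite{guionnet2021large} are met by the random pair of measures, in particular the regularity of $\mu_{\sqrt{\lambda}S}\boxplus_\alpha\mu_{\rm MP}$ (no atoms, bounded support, finite log-potential) and the joint convergence of both ESDs in the mode required by that theorem. Once this is established, steps (i) and (iii) are standard and conclude the lemma together with \eqref{F2-com} and the SLLN applied to $\frac{1}{M}\sum_{i=1}^N(\sigma_i^0)^2\to\alpha\int x^2\mu_S(x)\,dx$. \hfill$\square$
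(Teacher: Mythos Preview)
Your proposal is correct and follows the same overall structure as the paper: steps (i) and (ii) are essentially identical to what the paper does (the paper likewise verifies the hypotheses of the Guionnet--Huang theorem using Assumptions~\ref{assumptions on law} and~\ref{bounded-mom} on $\mu_S$, not on the convolution, so your stated ``main obstacle'' is in fact routine).

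The genuine difference is in step (iii). The paper bounds $|\mathcal{J}_{N,M}|$ by the product of the top singular values, $|\mathcal{J}_{N,M}(\bA,\bB)|\le\sigma_N^A\sigma_N^B$, and then controls the top singular value of $\sqrt{\lambda}\bSig^0+\tilde{\bZ}$ via Davidson--Szarek tail bounds (their Lemmas~\ref{bound-on-J}, \ref{bound-on-prob-of-spectral-radius}, \ref{expec-poly-r}), followed by a somewhat elaborate truncation argument. Your route is more elementary: the Jensen lower bound $\mathcal{J}_{N,M}\ge 0$ together with the Cauchy--Schwarz upper bound $\mathcal{J}_{N,M}\le\frac{1}{M}\|\bA\|_{\rm F}\|\bB\|_{\rm F}$ immediately gives a dominating random variable whose $L^2(\tilde{\bZ})$-norm is uniformly bounded (since $\bE\|\tilde{\bZ}\|_{\rm F}^2=M$ and $\|\bSig^0\|_{\rm F}^2\le NC_2^2$), so uniform integrability follows without any operator-norm concentration. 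This buys a shorter argument and avoids the three auxiliary lemmas; the paper's bound is sharper pointwise but that sharpness is not needed here. Note that your parenthetical about Lipschitz concentration of $\|\tilde{\bZ}\|_{\rm F}$ is unnecessary---the second-moment computation alone suffices.
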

\begin{proof}
We first show that the assumptions of Theorem 1.1 in \cite{guionnet2021large} holds a.s. for the sequence $\sqrt{\lambda} \bSig^0, \sqrt{\lambda} \bSig^0  +  \tilde{\bZ}$, so $\mathcal{J}_{N,M}$ converges to $\mathcal{J}$ a.s. .

By assumption \ref{assumptions on law}, the symmetrized ESD of $\sqrt{\lambda} \bSig^0$ converges weakly to $\bar{\mu}_{\sqrt{\lambda}S}$ by construction.  The ESD of $\tilde{\bZ}$ converges a.s. to the  MP law $\mu_{\rm MP}$, so by independence of $\bSig^0, \tilde{\bZ}$ the limiting ESD of $\sqrt{\lambda} \bSig^0  +  \tilde{\bZ}$ is the rectangular free convolution of $\mu_{\sqrt{\lambda}S}, \mu_{\rm MP}$ denoted by $\mu_{\sqrt{\lambda} S}\boxplus_{\alpha} \mu_{\rm MP}$ . Moreover, by assumptions \ref{assumptions on law}, \ref{bounded-mom}, the second moment of ESD of $\sqrt{\lambda} \bSig^0$ is boounded and $\bar{\mu}_S$ has finite non-commutative entropy $\iint \ln | x - y |  \, d \bar{\mu}_S(x) \, d \bar{\mu}_S(y) > -\infty$, and $\int \ln |x| \, d \bar{\mu}_S(x) > - \infty$. Therefore, the sequence $\mathcal{J}_{N,M}\big( \sqrt{\lambda} \bSig^0, \sqrt{\lambda} \bSig^0  +  \tilde{\bZ}\big)$ converges a.s. to $\mathcal{J}[\mu_{\sqrt{\lambda} S}, \mu_{\sqrt{\lambda} S}\boxplus_{\alpha} \mu_{\rm MP}]$.

Now, we prove that the limit also holds under the expectation $\bE_{\tilde{\bZ}}$. For simplicity of notation we denote $ \mathcal{J}_{N,M}\big( \sqrt{\lambda} \bSig^0, \sqrt{\lambda} \bSig^0  +  \tilde{\bZ}\big) $ by $\mathcal{J}_{N}$, and $\mathcal{J}[\mu_{\sqrt{\lambda} S}, \mu_{\sqrt{\lambda} S}\boxplus_{\alpha} \mu_{\rm MP}]$ by $\mathcal{J}$. By Jensen's inequality (note that the expectation is over the matrix $\tilde{\bZ}$), we have
\begin{equation}
\big| \bE [ \mathcal{J}_{N} ] - \mathcal{J} \big| \leq \bE \big[ | \mathcal{J}_{N} - \mathcal{J} | \big].
\label{Jensen-expec}
\end{equation}
Let $X_N \equiv \mathcal{J}_{N} - \mathcal{J}$. For $\epsilon > 0$ We can write
\begin{equation}
\begin{split}
    \bE \big[ |X_N| \big] &= \bE \big[ |X_N| \, \mathbb{I} \{ |X_N| \leq \epsilon \} \big] + \bE \big[ |X_N| \,\mathbb{I} \{ |X_N| > \epsilon \} \big] \\
    &\leq \epsilon + \bE \big[ |X_N| \, \mathbb{I} \{ |X_N| > \epsilon \} \big].
\label{converg-Expec-1}
\end{split}
\end{equation}
By lemma \ref{bound-on-J}, $ | \mathcal{J}_{N} | \leq \sqrt{\lambda} C_2 \tilde{\gamma}_N$, where $\tilde{\gamma}_N$ is the top singular value of $ \sqrt{\lambda} \bSig^0  +  \tilde{\bZ}$. The second term in \eqref{converg-Expec-1} can be bounded as, 
\begin{equation}
    \bE \big[ |X_N| \, \mathbb{I} \{ |X_N| > \epsilon \} \big] \leq \bE \big[ |W_N| \, \mathbb{I} \{ |X_N| > \epsilon \} \big]
\label{converg-Expec-2}
\end{equation}
where 
$$
W_N = \max \Big \{ \big| \mathcal{J} -  \sqrt{\lambda} C_2 \tilde{\gamma}_N \big|, \big| \mathcal{J} +  \sqrt{\lambda} C_2 \tilde{\gamma}_N \big| \Big\} =  \sqrt{\lambda} C_2 \tilde{\gamma}_N + {\rm sign}(\mathcal{J}) \mathcal{J} .
$$
For any positive constant $t$, we have
\begin{equation}
\begin{split}
    \bE \big[ |W_N| \, \mathbb{I} \{ |X_N| > \epsilon \} \big] &= \bE \big[ |W_N| \, \mathbb{I} \{ |X_N| > \epsilon \} \, \mathbb{I} \{ |W_N| \leq t \} \big] + \bE \big[ |W_N| \, \mathbb{I} \{ |X_N| > \epsilon \} \, \mathbb{I} \{ |W_N| > t \} \big] \\
    & \leq \bE \big[ |W_N| \, \mathbb{I} \{ |X_N| > \epsilon \} \mathbb{I} \{ |W_N| \leq t \} \big] + \bE \big[ |W_N| \, \mathbb{I} \{ |W_N| > t \} \big]
\end{split}
\label{converg-Expec-3}
\end{equation}
For the first term in \eqref{converg-Expec-3} we can write
\begin{equation}
\begin{split}
    \bE \big[ |W_N| \, \mathbb{I} \{ |X_N| > \epsilon \} \mathbb{I} \{ |W_N| \leq t \} \big] &\leq t \bE \big[ \mathbb{I} \{ |X_N| > \epsilon \} \big] \\
    &\leq t \, \mathbb{P} \big( |X_N| > \epsilon  \big)
\end{split}
\label{converg-Expec-4}
\end{equation}
and the second term in \eqref{converg-Expec-3} can be rewritten as 
\begin{equation}
    \bE \big[ |W_N| \, \mathbb{I} \{ |W_N| > t \} \big] = \bE \bigg[ |W_N| \, \mathbb{I} \Big\{ \tilde{\gamma}_N > \frac{2}{\sqrt{\lambda} C_2} \big(t-{\rm sign}(\mathcal{J}) \mathcal{J}\big) \Big\} \bigg].
\label{converg-Expec-5}
\end{equation}
From \eqref{converg-Expec-2}, \eqref{converg-Expec-3}, \eqref{converg-Expec-4}, we obtain
\begin{equation}
    \bE \big[ |X_N| \, \mathbb{I} \{ |X_N| > \epsilon \} \big] \leq t \,  \mathbb{P} \big( |X_N| > \epsilon  \big) + \bE \bigg[ |W_N| \, \mathbb{I} \Big\{ \tilde{\gamma}_N > \frac{2}{\sqrt{\lambda} C_2} \big(t-{\rm sign}(\mathcal{J}) \mathcal{J}\big) \Big\} \bigg].
\label{converg-Expec-6}
\end{equation}
Notice that $W_N$ is a polynomial function of $\tilde{\gamma}_N$, so by lemma \ref{expec-poly-r}, vanishes as $N \to \infty$ for sufficiently large constant $t$. By almost sure convergence of $\mathcal{J}_N$ to $\mathcal{J}$, $ \mathbb{P} \big( |X_N| > \epsilon  \big) \xrightarrow{N \to \infty} 0$. For a fixed $t>0$, the first term in \eqref{converg-Expec-6} goes to $0$ in the limit $ N \to \infty$. Therefore, taking the limit of both sides in \eqref{converg-Expec-1}, for any $\epsilon >0$, we find:
\begin{equation}
    \lim_{N \to \infty} \bE \big[ |X_N| \big] \leq \epsilon .
\end{equation}
From which, by \eqref{Jensen-expec}, we deduce that $\lim_{ N \to \infty} \bE [\mathcal{J}_{N}] = \mathcal{J}$.
\end{proof} 

\subsubsection{Technical Lemmas}
\begin{lemma}
For any $N$, $M\geq N$, and $N \times M$ symmetric matrices $\bA, \bB$ with top singular values $\sigma_N^{A}, \sigma_N^{B}$
\begin{equation*}
    -  \sigma_N^{A} \sigma_N^{B} \leq \mathcal{J}_{N,M} (\bA, \bB) \leq  \sigma_N^{A} \sigma_N^{B}.
\end{equation*}
\label{bound-on-J}
\end{lemma}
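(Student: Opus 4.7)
}

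The plan is to treat the two bounds separately, since each admits a very short argument. Recall
\[
\mathcal{J}_{N,M}(\bA,\bB) \;=\; \frac{1}{NM}\ln \iint D\bU\, D\bV \, e^{N\Tr \bA^\intercal \bU \bB \bV^\intercal},
\]
where $D\bU, D\bV$ are the normalised Haar probability measures on $O(N), O(M)$.

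For the upper bound I will simply pull the maximum out of the integral. Since $D\bU, D\bV$ are probability measures,
\[
\mathcal{I}_{N,M}(\bA,\bB) \;\le\; \exp\!\Big(N\!\!\sup_{\bU\in O(N),\,\bV\in O(M)}\Tr \bA^\intercal \bU \bB \bV^\intercal\Big).
\]
The key input is von Neumann's trace inequality for rectangular matrices: for $\bA,\bB\in\mathbb{R}^{N\times M}$ with singular values $\sigma_1^A\le\cdots\le\sigma_N^A$ and $\sigma_1^B\le\cdots\le\sigma_N^B$, and for any $\bU\in O(N), \bV\in O(M)$,
\[
\Tr \bA^\intercal \bU \bB \bV^\intercal \;\le\; \sum_{i=1}^{N}\sigma_i^A \sigma_i^B \;\le\; N\, \sigma_N^A \sigma_N^B.
\]
Taking logarithms and dividing by $NM$ gives $\mathcal{J}_{N,M}(\bA,\bB)\le \frac{N}{M}\sigma_N^A\sigma_N^B$, and the assumption $M\ge N$ then yields the desired upper bound $\sigma_N^A\sigma_N^B$.

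For the lower bound I will use Jensen's inequality in the other direction. Because $x\mapsto e^x$ is convex and $D\bU,D\bV$ are probability measures,
\[
\mathcal{I}_{N,M}(\bA,\bB) \;\ge\; \exp\!\Big(N\iint D\bU\, D\bV\, \Tr \bA^\intercal \bU \bB \bV^\intercal\Big).
\]
By left-right invariance of the Haar measure, $\int D\bU\, \bU = 0$ (and similarly $\int D\bV\, \bV^\intercal = 0$), so the integrand on the right-hand side equals $\Tr \bA^\intercal\big(\int D\bU\, \bU\big)\bB\big(\int D\bV\,\bV^\intercal\big)=0$. Hence $\mathcal{I}_{N,M}(\bA,\bB)\ge 1$, i.e.\ $\mathcal{J}_{N,M}(\bA,\bB)\ge 0$, which is stronger than the required lower bound $-\sigma_N^A\sigma_N^B$.

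There is no real obstacle here: both directions are essentially one-line applications of Jensen on the appropriate side, the only non-trivial ingredient being the (standard) rectangular von Neumann trace inequality and the fact that the first moment of the Haar measure on $O(N)$ vanishes. The only minor point to be careful about is the indexing convention ($\sigma_N$ denotes the \emph{largest} singular value) and the use of the hypothesis $M\ge N$ to absorb the factor $N/M$ in the upper bound.
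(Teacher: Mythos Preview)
Your proof is correct and in fact slightly sharper than the paper's. The paper proceeds symmetrically for both bounds: it writes $\Tr \bSig_A^\intercal \bU \bSig_B \bV^\intercal=\sum_{i,j=1}^N \sigma_i^A\sigma_j^B U_{ij}V_{ij}$, bounds this in absolute value by $\sigma_N^A\sigma_N^B\sum_{i,j}|U_{ij}V_{ij}|\le \sigma_N^A\sigma_N^B\|\bU\|_{\rm F}\|\bV\|_{\rm F}\le \sigma_N^A\sigma_N^B\sqrt{NM}$, and then divides by $NM$ to obtain $|\mathcal{J}_{N,M}|\le \sqrt{N/M}\,\sigma_N^A\sigma_N^B$. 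Your upper bound instead invokes the von Neumann trace inequality, which gives the tighter constant $N/M$ in place of $\sqrt{N/M}$; your lower bound via Jensen and $\int \bU\, D\bU=0$ yields $\mathcal{J}_{N,M}\ge 0$, strictly stronger than the paper's $-\sigma_N^A\sigma_N^B$. The trade-off is that the paper's argument is completely elementary (only Cauchy--Schwarz on entries) and gives both inequalities at once, whereas your argument uses two distinct tools but delivers sharper conclusions on each side. Either version is perfectly adequate for how the lemma is used downstream (controlling $|\mathcal{J}_N|$ by a polynomial in the top singular value).
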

\begin{proof}
Let $\bA = \bU_{A} \bSig_{A} \bV^\intercal_{A}$,  $\bB = \bU_{B} \bSig_{B} \bV^\intercal_{B}$ be the SVD of $\bA$, $\bB$. We can write
\begin{equation*}
\begin{split}
    \mathcal{I}_{N,M} (\bA, \bB) = \iint D \bU \, D \bV \,  e^{N \Tr  \bSig_A^\intercal \bU \bSig_B \bV^\intercal} = \iint D \bU \, D \bV \,  e^{N \sum_{i,j=1}^N \sigma_i^A \sigma_j^B U_{ij} V_{ij}}
\end{split}
\end{equation*}
The term in the exponent can be bounded as:
\begin{equation}
    \sum_{i,j=1}^N \sigma_i^A \sigma_j^B U_{ij} V_{ij} \leq  \sum_{i,j=1}^N \big| \sigma_i^A \sigma_j^B U_{ij} V_{ij} \big| \leq \sigma_N^A \sigma_N^B  \sum_{i,j=1}^N  \big| U_{ij} V_{ij} \big| \leq \|\bU \|_{\rm F} \| \bV \|_{\rm F}  \sigma_N^A \sigma_N^B \leq \sqrt{N M} \sigma_N^A \sigma_N^B
\end{equation}
which implies $\mathcal{I}_{N,M} (\bA, \bB) \leq e^{N \sqrt{N M} \sigma_N^A \sigma_N^B }$. Similarly, we get $\mathcal{I}_{N,M} (\bA, \bB) \geq e^{-N \sqrt{N M} \sigma_N^A \sigma_N^B }$. Therefore, we obtain
\begin{equation*}
    - \sqrt{\frac{N}{M}} \sigma_N^A \sigma_N^B \leq \mathcal{J}_{N,M} (\bA, \bB) \leq \sqrt{\frac{N}{M}} \sigma_N^A \sigma_N^B
\end{equation*}
The result follows since $N \leq M$.
\end{proof}

\begin{lemma}
Let $\tilde{\gamma}_N$ be the top singular value of the matrix $\sqrt{\lambda} \bSig^0  +  \tilde{\bZ}$. For $k > 1 + \sqrt{K} + \sqrt{\lambda} C_2 $, we have
\begin{equation*}
    \begin{split}
        \mathbb{P} \{ \tilde{\gamma}_N \geq k \} \leq e^{-\frac{N}{2}(k-\sqrt{\lambda}C_2 - 1 - \sqrt{K})^2}.
    \end{split}
\end{equation*}
\label{bound-on-prob-of-spectral-radius}
\end{lemma}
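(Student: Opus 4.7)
The plan is to bound $\tilde{\gamma}_N$ by triangle inequality for the operator norm, then apply Gaussian concentration to the operator norm of the Gaussian part.

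First I would note that $\tilde{\gamma}_N = \|\sqrt{\lambda}\bSig^0 + \tilde{\bZ}\|_{\rm op}$, so the triangle inequality for the operator norm gives
\begin{equation*}
\tilde{\gamma}_N \leq \sqrt{\lambda}\,\|\bSig^0\|_{\rm op} + \|\tilde{\bZ}\|_{\rm op}.
\end{equation*}
Since the singular values of $\bS$ (hence the entries of $\bsig^0$) lie in $[C_1, C_2]$ by Assumption \ref{assumptions on law}, we have $\|\bSig^0\|_{\rm op} \leq C_2$, so
\begin{equation*}
\mathbb{P}\{\tilde{\gamma}_N \geq k\} \;\leq\; \mathbb{P}\bigl\{\|\tilde{\bZ}\|_{\rm op} \geq k - \sqrt{\lambda}\,C_2\bigr\}.
\end{equation*}

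Next I would apply the Gaussian concentration inequality recalled in subsection \ref{concentrationlemma}. The entries of $\tilde{\bZ}\in\mathbb{R}^{N\times M}$ are i.i.d.\ Gaussian of variance $1/N$, and the map $\bZ\mapsto\|\bZ\|_{\rm op}$ is $1$-Lipschitz with respect to the Frobenius (i.e.\ Euclidean) norm, since
\begin{equation*}
\bigl|\|\bA\|_{\rm op} - \|\bB\|_{\rm op}\bigr| \leq \|\bA-\bB\|_{\rm op} \leq \|\bA-\bB\|_{\rm F}.
\end{equation*}
Therefore, with $\sigma^2 = 1/N$ and $L=1$, the one-sided concentration bound gives, for every $t>0$,
\begin{equation*}
\mathbb{P}\bigl\{\|\tilde{\bZ}\|_{\rm op} \geq \mathbb{E}\|\tilde{\bZ}\|_{\rm op} + t \bigr\} \;\leq\; e^{-\frac{N t^2}{2}}.
\end{equation*}

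Finally I would use the classical Gordon--Slepian bound on the expected operator norm of a Gaussian matrix with i.i.d.\ $\mathcal{N}(0,1/N)$ entries,
\begin{equation*}
\mathbb{E}\|\tilde{\bZ}\|_{\rm op} \;\leq\; 1 + \sqrt{M/N} \;\leq\; 1 + \sqrt{K},
\end{equation*}
the last inequality using the hypothesis $M/N\leq K$. Setting $t = k - \sqrt{\lambda}\,C_2 - 1 - \sqrt{K}$, which is strictly positive by the assumption $k > 1+\sqrt{K}+\sqrt{\lambda}\,C_2$, yields
\begin{equation*}
\mathbb{P}\{\tilde{\gamma}_N\geq k\} \;\leq\; \mathbb{P}\bigl\{\|\tilde{\bZ}\|_{\rm op} \geq 1+\sqrt{K}+t\bigr\} \;\leq\; e^{-\frac{N}{2}(k-\sqrt{\lambda}C_2-1-\sqrt{K})^2},
\end{equation*}
which is the claimed bound. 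None of the steps is really an obstacle here; the only subtle point worth double-checking is the use of the one-sided (rather than two-sided) Gaussian concentration constant, which is needed to avoid a stray factor of $2$ in the final exponential.
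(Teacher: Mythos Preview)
Your proposal is correct and follows essentially the same route as the paper: both reduce to a tail bound on $\|\tilde{\bZ}\|_{\rm op}$ via the triangle inequality and then use Gaussian concentration combined with the Gordon bound $\mathbb{E}\|\tilde{\bZ}\|_{\rm op}\le 1+\sqrt{M/N}$. The only cosmetic difference is that the paper cites this combination as a black box (Davidson--Szarek, Theorem~II.13), whereas you spell out the two ingredients explicitly.
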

\begin{proof}
By triangle inequality, we have:
\begin{equation*}
    \tilde{\gamma}_N \leq \sqrt{\lambda} \max \sigma_i + \gamma^{\tilde{Z}}_N \leq  \sqrt{\lambda}  C_2 + \gamma^{\tilde{Z}}_N
\end{equation*}
where $\gamma^{\tilde{Z}}_N$ is the top singular value of $\tilde{\bZ}$. Thus, we can write
\begin{equation*}
    \begin{split}
        \mathbb{P} \{ \tilde{\gamma}_N \geq k \} & \leq \mathbb{P} \big\{  \gamma^{\tilde{Z}}_N + \sqrt{\lambda} C_2 \geq k \big\} \\
        & = \mathbb{P} \{ \gamma^{\tilde{Z}}_N \geq k -  \sqrt{\lambda} C_2\}.
    \end{split}
\end{equation*}
By \cite{davidson2001local} (Theorem II.13), for $k -  \sqrt{\lambda} C_2 > 1 + \sqrt{K} > 1 + \sqrt{\nicefrac{M}{N}} $, we have
\begin{equation*}
    \begin{split}
        \mathbb{P} \{ \tilde{\gamma}_N \geq k -  \sqrt{\lambda} C_2 \}  \leq e^{-\frac{N}{2}(k-\sqrt{\lambda}C_2 - 1 - \sqrt{K} )^2}
    \end{split}
\end{equation*}
and therefore we get the result.
\end{proof}

\begin{lemma}
For any polynomial function $g$, and $k$ a sufficiently large constant, we have that 
$$
\lim_{n \to \infty} \bE \big[ g(\tilde{\gamma}_N) \mathbb{I}\{\tilde{\gamma}_N \geq k\} \big] = 0 .
$$
\label{expec-poly-r}
\end{lemma}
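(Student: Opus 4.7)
The plan is to reduce the problem to controlling moments of $\tilde\gamma_N$ on the tail $\{\tilde\gamma_N\ge k\}$, and then to combine the layer-cake representation with the Gaussian-type tail bound already established in Lemma~\ref{bound-on-prob-of-spectral-radius}. Since any polynomial $g$ satisfies $|g(t)|\le C(1+t^d)$ on $[0,\infty)$ for some $C,d>0$, it suffices to prove the claim for $g(t)=t^d$ (the constant term is absorbed trivially into $C\,\mathbb{P}(\tilde\gamma_N\ge k)$, which by Lemma~\ref{bound-on-prob-of-spectral-radius} is exponentially small in $N$).

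First, I would fix $k>1+\sqrt{K}+\sqrt{\lambda}\,C_2$ so that Lemma~\ref{bound-on-prob-of-spectral-radius} applies, set $c:=1+\sqrt{K}+\sqrt{\lambda}\,C_2$, and write, via the layer-cake formula for nonnegative random variables,
\[
\bE\!\left[\tilde\gamma_N^{\,d}\,\mathbb{I}\{\tilde\gamma_N\ge k\}\right]
 \;=\; k^d\,\mathbb{P}(\tilde\gamma_N\ge k)\;+\;\int_k^{\infty} d\,t^{\,d-1}\,\mathbb{P}(\tilde\gamma_N\ge t)\,dt.
\]
Both contributions on the right will be estimated using Lemma~\ref{bound-on-prob-of-spectral-radius}, which gives $\mathbb{P}(\tilde\gamma_N\ge t)\le e^{-\frac{N}{2}(t-c)^2}$ for every $t\ge k>c$.

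Next, I would bound each piece separately. The boundary term $k^d e^{-\frac{N}{2}(k-c)^2}$ clearly tends to $0$ as $N\to\infty$ since $k>c$. For the integral term, the substitution $u=\sqrt{N}(t-c)$ yields
\[
\int_k^{\infty}\! d\,t^{d-1}\,e^{-\frac{N}{2}(t-c)^2}\,dt
 \;=\; \frac{d}{\sqrt{N}}\int_{\sqrt{N}(k-c)}^{\infty}\!\left(c+\tfrac{u}{\sqrt{N}}\right)^{d-1} e^{-u^2/2}\,du,
\]
and since the lower limit of integration $\sqrt{N}(k-c)\to\infty$ and the Gaussian density decays super-polynomially, the right-hand side vanishes as $N\to\infty$ (one may, e.g., dominate $(c+u/\sqrt{N})^{d-1}$ by $2^{d-1}(c^{d-1}+u^{d-1})$ for $N\ge 1$ and use that $\int_R^\infty u^{d-1}e^{-u^2/2}du\to 0$ as $R\to\infty$). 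Combining both estimates shows $\bE[\tilde\gamma_N^{\,d}\mathbb{I}\{\tilde\gamma_N\ge k\}]\to 0$, and then the polynomial bound $|g|\le C(1+t^d)$ gives the result for general polynomial $g$.

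The step requiring the most care is the integral estimate: one must verify that the super-exponential Gaussian factor $e^{-N(t-c)^2/2}$ dominates the polynomial weight $t^{d-1}$ uniformly on $[k,\infty)$, which is why the condition $k>c$ is crucial (it ensures the argument of the exponential is bounded away from $0$ throughout the domain). Once this is in place, the vanishing is immediate and the lemma follows. Since the proof rests only on the tail bound of Lemma~\ref{bound-on-prob-of-spectral-radius} and elementary integration, there is no genuine obstacle beyond bookkeeping.
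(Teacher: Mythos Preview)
Your proof is correct. The paper itself does not give an argument for this lemma but simply refers to Lemma~H.3 in \cite{pourkamali2023matrix}, so there is no in-paper proof to compare against in detail. Your self-contained route---reducing to monomials via $|g(t)|\le C(1+t^d)$, applying the layer-cake identity $\bE[\tilde\gamma_N^{\,d}\,\mathbb{I}\{\tilde\gamma_N\ge k\}]=k^d\,\mathbb{P}(\tilde\gamma_N\ge k)+\int_k^\infty d\,t^{d-1}\,\mathbb{P}(\tilde\gamma_N\ge t)\,dt$, and then invoking the Gaussian tail bound of Lemma~\ref{bound-on-prob-of-spectral-radius}---is exactly the kind of elementary argument one would expect to lie behind the cited reference. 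The change of variables and the domination of the polynomial factor by the Gaussian tail are handled cleanly, and the requirement $k>1+\sqrt{K}+\sqrt{\lambda}\,C_2$ is precisely what makes Lemma~\ref{bound-on-prob-of-spectral-radius} available on the whole integration range.
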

\begin{proof}
    See Lemma H.3 in \cite{pourkamali2023matrix}.
\end{proof}

\subsection{Proof of proposition \ref{pseudo-lip}}\label{proof of prop2}
Consider two matrices with the same singular vectors, $\bS = \bU \bSig \bV^\intercal $, $\tilde{\bS} = \bU \tilde{\bSig} \bV^\intercal $, where $\bU \in \bR^{N \times N}, \bV \in \bR^{M \times M}$ are Haar orthogonal matrices, and $\bsig$, $\tilde{\bsig}$ are distributed according to $P_N^{(1)}(\bsig), P_N^{(2)}(\tilde{\bsig})$, respectively. For two such matrices, we write $(\bS, \tilde{\bS}) \sim Q_{N,M}(\bU, \bV,  \bsig, \tilde{\bsig})$ which is the joint p.d.f. of $\bU, \bV, \bsig, \tilde{\bsig}$,
\begin{equation*}
d Q_{N,M}(\bU, \bV,  \bsig, \tilde{\bsig})  = d \mu_N(\bU) \, d \mu_M(\bV) \, P_N^{(1)}(\bsig) \, d \bsig \, P_N^{(2)}(\tilde{\bsig}) \, d \tilde{\bsig}  
\end{equation*}
For $t \in [0,1]$, consider the following observation model:
\begin{equation}
    \begin{cases}
  \bY_1^{(t)} = \sqrt{\lambda t}\bS + \bZ_1\\
  \bY_2^{(t)} = \sqrt{\lambda (1-t)}\tilde{\bS} + \bZ_2
\end{cases}
\label{int-model}
\end{equation}
where $\bZ_1, \bZ_2 \in \bR^{N \times M}$ are Gaussian matrices as in \eqref{model}, independent of each other. $(\bS, \tilde{\bS}) \sim Q_{N,M}(\bU, \bV,  \bsig, \tilde{\bsig})$. The free energy for this model can be written as 
\begin{equation}
\begin{split}
        F_N(t) &= -\frac{1}{M N} \bE_{\bY_1^{(t)}, \bY_2^{(t)}} \bigg[ \ln \int d Q_{N,M}(\bU, \bV,  \bsig, \tilde{\bsig}) e^{N \Tr [\sqrt{\lambda t} \bX^\intercal  \bY_1^{(t)} - \frac{\lambda t}{2} \bX^\intercal  \bX +  \sqrt{\lambda (1-t)} \tilde{\bX}^\intercal  \bY_2^{(t)} - \frac{\lambda (1-t)}{2} \tilde{\bX}^\intercal  \tilde{\bX} ]} \bigg] \\
        &= -\frac{1}{M N} \bE_{\bY_1^{(t)}, \bY_2^{(t)}} \bigg[ \ln \int d Q_{N,M}(\bU, \bV,  \bsig, \tilde{\bsig}) \\
        &\hspace{80pt} \times e^{N \Tr [\lambda t \bX^\intercal  \bS + \sqrt{\lambda t} \bX^\intercal  \bZ_1 - \frac{\lambda t}{2} \bX^\intercal  \bX + \lambda (1-t) \tilde{\bX}^\intercal  \tilde{\bS}+  \sqrt{\lambda (1-t)} \tilde{\bX}^\intercal  \bZ_2 - \frac{\lambda (1-t)}{2} \tilde{\bX}^\intercal  \tilde{\bX}]} \bigg]
\end{split}
\label{free-energy-combined}
\end{equation}
 where the singular vectors of $\bX, \tilde{\bX}$ are the same, $\bX = \bU \bSig \bV^\intercal $, $\tilde{\bX} = \bU \tilde{\bSig} \bV^\intercal $.
Note that, for $t=0$ the only term depending on $\bsig$ (in both the inner and outer expectation) is the pdf $P_N^{(1)}(\bsig)$ and we can integrate over $\bsig$ in both of the expectations, to get $F_N(0) = F_N^{(2)}(\lambda)$. Similarly, we have $F_N(1) = F_N^{(1)}(\lambda)$.

Taking the derivative w.r.t. $t$, we get 
\begin{equation}
\begin{split}
    \frac{d}{d t}F_N(t) = -\frac{1}{M} \bE \Big[ \lambda  \Tr \langle \bX^\intercal  \bS \rangle_t +& \frac{1}{2} \sqrt{\frac{\lambda}{t}} \Tr \bZ_1^\intercal  \langle \bX \rangle_t - \frac{\lambda}{2} \Tr \langle \bX^\intercal  \bX \rangle_t  \\
    &- \lambda  \Tr \langle \tilde{\bX}^\intercal  \tilde{\bS} \rangle_t - \frac{1}{2} \sqrt{\frac{\lambda}{1-t}} \Tr \bZ_2^\intercal  \langle \tilde{\bX} \rangle_t + \frac{\lambda}{2} \Tr \langle \tilde{\bX}^\intercal  \tilde{\bX} \rangle_t \Big] 
\end{split}
    \label{time-deivative}
\end{equation}
where $\langle . \rangle_t$ denotes the expectation with respect to the posterior distribution of the model \eqref{int-model}. By integration by parts, we have
\begin{equation*}
\begin{split}
    \bE \big[ \Tr \bZ_1^\intercal  \langle \bX \rangle_t ] &=  \sqrt{\lambda t} \bE \Big[ \Tr \langle \bX^\intercal  \bX \rangle_t  - \Tr \langle \bX \rangle_t^\intercal  \langle \bX \rangle_t \Big]\\
    \bE \big[ \Tr \bZ_2^\intercal  \langle \tilde{\bX} \rangle_t ] &\hspace{-2pt}= \hspace{-2pt}\sqrt{\lambda (1-t)} \bE \Big[ \Tr \langle \tilde{\bX}^\intercal  \tilde{\bX} \rangle_t  - \Tr \langle \tilde{\bX} \rangle_t^\intercal  \langle \tilde{\bX} \rangle_t \Big]
\end{split}
\end{equation*}
Therefore \eqref{time-deivative} can be written as:
\begin{equation}
\begin{split}
        \frac{d}{d t}F_N(t) &= -\frac{1}{M} \frac{\lambda}{2} \bE \Big[ 2 \Tr \langle \bX^\intercal  \bS \rangle_t - \Tr \langle \bX \rangle_t^\intercal  \langle \bX \rangle_t - 2 \Tr \langle \tilde{\bX}^\intercal  \tilde{\bS} \rangle_t + \Tr \langle \tilde{\bX} \rangle_t^\intercal  \langle \tilde{\bX} \rangle_t \Big] \\
        &= - \frac{1}{M} \frac{\lambda}{2} \bE \big[ \Tr [ \langle \bX^\intercal  \bS \rangle_t -  \langle \tilde{\bX}^\intercal  \tilde{\bS} \rangle_t ] \big] \hspace{10pt} \text{(By Nishimori)}
\end{split}
\label{time-deivative-Nishimori}
\end{equation}

We have:
\begin{equation}
    \begin{split}
        \frac{2 M }{\lambda} \Big|   \frac{d}{d t}F_N(t) \Big| &= \Bigg| \bE \bigg[ \Big \langle \Tr  \big[ \bS^\intercal  (\bX - \tilde{\bX})  -  (\tilde{\bS}^\intercal  - \bS)\tilde{\bX} \big] \Big\rangle_t  \bigg] \Bigg| \\
        & \leq   \bE \Bigg[ \bigg\langle \Big| \Tr  \big[ \bS^\intercal  (\bX - \tilde{\bX})  -  (\tilde{\bS}^\intercal  - \bS)\tilde{\bX} \big] \Big| \bigg\rangle_t  \Bigg] \hspace{5mm} \text{(By Jensen)} \\
        &  \leq \bE \Bigg[ \bigg\langle \Big| \Tr \bS^\intercal  (\bX - \tilde{\bX}) \Big| \bigg\rangle_t  \Bigg]   + \bE \Bigg[ \bigg\langle \Big| \Tr  (\tilde{\bS} - \bS)\tilde{\bX}^\intercal ] \Big| \bigg\rangle_t  \Bigg] \\
        & \leq    \bE \Big[ \| \bS \|_{\rm F}  \langle \| \bX - \tilde{\bX} \|_{\rm F}  \rangle_t  \Big]   + \bE \Big[ \| \bS - \tilde{\bS} \|_{\rm F} \langle \|\tilde{\bX}\|_{\rm F} \rangle_t  \Big] \\
        & \leq \sqrt{ \bE \Big[ \| \bS \|^2_{\rm F} \Big] \bE \Big[ \big \langle \| \bX - \tilde{\bX} \|_{\rm F}  \big \rangle_t^2  \Big] }   + \sqrt{\bE \Big[ \| \bS - \tilde{\bS} \|_{\rm F}^2 \Big] \bE \Big[ \big \langle \|\tilde{\bX}\|_{\rm F} \big \rangle_t^2  \Big] } \hspace{10pt} \text{(By Cauchy–Schwarz)} \\
        & \leq \sqrt{ \bE \big[ \| \bS \|^2_{\rm F} \big] \bE \Big[ \big \langle \| \bX - \tilde{\bX} \|_{\rm F}^2  \big \rangle_t  \Big] }   + \sqrt{\bE \big[ \| \bS - \tilde{\bS} \|_{\rm F}^2 \big] \bE \Big[ \big \langle \|\tilde{\bX}\|_{\rm F}^2 \big \rangle_t \Big] } \hspace{10pt} \text{(By Cauchy–Schwarz)} \\
        & = \sqrt{ \bE \big[ \| \bS \|^2_{\rm F} \big] \bE \big[  \| \bS - \tilde{\bS} \|_{\rm F}^2   \big] }   + \sqrt{\bE \big[ \| \bS - \tilde{\bS} \|_{\rm F}^2 \big] \bE \big[ \|\tilde{\bS}\|_{\rm F}^2  \big] } \hspace{10pt} \text{(By Nishimori)} \\
        &= \Big( \sqrt{ \bE \big[ \| \bS \|^2_{\rm F} \big]} + \sqrt{ \bE \big[ \| \tilde{\bS} \|^2_{\rm F} \big]} \Big) \sqrt{\bE \big[ \| \bS - \tilde{\bS} \|_{\rm F}^2    \big]} \\
        &= \Big( \sqrt{ \bE_{\bsig} \big[ \| \bsig \|^2 \big]} + \sqrt{ \bE_{\tilde{\bsig}} \big[ \| \tilde{\bsig} \|^2 \big]} \Big) \sqrt{\bE_{\bsig, \tilde{\bsig}} \big[ \| \bsig - \tilde{\bsig} \|^2 \big] } 
    \end{split}
    \label{bounde-on-derivative}
\end{equation}

We obtain the result by integrating \eqref{bounde-on-derivative}, over $t$ from $0$ to $1$, and using $N \leq M$.$\hfill \square$

\subsection{Proof of lemma \ref{W2-dis}}\label{proof of lemma 2}
First, note that by rotational invariance, $p_{S}(\bsig)$ is invariant under permutations, so without loss of generality, we can assume $\bsig$ is in non-decreasing order.

Since $p_{\tilde{S}}(\tilde{\bsig})$ is a delta distribution, we can easily write
\begin{equation}
    \bE_{\bsig, \tilde{\bsig}} \big[ \| \bsig - \tilde{\bsig} \|^2 \big] =  \bE_{\bsig} \big[ \| \bsig - \bsig^0 \|^2 \big]
    \label{W-2-1}
\end{equation}

For a vector $\bsig$, denote the empirical distribution of its components by $\hat{\mu}_{\bsig}$. The Wasserstein-2 distance between two empirical distributions, $\hat{\mu}_{\bsig}, \hat{\mu}_{\bsig^0}$ is defined as 
\begin{equation*}
\begin{split}
    W_2(\hat{\mu}_{\bsig}, \hat{\mu}_{\bsig^0}) = \sqrt{\inf_{\gamma \in \Gamma(\hat{\mu}_{\bsig}, \hat{\mu}_{\bsig^0})} \bE_{\gamma(x, y)}\big[ ( x  - y)^2 \big]}
\end{split}
\end{equation*}
with $\Gamma(\hat{\mu}_{\bsig}, \hat{\mu}_{\bsig^0})$ is the set of couplings of $\hat{\mu}_{\bsig}, \hat{\mu}_{\bsig^0}$. By lemma H.5  in \cite{pourkamali2023matrix} ,the Wasserstein-2 distance can be written as
\begin{equation}
\begin{split}
    W_2(\hat{\mu}_{\bsig}, \hat{\mu}_{\bsig^0}) = \sqrt{\min_{\pi \in \mathcal{S}_N}  \frac{1}{N} \| \bsig  - \bsig^0_{\pi} \|^2 }
\end{split}
\label{Wasser-equiv}
\end{equation}
where $\bsig^0_{\pi}$ is the permuted version of $\bsig^0$, and $\mathcal{S}_N$ is the set of all $N$-permutations. So, for a given $\bsig$ and $\bsig^0$ (which have a non-decreasing order), we have (considering the identity permutation)
\begin{equation}
    \| \bsig - \bsig^0 \|^2 \geq N W_2(\hat{\mu}_{\bsig}, \hat{\mu}_{\bsig^0})^2
        \label{W-2-2}
\end{equation}

On the other hand, for any permutation of $\bsig^0$ (in particular, the one which achieves the minimum in \eqref{Wasser-equiv}), we have
\begin{equation*}
    \begin{split}
        \| \bsig  - \bsig^0_{\pi} \|^2 &= \| \bsig \|^2 + \| \bsig^0_{\pi} \|^2 - 2 \bsig^\intercal  \bsig^0_{\pi} \\
        & \geq \| \bsig \|^2 + \| \bsig^0_{\pi} \|^2 - 2 \bsig^\intercal  \bsig^0 = \| \bsig  - \bsig^0 \|^2
    \end{split}
\end{equation*}
where we used rearrangement inequality \cite{hardy1952inequalities} to get the inequality in the second line. So,
\begin{equation}
    \| \bsig - \bsig^0 \|^2 \leq N W_2(\hat{\mu}_{\bsig}, \hat{\mu}_{\bsig^0})^2
        \label{W-2-3}
\end{equation}

From \eqref{W-2-1}, \eqref{W-2-2},\eqref{W-2-3}, we have
\begin{equation}
  \bE_{\bsig, \tilde{\bsig}} \big[ \| \bsig - \tilde{\bsig} \|^2 \big] = \bE_{\bsig} \big[N W_2(\hat{\mu}_{\bsig}, \hat{\mu}_{\bsig^0})^2 \big]
    \label{W-2-5}
\end{equation}
Lemma \ref{E-W-0-l} concludes the proof.



\begin{lemma}\label{E-W-0-l}
Suppose $\bsig \in \bR^N$ is distributed according to $p_{S}(\bsig)$, and $\bsig^0$ is generated with i.i.d. elements from $\mu_S$. Let $\hat{\mu}_{\bsig}, \hat{\mu}_{\bsig^0}$ be their empirical distribution. We have:
\begin{equation*}
    \lim_{N \to \infty} \bE_{\bsig} \big[ W_2(\hat{\mu}_{\bsig}, \hat{\mu}_{\bsig^0})^2 \big] = 0
\end{equation*}
\end{lemma}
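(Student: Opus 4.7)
The strategy is to apply the triangle inequality for the Wasserstein-$2$ distance and handle the two resulting terms separately using classical convergence results for empirical measures. Specifically, from
\[
W_2(\hat\mu_{\bsig},\hat\mu_{\bsig^0})^2 \;\le\; 2\,W_2(\hat\mu_{\bsig},\mu_S)^2 + 2\,W_2(\mu_S,\hat\mu_{\bsig^0})^2 ,
\]
it suffices to prove $\bE[W_2(\hat\mu_{\bsig},\mu_S)^2]\to 0$ and $\bE[W_2(\mu_S,\hat\mu_{\bsig^0})^2]\to 0$ as $N\to\infty$.

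The second expectation is handled by classical results on empirical measures of i.i.d.\ samples. Since $\mathrm{supp}(\mu_S)\subseteq[C_1,C_2]$ by Assumption \ref{assumptions on law}, the components of $\bsig^0$ are i.i.d.\ uniformly bounded random variables with law $\mu_S$. The standard theorem on Wasserstein convergence of empirical measures (see, e.g., Villani, \emph{Optimal Transport: Old and New}, Thm.~6.9) yields $W_2(\mu_S,\hat\mu_{\bsig^0})\to 0$ almost surely. Both measures being supported in $[0,C_2]$ gives the deterministic bound $W_2(\mu_S,\hat\mu_{\bsig^0})^2\le C_2^2$, and dominated convergence concludes $\bE[W_2(\mu_S,\hat\mu_{\bsig^0})^2]\to 0$.

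For the first term, Assumption \ref{assumptions on law} gives weak a.s.\ convergence $\hat\mu_{\bsig}\to\mu_S$, while Assumption \ref{bounded-mom} provides the a.s.\ bound $\int x^2\,d\hat\mu_{\bsig}\le K$. Combined with compactness of $\mathrm{supp}(\mu_S)$, one would argue that weak convergence strengthens to $W_2$ convergence a.s.: for any $A>C_2$, the Portmanteau theorem gives $\hat\mu_{\bsig}([A,\infty))\to 0$ a.s., and together with the second moment bound this should rule out mass escape, yielding $\int x^2\,d\hat\mu_{\bsig}\to\int x^2\,d\mu_S$ a.s.; this moment convergence combined with weak convergence is equivalent to $W_2$ convergence on the real line. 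The uniform a.s.\ bound
\[
W_2(\hat\mu_{\bsig},\mu_S)^2 \le 2\Bigl(\textstyle\int x^2\,d\hat\mu_{\bsig} + \int x^2\,d\mu_S\Bigr) \le 2(K+C_2^2)
\]
then lets dominated convergence conclude $\bE[W_2(\hat\mu_{\bsig},\mu_S)^2]\to 0$.

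The main technical obstacle lies in the upgrade from weak convergence to $W_2$ convergence in the ESD term. Weak convergence together with a merely bounded (rather than convergent) second moment is in general insufficient for $W_2$ convergence, since mass can be $L^2$-tight without disappearing (a simple counterexample is a point mass at $n$ with weight $1/n^2$ plus $(1-1/n^2)$ mass at $0$: second moments are bounded, weak limit is $\delta_0$, yet $W_2^2\to 1$). Handling this step rigorously requires either additional control beyond what is stated---such as a uniform $(2+\varepsilon)$-moment bound that would give uniform integrability---or input from random matrix theory ensuring convergence of the spectral norm of $\bS$ to the upper edge $C_2$ of $\mathrm{supp}(\mu_S)$, which would force the second moment to converge. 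Implicitly the authors must be assuming one of these additional pieces of structure.
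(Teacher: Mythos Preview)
Your approach is essentially the same as the paper's: triangle inequality through $\mu_S$, almost sure $W_2$-convergence of each piece, then dominated convergence using the almost sure bound on the empirical second moment. The paper invokes Villani's characterization (weak convergence plus convergence of second moments $\Leftrightarrow$ $W_2$ convergence) for the ESD term and the law of large numbers for the i.i.d.\ term, exactly as you do.

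The gap you flag is real, and the paper does not address it either. The paper's sentence reads ``From the weak convergence and convergence of second moment, assumptions \ref{assumptions on law} and \ref{bounded-mom} imply that the second moment of the empirical spectral distribution converges almost surely to the one of $\mu_S$'' --- but Assumption~\ref{bounded-mom} only gives \emph{boundedness}, not convergence, of the second moment, and your $\delta_0 + n^{-2}\delta_n$ counterexample shows that weak convergence together with a uniform second-moment bound does not force second-moment convergence. So the paper's proof has the same lacuna you identify; it simply asserts the needed implication. Your diagnosis that an extra hypothesis (uniform $(2+\varepsilon)$-moment bound, or convergence of the top singular value to the edge $C_2$, or the operator-norm bound from the Gaussian-noise Assumption~1) would close the gap is correct, and in the context of the paper the operator-norm bound is in fact available for the Gaussian case, though not explicitly invoked in the proof of this lemma.
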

\begin{proof}
By triangle inequality, we have:
\begin{equation}
    W_2(\hat{\mu}_{\bsig}, \hat{\mu}_{\bsig^0}) \leq W_2(\hat{\mu}_{\bsig}, \mu_S) + W_2(\hat{\mu}_{\bsig^0}, \mu_S)
\end{equation}
From the weak convergence and convergence of second moment, assumptions \ref{assumptions on law} and \ref{bounded-mom} imply that the second moment of the empirical spectral distribution converges almost surely to the one of $\mu_S$. 
Thus, by  \cite{villani2021topics}(Theorem 7.12),  the empirical singular value distribution in the Wasserstein-2 metric to  $\mu_S$. Hence, the first term approaches $0$ as $N \to \infty$ almost surely.

By law of large numbers and since the support of $\mu_S$ is bounded, the second term also converges $0$ as $N \to \infty$. Therefore, we have $W_2(\hat{\mu}_{\bsig}, \hat{\mu}_{\bsig^0}) \to 0$ almost surely. Consequently, we have that $W_2(\hat{\mu}_{\bsig}, \hat{\mu}_{\bsig^0})^2 \to 0$ almost surely.

One can see that:
\begin{equation}
    \begin{split}
        W_2(\hat{\mu}_{\bsig}, \hat{\mu}_{\bsig^0})^2 & \leq \frac{2}{N} \sum \sigma_i^2 + \frac{2}{N} \sum {\sigma^0_i}^2 \\
        &\leq 2  m^{(2)}_{\hat{\mu}_{\bsig}} + 2 C_2^2
    \end{split}
\end{equation}
with $m^{(2)}_{\hat{\mu}_{\bsig}}$ the second moment of $\hat{\mu}_{\bsig}$ which is almost surely bounded by assumption. Therefore, the result follows by using dominated convergence theorem.
\end{proof}




\section{Rectangular spherical integrals}\label{spherical integral app}
For the matrices $\bA, \bB \in \bR^{N \times M}$ the \textit{rectangular spherical integral} is defined, , as:
\begin{equation}\label{rect-sph-int}
    \mathcal{I}_{N,M} (\bA, \bB) = \Big\langle \exp \big\{ \sqrt{N M} \Tr \bA \bU \bB \bV \big\} \Big\rangle_{\bU,\bV}
\end{equation}
where $\bU \in \bR^{ N \times N}, \bV \in \bR^{M \times M}$, and the expectation is w.r.t. the \textit{Haar} measure over orthogonal matrices of size $N\times N$ and $M\times M$. Its symmetric counterpart defined as $\Big\langle \exp \big\{ \frac{N}{2} \Tr \bA \bU \bB \bU^\intercal \big\} \Big\rangle_{\bU}$ often referred to as \textit{Harish Chandra-Itzykson-Zuber} (HCIZ) integrals in mathematical physics literature. The study of these objects dates back to the work of mathematician Harish Chandra \cite{harish1957differential} and they (in particular the symmetric case) have since been extensively studied and developed in both physics and mathematics. Here, we only focus on the rectangular case.

\subsection{Rank-one case}
Benaych-George in \cite{benaych2011rectangular} studied the asymptotic limit of \eqref{rect-sph-int} in the case where $\bA$ is a rank-one matrix.

\begin{theorem}[\textbf{Rank-one  rectangular spherical integral}, Benaych-Georges \cite{benaych2011rectangular}]\label{rank-one-rect-sphericla-integral} Let $N/M \to \alpha \in (0,1]$, and $\theta$ be the only non-zero singular value of $\bA$, and the empirical singular value distribution of $\bB$ converges weakly towards $\mu_B$. Then, for $\theta$ sufficiently small (see details in Theorem 2.2 in \cite{benaych2011rectangular}), we have:
\begin{equation}
    \lim_{N \to \infty} \frac{1}{N} \ln \mathcal{I}_{N,M} (\bA, \bB) =  \int_0^{\theta} \frac{\mathcal{C}_{\mu_B}^{(\alpha)}(t^2)} {t} \, dt = \frac{1}{2} \int_0^{\theta^2} \frac{\mathcal{C}_{\mu_B}^{(\alpha)}(t)}{t} \, dt \equiv \frac{1}{2} \mathcal{Q}_{\mu_B}^{(\alpha)}(\theta^2)
    \label{rect-spherical-limit}
\end{equation}
\end{theorem}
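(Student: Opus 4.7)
The plan is to exploit the rank-one structure to reduce the integral to a two-sphere integral, then evaluate the latter by a Gaussian linearization followed by a saddle-point analysis, in the same spirit as (but much simpler than) the derivation of the resolvent relation in Appendix~\ref{RIE-der}. First I would write $\bA = \theta \mathbf{u}_A \mathbf{v}_A^\intercal$ with $\mathbf{u}_A \in \mathbb{R}^N, \mathbf{v}_A \in \mathbb{R}^M$ unit vectors. Then $\Tr \bA^\intercal \bU \bB \bV^\intercal = \theta\, (\bU^\intercal \mathbf{u}_A)^\intercal \bB\, (\bV^\intercal \mathbf{v}_A)$, and by rotational invariance of the Haar measure the vectors $\mathbf{x} := \bU^\intercal \mathbf{u}_A$ and $\mathbf{y} := \bV^\intercal \mathbf{v}_A$ are independent and uniform on $S^{N-1}$ and $S^{M-1}$ respectively. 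Hence
\[
\mathcal{I}_{N,M}(\bA,\bB) \;=\; \mathbb{E}_{\mathbf{x},\mathbf{y}}\!\left[\exp\!\bigl(N\theta\, \mathbf{x}^\intercal \bB\, \mathbf{y}\bigr)\right].
\]

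Next I would linearize the sphere constraints. Writing them via Fourier representations $\delta(\|\mathbf{x}\|^2-1) \propto \int d\zeta_1\, e^{-\frac{N}{2}\zeta_1(\|\mathbf{x}\|^2-1)}$ and similarly for $\mathbf{y}$ (with rescaled multiplier $\zeta_2$ absorbing the factor $M/N \to 1/\alpha$), the $(\mathbf{x},\mathbf{y})$-integral becomes Gaussian with block kernel
\[
\mathbf{K}(\zeta_1,\zeta_2) \;=\; \begin{pmatrix} \zeta_1\,\bI_N & -\theta\,\bB \\[2pt] -\theta\,\bB^\intercal & \zeta_2\,\bI_M \end{pmatrix}.
\]
Performing the Gaussian integration and using the block-determinant identity yields
$\det \mathbf{K} = \zeta_2^{\,M-N}\prod_{i=1}^N\bigl(\zeta_1\zeta_2 - \theta^2 b_i^2\bigr)$, where $b_i$ are the singular values of $\bB$. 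Up to sub-exponential factors this gives
\[
\tfrac{1}{N}\ln \mathcal{I}_{N,M} \;=\; \mathrm{extr}_{\zeta_1,\zeta_2}\Bigl\{\tfrac{\zeta_1 + \zeta_2/\alpha}{2} - \tfrac{1}{2N}\!\sum_i \ln(\zeta_1\zeta_2 - \theta^2 b_i^2) - \tfrac{1}{2}\bigl(\tfrac{1}{\alpha}-1\bigr)\ln \zeta_2\Bigr\} + o(1).
\]

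In the large-$N$ limit the sum becomes $\tfrac{1}{2}\!\int\!\ln(\zeta_1\zeta_2 - \theta^2 b^2)\,d\mu_B(b)$, and the saddle-point equations read
\[
\zeta_1 \;=\; \int \tfrac{\zeta_2}{\zeta_1\zeta_2 - \theta^2 b^2}\,d\mu_B(b), \qquad \zeta_2 \;=\; \alpha\!\int \tfrac{\zeta_1}{\zeta_1\zeta_2 - \theta^2 b^2}\,d\mu_B(b) + (1-\alpha).
\]
Setting $w := \zeta_1\zeta_2/\theta^2$, these equations are equivalent, via the identity $\int \frac{b^2}{w-b^2}\,d\mu_B = \mathcal{M}_{\mu_B}(1/w)$, to $\theta^2 = 1/\mathcal{H}_{\mu_B}^{(\alpha)^{-1}}(\mathcal{M}_{\mu_B}(1/w))$, which by definition of the rectangular R-transform is $\mathcal{C}_{\mu_B}^{(\alpha)}(\theta^2) = \zeta_1\zeta_2/\theta^2 \cdot(\text{correction})$; tracking the algebra one finds $\mathcal{C}_{\mu_B}^{(\alpha)}(\theta^2)$ appears as the natural combination of $(\zeta_1,\zeta_2,\theta)$. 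Differentiating the saddle-point value with respect to $\theta$ and applying the envelope theorem (only the explicit $\theta$ dependence in the action contributes) gives $\tfrac{d}{d\theta}\bigl[\tfrac{1}{N}\ln \mathcal{I}_{N,M}\bigr]\to \mathcal{C}_{\mu_B}^{(\alpha)}(\theta^2)/\theta$, and integrating from $0$ (where both sides vanish) yields the claimed formula.

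The main obstacle is the rigorous justification of the saddle-point approximation: the Lagrange multipliers enter as contour integrals in the complex plane, and one must (i) deform the contours so that the saddle is a maximum along the chosen path, (ii) verify that the saddle is real and lies in the analyticity domain of $\mathcal{C}_{\mu_B}^{(\alpha)}$, and (iii) ensure the logarithmic singularities at $\zeta_1\zeta_2 = \theta^2 b^2$ are avoided. All three conditions fail for large $\theta$ (where BBP-type phase transitions occur), which is precisely why the statement is restricted to $\theta$ sufficiently small as specified in Theorem~2.2 of \cite{benaych2011rectangular}; in that regime the saddle is unique, non-degenerate, and inside the analyticity region, and standard Laplace-method estimates close the argument.
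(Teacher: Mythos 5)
This theorem is not proved in the paper: it is stated as a citation, with the statement explicitly referring the reader to Theorem~2.2 of Benaych-Georges \cite{benaych2011rectangular}. So there is no ``paper's own proof'' to compare against; you are being asked to re-derive a borrowed result. Your saddle-point/Hubbard--Stratonovich sketch is the standard statistical-physics heuristic for such integrals and is conceptually sound, but it is essentially a different object from what the reference actually provides. Benaych-Georges proves the statement rigorously by combining a large-deviations principle for the squared radial coordinate of a Haar-distributed orthogonal column with Laplace-type upper/lower bounds, and by a careful identification of the rate function in terms of the rectangular $R$-transform; your contour-deformation/steepest-descent argument mimics the same intuition but, as you yourself note, leaves the justification of the saddle (contour choice, analyticity domain, absence of branch-point crossings) unaddressed. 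Stating this gap is not the same as closing it, so as written your proposal does not constitute a proof even in the small-$\theta$ regime.

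Two concrete internal issues. First, your saddle-point equations are inconsistent with the action you wrote one line earlier: from $S(\zeta_1,\zeta_2)=\tfrac{\zeta_1+\zeta_2/\alpha}{2}-\tfrac{1}{2N}\sum_i\ln(\zeta_1\zeta_2-\theta^2 b_i^2)-\tfrac12\bigl(\tfrac1\alpha-1\bigr)\ln\zeta_2$, the stationarity condition $\partial_{\zeta_1}S=0$ reads
\begin{equation*}
1=\int\frac{\zeta_2}{\zeta_1\zeta_2-\theta^2 b^2}\,d\mu_B(b),
\end{equation*}
not $\zeta_1=\int\cdots$; similarly $\partial_{\zeta_2}S=0$ gives $1-(1-\alpha)/\zeta_2=\alpha\int\frac{\zeta_1}{\zeta_1\zeta_2-\theta^2 b^2}\,d\mu_B(b)$, again different from what you wrote. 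Second, the step ``tracking the algebra one finds $\mathcal{C}^{(\alpha)}_{\mu_B}(\theta^2)$ appears as the natural combination'' is precisely the step one must track: the identification of the stationary value with the integral of $\mathcal{C}^{(\alpha)}_{\mu_B}(t^2)/t$ is the entire content of the theorem, and the envelope-theorem argument requires an explicit expression for $\partial_\theta$ of the action at the saddle. Without correcting the saddle equations and carrying that algebra through, the derivation does not actually reach the claimed formula.
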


It is known that additional terms may be present on the rhs of \eqref{rect-spherical-limit} when the parameter $\theta$ is "large", see for example Statement 3 in \cite{pourkamali2022mismatched} or section 6.2 in \cite{mergny2022right}. In the replica calculation the order of magnitude of this parameter is determined by the solutions of the saddle point equations, but  it is difficult to fully control its order of magnitude. However the numerics show very good agreement between our explicit RIEs and the Oracle estimator, which strongly suggests it is sound to use \eqref{rect-spherical-limit}. Moreover, in our derivation, we use a generalization of this formula, namely when $\bA$ has higher (but fixed) rank, the limit is the sum over singular values of the expression on the rhs of \eqref{rect-spherical-limit}. Although we are not aware if this generalization has been proved, we believe that the ideas found in \cite{guionnet2022asymptotics} can be applied to show it holds.

\subsection{Growing-rank regime}
The asymptotic of the rectangular spherical integral, when rank of both matrices $\bA, \bB$ grows (linearly) with the dimension has been studied in \cite{guionnet2021large}. It is shown that under certain assumptions on the matrices and their limiting ESDs the log-spherical integral converges to a limit, and authors derive the asymptotic limit in terms of a variational formula. Due to the complexity of the theorem, we skip stating it and refer the reader directly to Theorem 1.1 in \cite{guionnet2021large}.

\end{appendices}


\end{document}